\newcommand{\setf}{\mathrm{set}}
\newcommand{\TO}{\tilde{O}}
\newcommand{\range}[2]{{\sc Range (#2, #1)}}
\newcommand{\rangeqint}[2]{{\sc Range (#2, #1)}}
\newcommand{\problemname}[2]{{\sc 2D Grid \rangeqint{#1}{#2}}}
\newcommand{\pointwise}[3]{{\sc #3D Range (#2, #1)}}
\newcommand{\setsum}{\problemname{$\setf$}{$+$}}
\newcommand{\ind}{1}
\newcommand{\numupdates}{\ensuremath{n_{\textnormal u}}}
\newcommand{\numqueries}{\ensuremath{n_{\textnormal q}}}
\newcommand{\jl}[1]{\todo[fancyline]{\textbf{JL:} #1}}
\newtheorem{conjecture}[theorem]{Conjecture}
\newtheorem{openproblem}[theorem]{Open Problem}
\title{Algorithms and Hardness for Multidimensional Range Updates and Queries}
\author{Joshua Lau}{Sydney, Australia}{joshua.cs.lau@gmail.com}{https://orcid.org/0000-0001-7490-633X}{}
\author{Angus Ritossa}{UNSW Sydney, Australia}{a.ritossa@unsw.edu.au}{https://orcid.org/0000-0002-9807-773X}{}
\authorrunning{J. Lau and A. Ritossa} 
\keywords{Orthogonal range, Range updates, Online and Dynamic Data Structures, Fine-grained complexity, Cycle counting} 
\begin{document}

\maketitle

\begin{abstract}
    Traditional orthogonal range problems allow queries over a static set of points, each with some value.
    Dynamic variants allow points to be added or removed, one at a time.
    To support more powerful updates, we introduce the {\sc Grid Range} class of data structure problems over arbitrarily large integer arrays in one or more dimensions.
    These problems allow range updates (such as filling all points in a range with a constant) and queries (such as finding the sum or maximum of values in a range). 
    In this work, we consider these operations along with updates that replace each point in a range with the minimum, maximum, or sum of its existing value, and a constant.
    In one dimension, it is known that segment trees can be leveraged to facilitate any $n$ of these operations in $\TO(n)$ time overall.
    Other than a few specific cases, until now, higher dimensional variants have been largely unexplored.

    Despite their tightly-knit complexity in one dimension, we show that variants induced by \emph{subsets} of these operations exhibit polynomial separation in two dimensions.
    In particular, no truly subquadratic time algorithm can support certain pairs of these updates simultaneously without falsifying several popular conjectures.
    On the positive side, we show that truly subquadratic algorithms can be obtained for variants induced by other subsets.

    We provide two general approaches to designing such algorithms that can be generalised to online and higher dimensional settings.
    First, we give almost-tight $\TO(n^{3/2})$ time algorithms for single-update variants where the update and query operations meet a set of natural conditions.
    Second, for other variants, we provide a general framework for reducing to instances with a special geometry.
    Using this, we show that $O(m^{3/2-\epsilon})$ time algorithms for counting paths and walks of length 2 and 3 between vertex pairs in sparse graphs imply truly subquadratic data structures for certain variants; to this end, we give an $\TO(m^{(4\omega-1)/(2\omega+1)}) = O(m^{1.478})$ time algorithm for counting simple 3-paths between vertex pairs.
\end{abstract}

\newcommand{\probs}{\ensuremath{\mathfrak{B}}}

\section{Introduction}

Orthogonal range query problems are ubiquitous across various fields of Computer Science.
In the simplest of these problems, a data set is modelled as a set of points in $\mathbb{Z}^d$, and the task is to design a data structure that can efficiently answer queries which ask: how many points lie within the (axis-aligned) orthogonal range $[l_1, r_1] \times \ldots \times [l_d, r_d]$?
One can extend this definition by assigning a value to each point in the input, and having queries ask instead for some aggregate (e.g. the maximum value) of the values of the points within the query range.
This has been studied extensively
\cite{Bentley1975, Chazelle1988, Agarwal2004, Chan2011, Farzan2012, He2014, Okajima2015, Chan2017a, Afshani2019, Nekrich2020}, along with models which ask to report all points in the range
\cite{Afshani2009, Afshani2010, Nekrich2020}.
\emph{Dynamic} models, where a single point may be inserted or removed in a single operation, have also been studied
\cite{Willard1985, Chazelle1988, Poon2003, Chan2017, Chan2019}, corresponding to the addition or deletion of a single record.
Queries may then be interspersed between these update operations, providing insight into the data set as it changes over time.
Modelling data sets in this way has proven useful for Online Analytical Processing (OLAP) \cite{Gray1996, Poon2003} of databases.

In practice, however, one may wish to employ more powerful updates.
In this work, we examine data structures which support updating the values of all points that fall within a range in addition to range queries.
This models updates to the records in a database table whose numerical field values fall within designated ranges.
For instance, this could be giving all employees who have been employed between 5 to 10 years, and have KPIs between 80 and 90 an ``A'' rating. 
We formalise this as follows.

\begin{definition}
    Let $d$ be a positive integer constant, $(\mathbb{Z}, q)$ be a commutative semigroup\footnote{A \emph{semigroup} $(S, +)$ is a set $S$ equipped with an associative binary operator $+ : S \times S \rightarrow S$. A semigroup is \emph{commutative} if $x + y = y + x$ for all $x, y \in S$.}
    and $U = \{u_j\}$ be a set of integer functions.
    The {\sc ($d$D) Range $(q, U)$} problem gives as input a set $P = \{x_1, \ldots, x_{p}\}$ of $p$ points in $\mathbb{Z}^d$.
    A corresponding integer \emph{value} $v_i$ is also given for each point $x_i$, each initially 0.
    It requests a series of \emph{operations}, each taking one of the following forms:
    \begin{itemize}
        \item $\text{update}_j((l_1, r_1), \ldots, (l_d, r_d))$: for each $x_i \in [l_1, r_1] \times \ldots \times [l_d, r_d]$, set $v_i := u_j(v_i)$
        \item $\text{query}((l_1, r_1), \ldots, (l_d, r_d))$: compute and return $q(P')$, where $P'$ is the multiset $\{v_i : x_i \in [l_1, r_1] \times \ldots \times [l_d, r_d]\}$.
    \end{itemize}
    Importantly, the operations may include updates of different types, and operations may occur in any order.
    All operations are given \textbf{online} and no information is known about them before the preceding operations are performed.
    There are $n = \numupdates + \numqueries$ operations in all, with $\numupdates$ and $\numqueries$ of the first and second forms, respectively.
\end{definition}

\noindent
We are most interested in the case where update functions take the form $\ast_c(x) = x \ast c$, where $\ast$ is a binary operation over the integers, and $c$ is an operation-specific constant.
Through a slight abuse of notation, we also use $\ast$ to denote the set of all functions of the form $\ast_c$.
We write $\ast$ as a member of $U$ as shorthand for $\ast_c$ being a member of $U$, for all $c$.
For example, {\sc Range $(+, \{+, \max\})$} allows updates of the form $+_c$ (increasing values by $c$), $\max_c$ (replacing values less than $c$ with $c$), and queries which ask for the sum of values in a range.
For a single update function or binary operation $u$, we write {\sc Range $(q, u)$} for short.

The {\sc Grid Range} variants are those whose point set is $P = [s]^d$; $P$ is not given explicitly, but rather, is described in the input by the positive integer $s$.
Hence, the size of input (and thus the running time) can be measured as a function of the number $n$ of operations which occur, rather than the size of $P$.
{\sc Grid Range} problems will form our primary focus, but we first describe the context surrounding {\sc Range} problems as a whole.

\subsection{Motivation}

In one dimension, {\sc Range} problems can be viewed as operating over an array.
In this case, balanced binary trees (such as binary search trees or \emph{segment trees} \cite{Bentley1977}) over the array have been effective tools in solving {\sc Range} problems.
Such trees allow any range of the array to be canonically expressed as the disjoint union of $O(\log p)$ subtrees of the tree.

When the functions in $U$ are closed under composition and each distributes\footnote{An integer function $u$ distributes over $q$ if $u(q(a, b)) = q(u(a), u(b))$} over $q$, the folklore technique of \emph{lazy propagation} over a binary tree solves {\sc 1D \range{$U$}{$q$}} in $O(\log p)$ time in the worst case, per operation.
Lazy propagation can be extended to support several types of updates: it can be applied to solve {\sc 1D \range{$\{+, \min, \setf, \max\}$}{$\max$}} in worst-case $O(\log p)$ time per operation, where $\setf$ is the binary operation that returns its second operand.
These techniques are described in more detail in \autoref{sec:prelim}.
Ji \cite{Ji2016} considered the {\sc 1D \range{$\max$}{$+$}} problem, where the update operation does not distribute over the query operation, by introducing a technique known as a Ji-Driver Tree (informally ``Segment Tree Beats''), generalising lazy propagation.
Using this technique, he showed that {\sc 1D \range{$\{+, \min, \setf, \max\}$}{$+$}} can be solved in amortised $O(\log^2 p)$ time per operation.

Let \probs{} be the set of {\sc \range{$U$}{$q$}} problems with $q \in \{+, \max\}$ and $U \subseteq \{+, \min,$ $\setf, \max\}$.
Motivated by the one dimensional results for problems in \probs, we seek general techniques addressing {\sc Range} problems in two or more dimensions.
This has been asked as an open question in the competitive programming community \cite{Korhonen2019}.

The {\sc Range} problem on a $p$ element array can be generalised to multiple dimensions in two natural ways: either a set of $p$ points in $\mathbb{Z}^d$ is provided explicitly, or the point set is considered to be $[s]^d$.
In the former case, one dimensional techniques can be generalised to higher dimensions with the aid of an orthogonal space partition tree (hereafter simply \emph{partition tree}), such as a $k$d-tree \cite{Bentley1975}.

\begin{restatable}{lemma}{kdtree}
    \label{kdtree}
    If {\sc 1D \range{$U$}{$q$}} on $p$ points can be solved in time $T(p)$ per operation and $q$ is computable in $O(1)$ time, then \pointwise{$U$}{$q$}{$d$} can be solved in time $O(p^{1-1/d} T(p))$ per operation.
\end{restatable}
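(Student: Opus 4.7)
The plan is to combine a standard $k$d-tree on the $p$ input points with a single 1D \range{$U$}{$q$} data structure laid out over a DFS linearisation of the tree, so that each subtree operation becomes a contiguous 1D range operation.

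First I would build a balanced $k$d-tree $T$ on the $p$ points, cyclically alternating split dimensions. I would list the points in DFS order of $T$; since DFS visits each subtree contiguously, the points contained in any subtree of $T$ occupy a contiguous range of indices. I would then instantiate the given 1D algorithm on a length-$p$ array (all zeros, as in the problem definition) indexed in this order. Applying an update or query to the full set of points in a subtree $v$ of $T$ then reduces to a single 1D range operation on the corresponding contiguous interval.

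To service a $d$-dimensional operation with range $R$, I would recursively descend from the root of $T$. At each visited node $v$: if $v$'s bounding box is disjoint from $R$, stop; if $v$'s bounding box is contained in $R$, invoke the 1D algorithm on the DFS-interval of $v$'s subtree (applying the update for update operations, or reading off the aggregate and folding it into a running answer via $q$ for queries); otherwise recurse into both children. Correctness follows because the maximal fully-contained nodes of this descent partition the points of $P$ that lie inside $R$ into disjoint subtrees, each of which is exactly a contiguous DFS interval, so the combined 1D operations realise the intended $d$-dimensional operation.

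For the analysis, the standard partition-tree bound says that an axis-aligned orthogonal range partially intersects the bounding boxes of $O(p^{1-1/d})$ nodes of $T$, so at most $O(p^{1-1/d})$ boundary nodes are visited, and hence at most $O(p^{1-1/d})$ maximal fully-contained nodes are reached. Each such node triggers one 1D operation of cost $T(p)$, and for queries we also perform $O(p^{1-1/d})$ applications of $q$ at $O(1)$ each, giving the claimed $O(p^{1-1/d} T(p))$ per operation (inheriting any amortisation from $T$). The main thing to get right is the reduction of a subtree operation to a single contiguous 1D range operation, which the DFS ordering delivers; beyond this, only the usual tie-breaking conventions on coordinates are needed to guarantee the standard boundary-node bound.
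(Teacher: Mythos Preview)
Your proposal is correct and essentially identical to the paper's proof: both build a $k$d-tree, label the points by preorder (DFS) traversal so that subtrees become contiguous intervals, and then reduce each $d$-dimensional operation to $O(p^{1-1/d})$ operations on a single 1D instance via the standard $k$d-tree decomposition bound (the paper cites Lee and Wong for this). Your write-up is somewhat more explicit about the recursive descent and the correctness argument, but the underlying idea and analysis match exactly.
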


\noindent
We prove this result and provide almost-tight $\Omega(p^{1/2-o(1)})$ time per-operation lower bounds conditioned on the Online-Matrix Vector (OMv) Conjecture of Henzinger et al. \cite{Henzinger2015}, for all {\sc Range} problems in \probs{} when $d = 2$, in \autoref{sec:pointwise}.

The latter case corresponds to the {\sc Grid Range} class of problems, which is the subject of the remainder of this work.
The same technique does not apply in these cases, as the number of points is $p = s^d$.

\subsection{Prior work}

Prior work on {\sc Grid Range} problems has been limited to a few specific cases.

Since balanced binary trees have been rather effective in solving problems in one dimension, it is natural to ask whether they can be easily generalised to higher dimensions.
Lueker \cite{Lueker1978} and Chazelle \cite{Chazelle1988} generalised binary search trees and segment trees to higher dimensions to answer various $d$-dimensional range query problems (without updates) on sets of $n$ points in $O(\log^d n)$ time per query.
This technique can be used to solve {\sc Grid Range} problems in the special cases where all update ranges affect a single point, or when all query ranges contain a single point and the update functions are commutative.
The latter can be seen as a generalisation of {\sc Rectangle Stabbing} \cite{Chan2019} (which accepts a series of $d$-dimensional boxes and supports queries for the number of boxes covering a given point), a dual of traditional range queries.
The same structure was used with scaling by Ibtehaz et al. \cite{Ibtehaz2018} to solve {\sc Grid \range{$+$}{$+$}} in worst-case $O(\log^d s)$ time per operation; without scaling, a straightforward solution for {\sc Grid \range{$\max$}{$\max$}} can be obtained in the same time.
These upper bounds contrast with the abovementioned $\Omega(p^{1/2-o(1)})$ time conditional lower bounds for the corresponding {\sc 2D Range} problems.
For other problems in \probs, the lazy propagation technique used in one dimension cannot be directly applied over this structure, as it requires a partition tree of the coordinate space.

In the {\sc Klee's Measure} problem, $n$ rectangles are given in $d$ dimensions, and one is asked to find the volume of their union.
In {\sc Weighted Depth}, the rectangles each have a weight, and one is asked to find the maximum sum of weights of rectangles that cover a single point.
In {\sc Dynamic} versions of these problems, a single rectangle can be added or removed in a single operation, and the new answer must be returned after each one.
These can be seen as special cases of {\sc Grid Range} problems.

When only additions are supported, {\sc Dynamic Klee's Measure} is a special case of {\sc Grid \range{$\setf$}{$+$}} or {\sc Grid \range{$\max$}{$+$}}.
Overmars and Yap \cite{Overmars1991} gave a $O(n^{(d+1)/2}\log n)$ time solution over $n$ updates when \emph{the rectangles' coordinates are known during preprocessing}. 
Chan \cite{Chan2008} gave a sublogarithmic time improvement over this method and showed that this is nearly tight in two dimensions, giving an $\Omega(\sqrt{n})$ time per update worst-case lower bound by reducing from {\sc Dynamic Matrix-Vector Multiplication}.

{\sc Dynamic Weighted Depth} is the special case of {\sc Grid \range{$+$}{$\max$}}, where the maximum value in the entire grid is queried after each update.
It is closely related to {\sc Dynamic Klee's Measure}, in that every known algorithm for one has been adaptable to the other, with running times differing only by a sublogarithmic factor.
Hence, with a slight modification of the result of Overmars and Yap \cite{Overmars1991}, an $\TO(n^{(d+1)/2})$ time algorithm for {\sc Grid \range{$+$}{$\max$}} can be obtained.
Chan \cite{Chan2008} showed that {\sc Dynamic Weighted Depth} is solvable in time $O(n^{(d+1)/2} \log^{d/2} \log n)$, with sublogarithmic improvements specific to entire-grid queries.

When there are no updates, {\sc Klee's Measure} and {\sc Weighted Depth} can be reduced to $O(n)$ updates on a $d-1$ dimensional {\sc Dynamic} instance with a sweepline, however Chan \cite{Chan2013} gave faster $O(n^{d/2-o(1)})$ time algorithms for these.
Reductions by Chan \cite{Chan2008} and Backurs et al. \cite{Backurs2016} showed that solving static variants of {\sc Klee's Measure} or {\sc Weighted Depth} in time $o(n^{d\omega/6})$ or $O(n^{d/2-\epsilon})$ for some $\epsilon > 0$, respectively, would improve longstanding upper bounds for {\sc Clique} or {\sc Max Clique}, respectively.
It follows that these problems are W[1]-complete for parameter $d$.
The same reductions can be appropriated for $d = 3$ to show that an $O(n^{3/2-\epsilon})$ time algorithm for {\sc 3D Weighted Depth} implies an $O(n^{3-2\epsilon})$ time algorithm for {\sc Negative Triangle}.
A truly subcubic algorithm for {\sc Negative Triangle} exists if and only if one exists for {\sc APSP} \cite{Williams2010}, so an $O(n^{3/2-\epsilon})$ time algorithm for \problemname{$+$}{$\max$} for some $\epsilon > 0$ would falsify the APSP Conjecture.

\subsection{Our contribution}

Our main contributions are conditional lower bounds, which serve to elucidate and categorise the expressiveness of these data structures, and upper bounds in the form of general algorithms and frameworks for solving {\sc Grid Range} problems (such as those in \probs) in two or more dimensions.
We do so on a Word RAM with $\Omega(\log n)$-bit words.
In the sequel, we use $\probs'$ to denote \probs{} without the problems {\sc Grid \rangeqint{$+$}{$+$}} and {\sc Grid \rangeqint{$\max$}{$\max$}}, for which per-operation polylogarithmic time upper bounds are already known.

\subparagraph*{Lower bounds.}
First, we consider algorithms whose per-operation time complexity is a function of $s$, the side length of the grid.
In two dimensions, we show that there is no algorithm running in time $O(s^{1-\epsilon})$ per-operation for any $\epsilon > 0$, for any problem in $\probs'$, unless the OMv Conjecture is false. 
Further, for {\sc Grid \rangeqint{$\{+, \max\}$}{$+$}} we obtain identical lower bounds, conditioned on the ``extremely popular conjecture'' of Abboud et al. \cite{Abboud2015} that at least one of the APSP, 3SUM and 2-OV (Orthogonal Vectors) Conjectures are true (the latter of which is implied by the Strong Exponential Time Hypothesis \cite{Williams2005}).
Hence, we cannot solve this variant in two dimensions in $O(s^{1-\epsilon})$ time per operation for any $\epsilon > 0$ without making a powerful breakthrough across several fields of Theoretical Computer Science.
For {\sc Grid \rangeqint{$\{+, \min\}$}{$\max$}} and {\sc Grid \rangeqint{$\{+, \max\}$}{$+$}}, we generalise our results to $d$ dimensions under the $d$-OV Conjecture to obtain an $\Omega(s^{(d-1)-o(1)})$ time per-operation lower bound.
All these lower bounds are almost-tight, as $\TO(s^{d-1})$ time per-operation upper bounds can be obtained by maintaining $s^{d-1}$ one dimensional instances.

These results, however, do not preclude the existence of efficient and practical algorithms for {\sc Grid Range} problems whose overall complexity is a function of the number of operations, $n$ -- the true size of the input -- rather than $s$.
Our aforementioned lower bounds under the OMv and APSP Conjectures translate to conditional $\Omega(n^{3/2-o(1)})$ lower bounds for all problems in $\probs'$, but our lower bound under the 3SUM Conjecture translates to conditional lower bounds of $\Omega(n^{2-o(1)})$ for \problemname{$\{+, \max\}$}{$+$}. 
In $d$ dimensions, our lower bounds under the $d$-OV Conjecture translate to $\Omega(n^{d-o(1)})$ time conditional lower bounds for {\sc Grid \rangeqint{$\{+, \min\}$}{$\max$}} and {\sc Grid \rangeqint{$\{+, \max\}$}{$+$}}.
Our lower bounds are summarised in \autoref{tab:results} and proven in \autoref{sec:lowerbounds}.

\begin{table}
    \begin{center}
        \def\arraystretch{1.5}
        \caption{Lower and upper bounds for {\sc Grid Range} problems in \probs, exhibiting polynomial separation. All results are in two dimensions where $d$ is unspecified, and in $d$ dimensions otherwise, where $d$ is a constant. All lower bounds hold offline, except those for OMv, and all upper bounds hold in fully online settings.}
        \resizebox{\textwidth}{!}{
        \begin{tabular}{ | c | c | c | c | }
            \hline
            $q$ & $U$ & {\bf Lower bounds} & {\bf Upper bound} \\
            \hline
            $\max$ & $\max$ & & $O(n \log^d s)$ (extension of segment trees) \\
            \hline
            $+$ & $+$ & & $O(n \log^d s)$ \cite{Ibtehaz2018} \\
            \hline
            \hline
        $\max$ & $+$ & \makecell[c]{$\Omega(n^{3/2-o(1)})$ APSP \cite{Backurs2016} or OMv \\ $\Omega(n^{(d+1)/2 - o(1)})$ {\sc MaxClique} \cite{Backurs2016} } & \makecell[c]{$\TO(n^{3/2})$ and $O(n\log^2n)$ space \\ $\TO(n^{(d+1)/2})$ \cite{Chan2008, Chan2013}} \\
            \hline
            $\max$ & $\min$ & \multirow{5}{*}{ $\Omega(n^{3/2-o(1)})$ OMv} & \multirow{2}{*}{$\TO(n^{(d+1)/2})$} \\ \cline{1-2}
            $\max$ & $\setf$ & &  \\ \cline{1-2}\cline{4-4}
            $\max$ & $\{\min, \setf, \max\}$ & & $\TO(n^{(d^2+2d-1)/2d})$ \\ \cline{1-2}\cline{4-4}
            $+$ & $\setf$ & & $\TO(n^{5/4 + \omega/(\omega+1)}) = O(n^{1.954})$  \\ \cline{1-2}\cline{4-4}
            $+$ & $\{+, \setf\}$ & & $\TO(n^{5/4 + (4\omega-1)/(4\omega+2)}) = O(n^{1.989})$  \\
            \hline
            \hline
            $\max$ & $\{+, \min\}$ & \multirow{2}{*}{$\Omega(n^{d-o(1)})$ $d$-OV} & \multirow{4}{*}{$\TO(n^d)$}  \\ \cline{1-2}
            $\max$ & $\{+, \min, \setf, \max\}$ & &  \\ \cline{1-3}
            $+$ & $\{+, \max\}$ & \multirowcell{2}{$\Omega(n^{2-o(1)})$ 3SUM\\ $\Omega(n^{d-o(1)})$ $d$-OV} & \\ \cline{1-2}
            $+$ & $\{+, \min, \setf, \max\}$ & &  \\
            \hline
        \end{tabular}
        }
        \label{tab:results}
    \end{center}
\end{table}

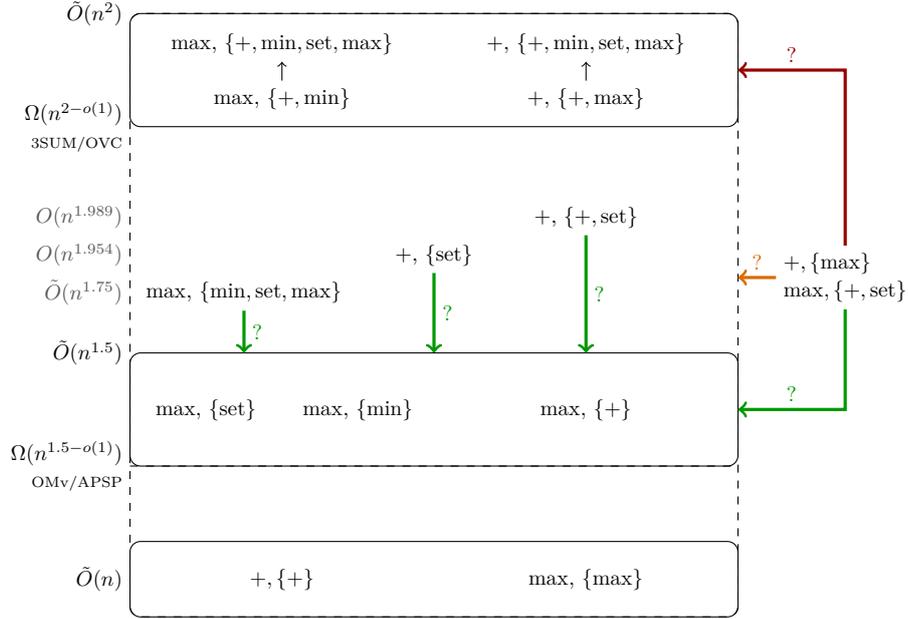
\begin{figure}
    \begin{center}
        \begin{tikzpicture}[every node/.style={scale=0.8}]
            \def\n{8}
            \draw[rounded corners, dashed] (0,0) rectangle (\n, \n);
            \node[left] at (0, \n) {$\tilde{O}(n^2)$};

            \def\l{0}
            \def\r{1}
            \draw[rounded corners] (0,\l) rectangle (\n, \r);
            \path (0, \l) -- (0, \r) node[left, midway] {$\tilde{O}(n)$};
            \node [align=center] at (2, 0.5) {$$+$, \{+\}$};
            \node [align=center] at (6, 0.5) {$\max$, $\{\max\}$};

            \def\l{6.5}
            \def\r{8}
            \node[align=center] (allmax) at (2, \n-.4) {$\max$, $\{+, \min, \setf, \max\}$};
            \node[align=center] (allplus) at (6, \n-.4) {$+$, $\{+, \min, \setf, \max\}$};
            \node[align=center, below=0.25cm of allplus] (plusminplus) {$+$, $\{+, \max\}$};
            \draw [->] (plusminplus) -- (allplus);
            \draw[rounded corners] (0,\l) rectangle (\n, \r);
            \path (0, \l) -- (0, \r) node[left, at start, align=right] {$\Omega(n^{2-o(1)})$\\{\scriptsize 3SUM/OVC}};
            \path (\n, \l) -- (\n, \r) coordinate[midway] (toprighthalf);
            \node[align=center, below=0.25cm of allmax] (plusminmax) {$\max$, $\{+, \min\}$};
            \draw [->] (plusminmax) -- (allmax);

            \def\l{2}
            \def\r{3.5}
            \def\offset{.3}
            \node [align=center] (plusmax) at (6, \l+.75) {$\max$, $\{+\}$};
            \node [align=center] (setmax) at (1, \l+.75) {$\max$, $\{\setf\}$};
            \node [align=center] (minmax) at (3, \l+.75) {$\max$, $\{\min\}$};
            \node [align=center] (setplus) at (4, \l+2.5+\offset) {$+$, $\{\setf\}$};
            \def\lowbar{1}
            \draw [dashed] (0, \l) -- (\n, \l) node[left, at start, align=right] {$\Omega(n^{1.5-o(1)})$\\{\scriptsize OMv/APSP}};
            \def\highbar{6.5}
            \draw [->, very thick, color=black!40!green] (setplus) -- (4, \r) node[midway, right]{?};
            \draw[rounded corners] (0,\l) rectangle (\n, \r);
            \path (0, \l) -- (0, \r) node[left, at end] {$\tilde{O}(n^{1.5})$};

            \path (\n, \l) -- (\n, \r) coordinate[midway] (midrighthalf);
            \node[left] at (0, \l+2.5+\offset) {\textcolor{black!30!gray}{$O(n^{1.954})$}};
            \node (plussetmax) at (6, \l+3+\offset) {$+$, $\{+, \setf\}$};
            \node[left] at (0, \l+3+\offset) {\textcolor{black!30!gray}{$O(n^{1.989})$}};
            \draw [->, very thick, color=black!40!green] (plussetmax) -- ++(0, -1.5-\offset) node[midway, right]{?};
            \node (minsetmaxmax) at (1.5, \l+2+\offset) {$\max$, $\{\min, \setf, \max\}$};
            \node[left] at (0, \l+2+\offset) {\textcolor{black!30!gray}{$\tilde{O}(n^{1.75})$}};
            \draw [->, very thick, color=black!40!green] (minsetmaxmax) -- ++(0, -.5-\offset) node[midway, right]{?};

            \path (\n, \l+1.5) -- (\n, \l+3.5) node[midway, right=.5, align=left] (unknowns) {$+, \{\max\}$\\$\max, \{+, \setf\}$} coordinate[midway] (unknownrighthalf);
            \draw [->, very thick, color=black!40!red] (unknowns) |- (toprighthalf) node[near end, above]{?};
            \draw [->, very thick, color=black!15!orange] (unknowns) -- (unknownrighthalf) node[midway, above]{?};
            \draw [->, very thick, color=black!40!green] (unknowns) |- (midrighthalf) node[near end, above]{?};

        \end{tikzpicture}
    \end{center}

    \caption{Our results for {\sc 2D Grid Range}. Questions left open that ask whether certain problems belong to certain complexity classes are marked with a `?'. See \autoref{sec:open} for more details on open problems.}

\end{figure}

\subparagraph*{Upper bounds.}
By reducing to algorithms in one dimension, $\TO(n^d)$ time algorithms can be found for all these problems.
Hence, we aim to determine which problems in $\probs'$ can be solved more efficiently, by seeking truly subquadratic time algorithms in two dimensions. 
To this end, we provide two general frameworks for developing such algorithms, both based on the approach of Overmars and Yap \cite{Overmars1991} for {\sc Dynamic Klee's Measure}.
Their algorithm constructs a partition tree of the grid such that the rectangles intersecting each leaf region form a ``trellis'' pattern. 
This requires the coordinates of all rectangles to be known during precomputation.

First, we provide a fully-online generalisation of this approach, that does not require coordinates to be known ahead of time.

\begin{restatable}{theorem}{conditionsoy}
    \label{conditions-for-overmars-yap}
    Suppose $q$ and $u$ are associative, commutative binary operations, computable in $O(1)$ time, such that $u$ distributes over $q$, and 0 is an identity of $u$.
    Then {\sc Grid \range{$u$}{$q$}} can be solved in $\TO(n^{(d+1)/2})$ time.
\end{restatable}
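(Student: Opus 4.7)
The plan is to extend the offline partition-tree technique of Overmars and Yap to a fully-online setting, exploiting the algebraic assumptions on $u$ and $q$ to enable recursive lazy propagation. I would proceed in three parts.

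First, I would recall the Overmars--Yap trellis construction: given $N$ axis-aligned rectangles in $\mathbb{Z}^d$, a partition tree with $O(N^{(d-1)/2})$ leaves may be built in $O(N^{(d+1)/2})$ time such that every input rectangle intersecting a leaf $L$ spans $L$ entirely along at least one axis. Within such a trellis leaf $L$, each intersecting update rectangle projects onto a $(d-1)$-dimensional rectangle on the face of $L$ orthogonal to its spanning axis. This motivates storing, per leaf, one $(d-1)$-dimensional recursive sub-instance per axis, plus a scalar $u$-accumulator absorbing rectangles that span $L$ along every axis.

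Second, I would establish that updates and queries inside a trellis leaf can be handled efficiently using these sub-instances. Because $u$ is associative and commutative with identity $0$, the value at any point $p \in L$ equals the $u$-aggregate of the per-axis contributions together with the scalar; because $u$ distributes over $q$, a query on a sub-box $R \subseteq L$ decomposes into $d$ separate $(d-1)$-dimensional $q$-queries on the appropriate projections of $R$, combined via $u$. Correctness follows by induction on $d$, with the base case $d = 1$ handled in $\TO(1)$ per operation by the segment-tree lazy-propagation techniques from \autoref{sec:prelim} (whose preconditions -- distributivity and a closed-under-composition update family -- are met by the hypotheses on $u$ and $q$).

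Third, I would handle online arrivals via a doubling rebuild schedule. Maintain the partition tree built on a snapshot of the first $2^k$ updates seen, and rebuild from scratch whenever the operation count doubles. The $k$-th rebuild costs $O(2^{k(d+1)/2})$, so the $O(\log n)$ geometric series sums to $O(n^{(d+1)/2})$ rebuild time. Between rebuilds, each operation descends into $O(n^{(d-1)/2})$ leaves with $\TO(1)$ work per leaf via the recursive structure, for per-operation cost $\TO(n^{(d-1)/2})$ and total $\TO(n^{(d+1)/2})$.

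The main obstacle is coping with post-snapshot updates whose corners lie strictly inside existing trellis leaves, violating the spanning property that the per-leaf recursive structure relied on. I plan to handle such a violation by treating the update's restriction to the leaf as a small ad hoc $d$-dimensional instance, solved recursively using the same construction applied to the post-snapshot buffer local to that leaf. Because the snapshot is refreshed every time the operation count doubles, the buffer is always at most the size of the current snapshot, so these ad hoc sub-instances stay small; a standard amortized charging argument against the next rebuild keeps the total work within the claimed $\TO(n^{(d+1)/2})$ bound.
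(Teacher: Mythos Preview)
Your plan has the trellis property backwards, and this breaks both the per-leaf data structure and the query decomposition. In the Overmars--Yap construction (and in the paper's dynamic version, \autoref{t-regions}), each box intersecting a leaf $L$ is a \emph{pile}: it spans $L$ in \emph{all but one} dimension, not in \emph{at least one}. Consequently, the restriction of an update to $L$ is effectively one-dimensional (an interval along the single non-spanned axis), so the correct per-leaf storage is $d$ independent \emph{one}-dimensional instances---one per axis---not $d$ recursive $(d-1)$-dimensional instances. This is exactly \autoref{semiring-trellised}: writing $U_i(x_i)$ for the $u$-aggregate of $i$-piles at coordinate $x_i$, distributivity gives
\[
q_{x\in R}\Bigl(\,\textstyle u_{i\in[d]}\,U_i(x_i)\Bigr)=u_{i\in[d]}\;q_{x_i\in[l_i,r_i]}\,U_i(x_i),
\]
and each summand is a 1D range problem solvable in $\TO(1)$ per operation. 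Under your reading, the contribution along axis $i$ would depend on all $d-1$ remaining coordinates, and then the decomposition you assert in Part~2 fails for $d\ge 3$: for instance, with $q=\max$, $u=+$ and functions $f_1(y,z),f_2(x,z),f_3(x,y)$, one generally has $\max_{x,y,z}(f_1+f_2+f_3)\neq \max f_1+\max f_2+\max f_3$. (Also, the Overmars--Yap partition has $O(N^{d/2})$ leaves, not $O(N^{(d-1)/2})$; the latter is the number of leaves a single box touches.)

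Separately, your online strategy (periodic rebuild plus ``ad hoc $d$-dimensional sub-instances'' for post-snapshot updates whose corners land inside a leaf) is the part that needs the most care, and the sketch does not supply a bound: a single new update can fail the pile property in \emph{every} leaf it partially covers, and recursing in $d$ dimensions inside each such leaf is exactly the original problem at a smaller scale, with no evident amortization. The paper sidesteps this entirely by making the partition tree itself dynamic (\autoref{t-regions}): on each insertion it splits $O(n^{(d-1)/2})$ slabs so that the pile property is restored, in amortised $\TO(n^{(d-1)/2})$ time per insertion. Combined with \autoref{semiring-trellised}, this gives \autoref{reduce-to-trellised} and hence the theorem, with genuinely $\TO(1)$ work per leaf and no rebuild schedule needed.
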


This algorithm does not apply to problems such as the \rangeqint{$\setf$}{$+$} problem described in the abstract, as $\setf$ does not distribute over $+$.
Motivated by this, in \autoref{sec:minmax} we show that in two dimensions, efficient solutions to static {\sc Grid} \range{$U$}{$q$} instances where the update ranges form the same ``trellis'' pattern can be used to give fully-online, truly subquadratic time solutions to many {\sc Grid} \range{$U$}{$q$} problems.
We also extend these results to multiple dimensions.

As an application, we use this approach to give a truly subquadratic time algorithm for 
\problemname{$\{\min, \setf, \max\}$}{$\max$}.
We do the same for \problemname{$\setf$}{$+$} and \problemname{$\{+, \setf\}$}{$+$} in \autoref{sec:setplus}, by drawing an equivalence and a reduction between the respective ``{\sc Static Trellised}'' instances and counting the number of 2- and 3-edge paths between vertex pairs, respectively.
To this end, we prove the following result.

\begin{restatable}{theorem}{pathquery}
    Let $G$ be a graph with $m$ edges and $O(m)$ vertices.
    The number of 3-edge walks between each of $q$ vertex pairs in $G$ can be found in $O(m^{2\omega/(2\omega+1)} (m+q)^{(2\omega-1)/(2\omega+1)})$ time.
\end{restatable}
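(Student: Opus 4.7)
The plan is to adapt the Yuster--Zwick-style algorithm for $C_4$ detection in sparse graphs (whose running time involves the same exponent $(4\omega-1)/(2\omega+1)$) to count 3-edge walks between specified vertex pairs. Fix a degree threshold $\Delta$ (to be chosen at the end) and call a vertex $v$ \emph{heavy} if $\deg(v) > \Delta$ and \emph{light} otherwise; the number of heavy vertices is at most $h := 2m/\Delta$. Each 3-walk $u \to a \to b \to v$ is then classified by whether the two inner vertices $a, b$ are heavy or light, yielding four cases: LL, LH, HL, and HH.

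For the case when both middle vertices are light (LL), enumerate directly: for each edge $(a, b)$ with $a, b$ both light, iterate over all pairs $(u, v) \in N(a) \times N(b)$, each of size at most $\Delta$, and use a hash table of the query pairs to increment the tally of each queried $(u, v)$. This contributes $O(m \Delta^2)$ to the total.

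For the cases involving a heavy inner vertex (HH, LH, HL), use rectangular matrix multiplication restricted to heavy indices. Let $A_H = A[H, H]$ be the $h \times h$ heavy submatrix, and build $q \times h$ and $h \times q$ matrices $U$ and $V$ with $U[i, a] = A[u_i, a]$ and $V[b, i] = A[b, v_i]$; the HH contributions are the diagonal entries of $U A_H V$, which can be obtained by first forming $U A_H$ via rectangular matrix multiplication and then taking $q$ length-$h$ inner products with columns of $V$. For LH and HL, combine matrix multiplication with an enumeration pass through the light middle vertex (whose degree is at most $\Delta$), producing auxiliary counts that are aggregated per query via heavy neighbourhoods. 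Throughout, the key point is to arrange the products so that no intermediate dimension equals $n$, by relying on the sparsity of $A$ and the smallness of $h$.

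Finally, choose $\Delta$ to balance the enumeration cost $O(m\Delta^2)$ against the matrix-multiplication costs, which scale with powers of $h$, $m$, and $q$. Applying the standard rectangular matrix multiplication identity $M(a, b, c) = O(abc \cdot \min(a, b, c)^{\omega-3})$ and optimising $\Delta$ yields the stated time bound. The main obstacle is the careful analysis of the rectangular matrix multiplications needed to obtain exactly the exponents $\frac{2\omega}{2\omega+1}$ and $\frac{2\omega-1}{2\omega+1}$, and to interpolate correctly between the regimes $q \le m$ and $q > m$; in particular, the light--light enumeration must be split or batched so that its $m\Delta^2$ cost is not the bottleneck when $q$ is large.
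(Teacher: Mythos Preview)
Your proposal has a genuine gap in the light--light case that prevents it from reaching the claimed exponent. With a single threshold $\Delta$, enumerating all pairs $(u,v)\in N(a)\times N(b)$ over light--light edges $(a,b)$ costs $\Theta(m\Delta^2)$ in the worst case. Balancing this against any matrix-multiplication term over the $h=O(m/\Delta)$ heavy vertices gives, at best, an exponent of $(3\omega)/(\omega+2)\approx 1.628$ when $q=\Theta(m)$, not $(4\omega-1)/(2\omega+1)\approx 1.478$. Your closing remark that the $m\Delta^2$ term ``must be split or batched so that [it] is not the bottleneck when $q$ is large'' misdiagnoses the problem: this term is independent of $q$, and it is already the bottleneck at $q=m$ (indeed even at $q=1$).

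The paper's proof avoids exactly this by using a \emph{three}-way degree partition $H$ (degree $>\Delta$), $M$ (degree in $(\sqrt{\Delta},\Delta]$), and $L$ (degree $\le\sqrt{\Delta}$). The only case that forces a product of two neighbourhoods of non-heavy middle vertices is when both are in $L$, and then the enumeration costs $m\cdot(\sqrt{\Delta})^2=O(m\Delta)$ rather than $O(m\Delta^2)$. The price is that the ``mixed'' cases involving $M$ must now be handled by rectangular products with an intermediate dimension $O(m/\sqrt{\Delta})$; the paper works through roughly a dozen sub-cases (classified by which of $H,M,L$ the four walk vertices lie in) and shows that every one of them fits under $O((m+q)\Delta)$ or $O(\omega(m/\Delta,m/\sqrt{\Delta},m/\Delta))$. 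Setting $\Delta=m^{2(\omega-1)/(2\omega+1)}$ (and adjusting when $q>m$) then balances these two costs to give the stated bound. Your outline is on the right track for the heavy--heavy case, but without the intermediate $M$ class there is no way to get both a cheap all-light enumeration and a small heavy vertex count simultaneously.
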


In this way, queries on static graphs yield efficient, fully-online dynamic {\sc Grid Range} data structures.
We find it somewhat surprising that, though all of the problems in \probs{} can be solved in $\TO(n)$ time in one dimension, our upper and lower bounds imply likely polynomial separation in two or more dimensions (see \autoref{tab:results}).

Lastly, we provide a fully-online algorithm for \problemname{$+$}{$\max$} that uses $O(n\log^2 n)$ space, and runs in a time comparable to that of existing algorithms.

\begin{restatable}{theorem}{oldincmax}
    \problemname{$+$}{$\max$} can be solved in $\TO(n^{3/2})$ time, and $O(n \log^2 n)$ space.
\end{restatable}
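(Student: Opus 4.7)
The plan has two parts, corresponding to the two bounds. The time bound follows immediately from Theorem \ref{conditions-for-overmars-yap} applied to $(q, u) = (\max, +)$ in $d = 2$: both operations are associative and commutative, $0$ is the identity of $+$, and $+$ distributes over $\max$ via $a + \max(b, c) = \max(a+b, a+c)$. The remaining task is to show that this instantiation can be realized in $O(n \log^2 n)$ space, rather than the $\Theta(n^{3/2})$ space used by the naive Overmars--Yap implementation, which stores a $\Theta(\sqrt n)$-sized summary at each of $\Theta(\sqrt n)$ cells of a flat $\sqrt n \times \sqrt n$ trellis.

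To achieve the space bound, I would replace this flat trellis by a two-level segment tree built online over the x- and y-coordinates of events seen so far. The outer tree is a dynamic 1D segment tree on x-coordinates, maintained with the standard doubling rebuild trick for online coordinate compression; at each outer node $v$, an inner 1D segment tree over y-coordinates supports range-add updates and range-max queries by lazy propagation. Distributivity of $+$ over $\max$ is exactly what lets a uniform $+c$ over $v$'s entire x-range be absorbed as an additive lazy tag at the inner-tree level, without descending further, and thereby justifies the correctness of lazy aggregation of partial updates. Updates and queries decompose into $O(\log n)$ canonical outer nodes, each contributing an $O(\log n)$-cost inner operation; inherited lazy tags along root-to-canonical paths are merged using distributivity.

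For the time, the Overmars--Yap recursive amortized analysis, suitably adapted to the segment-tree recursion, still bounds the total work by $\TO(n^{3/2})$; intuitively, the segment-tree traversal only refines the trellis without changing its asymptotic shape. For the space, the outer tree has $O(n \log n)$ nodes across its lifetime, and each of the $n$ operations contributes $O(\log n)$ entries to inner trees, giving $O(n \log^2 n)$ in total. The main obstacle I anticipate is cleanly reconciling the Overmars--Yap amortized argument with the segment-tree implementation while preserving lazy tags across the online coordinate rebuilds: correctly attributing inherited additive tags from root-to-canonical paths without over-counting, and ensuring that rebuilds do not inflate the amortized cost, will most likely require a potential-function argument in the style of Ji's segment tree beats.
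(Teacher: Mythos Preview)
Your appeal to Theorem~\ref{conditions-for-overmars-yap} for the time bound is correct, but the space argument is not an algorithm, and the paper takes a completely different route.

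The two-level segment tree you sketch cannot support range-add range-max in the way you claim. Concretely: if the inner tree at outer node $v$ stores only the additive tags from updates that decomposed to $v$, then the value of a point $(x,y)$ is $\sum_{v \ni x} T_v(y)$, summed over the root-to-leaf path of $x$ in the outer tree. A max query over $[x_1,x_2]\times[y_1,y_2]$ asks for $\max_{x,y}\sum_{v \ni x} T_v(y)$; the sum depends on which leaf $x$ lies under, so you cannot evaluate it by looking only at the canonical outer nodes of $[x_1,x_2]$ plus their ancestors. The ``inherited tags along root-to-canonical paths'' you mention handle ancestors, but the variation you need to capture comes from \emph{descendants} of the canonical nodes, whose contributions are non-uniform across the canonical $x$-range and cannot be absorbed by distributivity. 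Conversely, if you instead store at $v$ the true row-wise maxima over $v$'s $x$-range, then an update that decomposes to a proper descendant of $v$ may or may not raise those maxima, and refreshing them forces either $\Theta(n^{3/2})$ space (full inner trees) or walks to the leaves. No Overmars--Yap or segment-tree-beats potential rescues this: those arguments bound touch counts, not the size of the summaries you must retain at internal nodes. Indeed, a polylog-per-operation structure of the shape you describe would contradict the paper's own $\Omega(n^{3/2-o(1)})$ conditional lower bound.

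What the paper actually does is batching with periodic \emph{static} rebuilds. Split the operations into batches of $k=\sqrt n$. At the start of each batch, freeze all prior updates and build the structure of Lemma~\ref{lem:2d-offline-updates-before-queries}: an outer segment tree on $x$ in which, for each node $[x_1,x_2]$, a sweepline over the events in that $x$-range (using the closed family $f_{a,b}(x)=\max(x+a,b)$ and Ji's historical-maximum trick, via Lemma~\ref{lem:1d-f-with-oracle}) records, for every $y$, the maximum value ever attained in $[x_1,x_2]\times\{y\}$. This static structure costs $O(\numupdates\log^2\numupdates)$ time and space and answers range-max on the frozen grid in $O(\log^2\numupdates)$. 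Within the batch, the grid is cut into $O(k)$ compressed columns by the batch's own coordinates; each column carries a 1D $(\max,\{+,\max\})$ structure from Lemma~\ref{lem:1d-f-with-oracle} that uses the frozen structure as its initial-value oracle. Per batch this costs $\TO(n)$ for the rebuild plus $\TO(k^2)$ for the $k$ operations across $O(k)$ columns; summed over $n/k$ batches and balanced at $k=\sqrt n$ this is $\TO(n^{3/2})$. Because only one static structure is alive at a time, the space is $O(n\log^2 n)$.
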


This is proven in \autoref{sec:succinctplusmax}.
While our complexity is slower than existing results by Chan \cite{Chan2008} by a polylogarithmic factor, those require $O(n^{3/2+o(1)})$ space and are not fully-online.
Overmars and Yap \cite{Overmars1991} also gave an $O(n)$ space algorithm for static {\sc Klee's Measure} in $d$ dimensions using a sweepline, but this does not apply in the dynamic case.


\section{Preliminaries}
\label{sec:prelim}

\subparagraph*{Model of computation.}
All results are described are for a Word RAM over $l$-bit words, with $l = \Omega(\log n)$.
We further assume that any coordinates or values given in the inputs can be represented in a constant number of words, and that basic arithmetic and the (binary) operations used in range problems can be performed on a constant number of words in constant time.
In particular, $s = O(n^c)$, for some constant $c$.

\subparagraph*{Notation.}
We use the notation $\TO(f(n)) = O(f(n) \text{poly} \log n)$ to hide polylogarithmic factors.
Note that $\log^c s = \log^c n$ for any constant $c$.
Where our algorithms and proofs use positive or negative $\infty$ as a value, this can be replaced with a suitably large value, for a given input instance.

Where $x \leq y$ are real numbers, we denote by $[x, y]$ the set of all \emph{integers} between $x$ and $y$, inclusive.
When $y \geq 1$, we write $[y]$ as shorthand for $[1, y]$.

The binary operation $\setf$ is the operation whose value is its second operand.
That is, $\setf(a, b) = b$.

$2 \leq \omega < 2.37286$ \cite{Alman2020} is the exponent of multiplying two $n \times n$ integer matrices.
We also write $\omega(a, b, c)$ for the time taken to multiply an $a \times b$ matrix by a $b \times c$ matrix.

\subparagraph*{Ancillary problems and variants.}
In {\sc Range} problems, we say that the $i$th operation (update or query) occurs at \emph{time} $i$.
In {\sc Offline} variants, all operations are provided together with the initial input, and in {\sc Static} variants, it is guaranteed that all updates precede all queries.

We formalise and appropriate the ``trellis'' pattern observed by Overmars and Yap \cite{Overmars1991} for our use, as follows.
Call a {\sc Grid Range} instance {\sc Trellised} if for each update, there is a dimension $d^{\ast}$ such that  $[l_{d'}, r_{d'}] = (-\infty, \infty)$ for all $d' \in [d] \setminus \{d^{\ast}\}$.
When $d = 2$, updates must either cover all points in a range of rows, or all points in a range of columns, which we call \emph{row updates} and \emph{column updates}, respectively.

\subparagraph*{Segment trees and lazy propagation.}
Let $s$ be a power of two.
A \emph{segment tree} over an array $A$ containing $s$ elements is a complete rooted binary tree of ranges over $[s]$.
The root is $[1, s]$, and each node $[a, b]$ has two children: $[a, h]$, $[h+1, b]$, where $h = (a+b-1)/2$.
Hence, there are $O(s)$ nodes in the tree, with a depth of $\log s$.
Given an integer interval $I = [l, r] (1 \leq l \leq r \leq s)$, we can write $I$ as a canonical disjoint union of a set $\text{base}(I)$ of $O(\log s)$ nodes.
These are defined as the nodes closest to the root that are fully contained in $I$, and can be found recursively.
We say that an integer interval $[l_1, l_2]$ \emph{decomposes} to $I$ if $I \in \text{base}([l_1, l_2])$.

Segment trees can be used to prove the following folklore proposition.
\begin{proposition}[Lazy propagation]
    \label{lazyprop}
    Suppose $U$ is a set of update functions, and $q$ is a query function, computable in $O(1)$ time.
    If there is a set $\bar{U}$ such that:
    \begin{enumerate}
        \item $U \subseteq \bar{U}$ are sets of functions that can be represented and composed in $\TO(1)$ space and time, such that the composition of any series of at most $n$ (possibly non-distinct) functions of $U$ results in a function in $\bar{U}$; and
        \item For each $u \in \bar{U}$, $u$ distributes over $q$
    \end{enumerate}
    then {\sc 1D Grid \range{$U$}{$q$}} is solvable in $\TO(n)$ time.
\end{proposition}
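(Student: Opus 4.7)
The plan is to adapt the standard segment-tree-with-lazy-propagation construction to the {\sc Grid} setting. Since $s$ may be as large as $n^c$, a complete segment tree over $[s]$ is too large to materialise in advance, so I would use a sparse segment tree over $[s]$ in which nodes are allocated only upon first visit. Each operation traverses $O(\log s) = O(\log n)$ ancestors and decomposes its range into $O(\log n)$ canonical nodes, so at most $O(n \log n)$ nodes are ever created, keeping the space usage within $\TO(n)$.

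At every materialised node $v$ I would store an aggregate $A_v$ and a pending function $f_v \in \bar{U}$, initially the identity. Writing $f_{\text{anc}(v)}$ for the composition of $f_{v'}$ over the proper ancestors $v'$ of $v$, the invariant is that the true current value at every leaf $l$ is $f_{\text{anc}(l)}(A_l)$, and more generally $f_{\text{anc}(v)}(A_v) = q(\{\text{true values in } v\text{'s range}\})$; this is consistent because each $f_{v'} \in \bar{U}$ distributes over $q$ by hypothesis (2). Processing an operation on range $[l,r]$ proceeds by a recursive descent that, before recursing from a node $v$ into a child $w$ whose range overlaps $[l,r]$, \emph{pushes} $f_v$ down: $A_w \gets f_v(A_w)$, $f_w \gets f_v \circ f_w$, and then $f_v \gets \text{id}$. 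By the time the recursion reaches a canonical node $v$, all proper ancestor pendings are the identity, so an update $u \in U$ can be realised by $A_v \gets u(A_v)$ and $f_v \gets u \circ f_v$, while a query simply takes $q$ of the aggregates at the $O(\log n)$ canonical nodes. Internal $A_v$ is re-aggregated from its children after each recursion returns.

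The crucial bound is that every $f_v$ is always the composition of at most $n$ base functions from $U$. This follows from a path-sum argument: along any root-to-leaf path, the composition of the $f_{v'}$'s on that path reflects exactly the outstanding updates whose range covers the leaf, hence involves at most $n$ base functions; a single $f_v$ is, in particular, a contiguous piece of such a path composition, so its own length is also at most $n$. By hypothesis (1), any such $f_v$ therefore lies in $\bar{U}$ with $\TO(1)$-size representation and $\TO(1)$-time composition and evaluation, and $q$ is computable in $O(1)$ time. Each operation thus touches $O(\log n)$ nodes at $\TO(1)$ cost each, giving $\TO(1)$ amortised time per operation and $\TO(n)$ overall.

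The main bookkeeping subtlety, rather than a conceptual obstacle, is initialising aggregates at newly-materialised sparse nodes. All values start at 0, so a node covering $k$ points should have initial $A_v = q(0,0,\ldots,0)$ with $k$ zeros; this constant depends only on $k$, and only $O(\log s) = O(\log n)$ distinct sizes arise along any root-to-leaf path of the binary tree, so these constants can be precomputed in $O(\log n)$ time by associative-commutative squaring. After this, node creation costs $\TO(1)$, and every algorithmic step above is sound by the distributivity and composition properties guaranteed by the hypotheses.
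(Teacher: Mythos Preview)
Your proposal is correct and follows essentially the same approach as the paper: a sparse (implicitly-allocated) segment tree with lazy propagation, storing a pending function from $\bar{U}$ and an aggregate at each node, and computing the initial aggregate $q(0,\ldots,0)$ for a newly-created node of size $2^k$ via repeated squaring in $O(\log s)=O(\log n)$ time. The paper's write-up is terser---it simply asserts that each stored $u_a$ lies in $\bar{U}$ and that untouched nodes' $q$-values are obtained by ``binary exponentiation''---whereas you additionally spell out the path-sum argument for why each pending $f_v$ is a composition of at most $n$ base updates; but the underlying construction is the same.
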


\begin{proof}
    Without loss of generality, increase $s$ so it is a power of two, and build a segment tree $T$ over $[s]$.
    In each segment tree node $a$, we will store an element $u_a$ of $\bar{U}$.
We also store a ``$q$-value'' at each node $q_a$, each initially the value of $q$ over an array of $|a|$ 0s.
    If $a$ has children $b$ and $c$, we keep the invariant $q_a = u_a(q(q_b, q_c))$.
    Whenever we need to access node $a$, for each ancestor $a'$ from the root, we update $u_{b'} := u_{a'} \circ u_{b'}$ , $q_{b'} := u_{a'}(q_{b'})$ for each child $b'$ of $a'$, then reset $u_{a'} := 0$.
    This ``lazily propagates'' the updates from a parent to a child.

    To perform an update on an interval $I$, we perform an update on each node $a$ in $\text{base}(I)$ separately.
    For each node, we first propagate as above, then modify $u_a$ and $q_a$, as above.
    We then consider the ancestors of $a$ from $a$'s parent back to the root, and update their $q$-values to maintain the invariant.
    To perform query, we perform a similar propagation on each node in $\text{base}(I)$.
    The result can then be found by combining the $q$-values of these nodes.

    We note that the segment tree need not be constructed explicitly: we only create and store values for nodes which are required.
    Using binary exponentiation, we can find the $q$-value of an untouched node in $O(\log s) = O(\log n)$ time.
\end{proof}

\subparagraph*{Hardness conjectures.}
We base hardness on the following popular conjectures.
The first is a conjecture of Henzinger et al. \cite{Henzinger2015}. 

\begin{conjecture}[OMv Conjecture]
    \label{conj:OMv}
    No (randomized) algorithm can process a given $m\times m$ boolean matrix $M$, and then in an online way compute the $(\lor, \land)$-product $Mv_i$ for any $m$ boolean vectors $v_1, \ldots, v_m$ in total time $O(m^{3-\epsilon})$, for any $\epsilon > 0$.
\end{conjecture}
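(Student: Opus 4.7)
The OMv statement is explicitly framed as a \emph{conjecture}, not a theorem, so a formal proof is neither expected nor known; the primary evidence consists of the lack of truly subcubic algorithms for a host of related online problems and the broad consequences an algorithmic breakthrough would have. I will therefore sketch what a proof would conceptually need to deliver, and the partial results in that direction.

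The plan is to argue in two parts. First, I would observe that \emph{offline} the $m$ boolean matrix-vector products can be batched into a single $(\lor,\land)$-matrix multiplication of $M$ by $[v_1\,|\,\cdots\,|\,v_m]$, which is solvable in $O(m^\omega)$ time. Any hardness proof must therefore critically exploit the online constraint that each $Mv_i$ be output before $v_{i+1}$ is revealed, since otherwise fast matrix multiplication would already refute the conjecture. A proof would thus require an unconditional online/offline separation in the word RAM model, and such separations are not known for any natural algebraic problem: this is the main obstacle and the reason the statement must be taken as an assumption rather than derived.

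Second, I would compile the conditional evidence. Henzinger et al.\ \cite{Henzinger2015} show that OMv is implied by the Combinatorial BMM conjecture (no combinatorial $m^{3-\epsilon}$ algorithm for boolean matrix multiplication), and conversely OMv implies striking hardness for many dynamic and online problems, lending indirect support to the conjecture by the web of equivalences it generates. A serious attempt at a proof would have to attack the online lower bound via a cell-probe or non-uniform circuit model, but any such argument must hold against algorithms permitted arbitrary preprocessing of $M$ within the $O(m^{3-\epsilon})$ total time budget, which rules out standard information-theoretic techniques and most adversary arguments. I would therefore confine myself to recording these implications and equivalences and, as the paper does, adopt OMv as an assumption wielded to derive the conditional lower bounds quoted earlier in the introduction.
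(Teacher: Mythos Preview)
Your assessment is correct: the statement is a conjecture, and the paper does not attempt to prove it. It is simply stated in the preliminaries (with attribution to Henzinger et al.) and then used as a hardness hypothesis for the conditional lower bounds in \autoref{sec:lowerbounds}, exactly as you describe.
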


In the OuMv problem, the same matrix $M$ is given during preprocessing, and $m$ pairs of boolean query vectors $(u_1, v_1), \ldots, (u_m, v_m)$ are given online.
For each, the value of the product $u_i^T M v_i$ is requested.
Note that the answer to each of these queries is a single bit.
Henzinger et al. showed that if OuMv can be solved in $O(m^{3-\epsilon})$ time, then OMv can be solved in $O(m^{3-\epsilon/2})$ time, so these problems are subcubic equivalent.

We refer the reader to the survey by Vassilevska Williams \cite{Williams2015a} for more details on the remaining conjectures.

\begin{conjecture}[APSP Conjecture]
    \label{conj:apsp}
    No (randomized) algorithm can solve {\sc All-Pairs Shortest Paths} (APSP) in $O(|V|^{3-\epsilon})$ time for $\epsilon > 0$, on graphs with vertex set $|V|$, edge weights in $\{-v^c, \dots, v^c\}$ and no negative cycles, for large enough $c$.
\end{conjecture}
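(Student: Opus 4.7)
The APSP Conjecture is a hardness hypothesis rather than a theorem, so a direct proof proposal is not available: an unconditional $\Omega(|V|^{3-\epsilon})$ lower bound against arbitrary randomized algorithms for a problem solvable in $O(|V|^3)$ time would vastly surpass current lower-bound techniques and is itself widely regarded as beyond reach. The sensible plan is therefore to lay out the evidence that supports the conjecture and to explain why the statement appears here as an assumption rather than as a result to be derived.

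First, I would invoke the subcubic-equivalence framework of Williams and Williams, which shows that APSP (over the edge-weight range specified) is subcubic-equivalent to a broad cluster of problems: Negative Triangle detection, $(\min,+)$-matrix product, second-shortest-paths, graph radius, metricity testing, and minimum-weight triangle. An unconditional proof of the conjecture need only rule out truly subcubic algorithms for any one of these; conversely, a single truly subcubic algorithm for any member refutes it. I would emphasise Negative Triangle in particular, since, as noted in the introduction, it is the pathway by which hardness transfers to {\sc 3D Weighted Depth} and eventually to the relevant {\sc 2D Grid Range} problems. Next, I would survey the upper-bound landscape: despite decades of sustained effort, the best known algorithms for APSP and $(\min,+)$-product (Fredman, Williams, Chan, and others) run in time $n^3 / 2^{\Omega(\sqrt{\log n})}$, a merely sub-polylogarithmic improvement over the naive bound, and no algebraic technique analogous to the Strassen-style cancellations that yield $O(n^\omega)$ ring matrix multiplication is known to transfer to the $(\min,+)$ semiring.

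The main, and in fact insurmountable, obstacle is that ruling out arbitrary randomized algorithms of a fixed polynomial running time for an explicit polynomial-time problem is tantamount to an unconditional circuit lower bound, and no such bound of the required strength is known for any comparable problem. For exactly this reason the statement is phrased in the excerpt as a \emph{conjecture}: it is used in \autoref{sec:lowerbounds} only as a hypothesis from which conditional lower bounds (such as the $\Omega(n^{3/2-o(1)})$ APSP-conditional bound in \autoref{tab:results}) are derived via fine-grained reductions, rather than as a theorem to be proved.
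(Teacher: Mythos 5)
You are right that this statement is a hypothesis, not a theorem: the paper offers no proof, simply stating the conjecture (with a pointer to the survey of Vassilevska Williams) and using it purely as an assumption for the conditional lower bounds in \autoref{sec:lowerbounds}. Your discussion of the subcubic-equivalence evidence and of why an unconditional proof is out of reach is accurate and consistent with how the paper treats the conjecture.
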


\begin{definition}[$k$-OV problem]
    Let $k \geq 2$ be a constant, and $z = \omega(\log n)$.
    Given $k$ sets $A_1, \ldots, A_k \subseteq \{0, 1\}^z$ with each $|A_i| = m$, determine if there exist $a_1 \in A_1, \ldots, a_k \in A_k$ such that $a_1 \cdot \ldots \cdot a_k = 0$, where $a_1 \cdot \ldots \cdot a_k := \sum_{i=1}^z \prod_{j=1}^k a_{ji}$.
\end{definition}

\begin{conjecture}[$k$-OV Conjecture]
    \label{conj:ovc}
    \sloppy
    No (randomized) algorithm can solve $k$-OV in $m^{k-\epsilon} \textnormal{poly}(z)$ time, for any $\epsilon > 0$.
\end{conjecture}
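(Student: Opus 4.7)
The statement is a conjecture rather than a theorem, so no honest proof sketch is possible; the plan is to record the standard evidence that licenses its use later in the paper. The main justification is the well-known reduction of Williams \cite{Williams2005} from $k$-\textsc{SAT}: given a $k$-CNF formula $\varphi$ with $n$ variables and $C$ clauses, partition the variables into $k$ equal blocks of $n/k$ variables; for each block $i$ and each partial assignment $\alpha$ of that block, form a vector $v_\alpha \in \{0,1\}^C$ whose $j$-th coordinate is $1$ iff clause $j$ is \emph{not} satisfied by $\alpha$, and collect these $2^{n/k}$ vectors into $A_i$. Then $\varphi$ is satisfiable iff there exist $a_1 \in A_1, \ldots, a_k \in A_k$ with $a_1 \cdot \ldots \cdot a_k = 0$, giving a $k$-OV instance with $m = 2^{n/k}$ and $z = C = \mathrm{poly}(n) = \mathrm{polylog}(m)$.

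With the reduction in hand, the argument I would give is contrapositive: any algorithm solving $k$-OV in $m^{k-\epsilon}\mathrm{poly}(z)$ time for some $\epsilon > 0$ would, composed with the reduction, decide $k$-\textsc{SAT} in time $2^{(1-\epsilon/k)n}\mathrm{poly}(n)$, contradicting the Strong Exponential Time Hypothesis (SETH). Hence SETH implies \autoref{conj:ovc}, and the conjecture inherits all of SETH's well-documented support from the fine-grained complexity literature; I would point the reader to the survey \cite{Williams2015a} for further circumstantial evidence, including the long list of tight conditional lower bounds that have been based on the $k$-OV Conjecture without refutation.

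The main obstacle to producing an unconditional proof is, unsurprisingly, that this is at least as hard as refuting SETH, which would entail circuit or algorithmic breakthroughs well beyond the current state of the art. I therefore do not attempt an unconditional proof; instead, following the convention in fine-grained complexity, the paper treats \autoref{conj:ovc} strictly as a hypothesis and invokes it only to obtain the conditional lower bounds in \autoref{sec:lowerbounds}.
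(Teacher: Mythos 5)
Your proposal is correct and matches the paper's treatment: the statement is a conjecture, not a theorem, and the paper likewise offers no proof, merely citing the standard split-and-list reduction showing SETH implies it \cite{Williams2005} and deferring further evidence to the survey \cite{Williams2015a}, exactly as you do. Nothing is missing; the conjecture is used purely as a hypothesis for the conditional lower bounds in \autoref{sec:lowerbounds}.
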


\begin{conjecture}[3SUM Conjecture] 
    \label{conj:3sum}
    Any algorithm requires $m^{2-o(1)}$ time in expectation to determine whether a set $S \subset \{-m^3, \ldots, m^3\}$  of $m$ integers contains three distinct elements $a, b, c \in S$ with $a + b = c$.
\end{conjecture}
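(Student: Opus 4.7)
The statement is a widely believed hardness conjecture rather than a theorem with a known proof, so what follows is a sketch of the plausible attack routes together with the barriers they meet. An unconditional proof in the unrestricted word RAM model is not presently on the horizon: the best known lower bounds for natural problems are only logarithmic, so this direct route would require an entirely new lower-bound methodology and is unlikely to succeed in the short term.

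The most natural conditional-on-model attack would be to try to extend the classical $\Omega(m^2)$ lower bound for $3$SUM in the $k$-linear decision tree model (Erickson) to a richer model such as general linear or low-degree algebraic decision trees. First I would construct a hard input family supported on a sufficiently pseudorandom additive subset of $\mathbb{Z}$, then bound the number of sign patterns realisable by a shallow decision tree in the target model, and finally count to obtain a depth lower bound of $\Omega(m^{2-o(1)})$. The main obstacle will be the breakthrough of Gr\o nlund and Pettie, later pushed towards $\tilde{O}(m)$ queries by Kane, Lovett and Moran, which shows that $3$SUM admits near-linear depth linear decision trees; this rules out the natural target models and means any model-restricted argument must carefully circumvent their techniques.

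The remaining avenue is a fine-grained reduction from a more strongly believed conjecture such as SETH/OV or APSP: the plan would be to encode instances of those problems as integers in $\{-m^3,\ldots,m^3\}$ in such a way that the orthogonality or min-plus structure is captured exactly by a three-term zero-sum relation, and then invoke a truly subquadratic $3$SUM algorithm to derive a truly subcubic APSP algorithm or a truly subquadratic OV algorithm. The hard part here is that, despite a decade of concerted effort, no such reduction is known; a recurring view in the fine-grained complexity literature (e.g.\ Abboud et al.) is that $3$SUM, OV, and APSP occupy three apparently incomparable islands of hardness, so a reduction of this kind would itself be a major structural discovery. In light of these obstacles, the pragmatic treatment --- adopted in the present paper and throughout the area --- is to take the statement as an axiomatic hypothesis and derive conditional bounds from it, rather than attempt a direct proof.
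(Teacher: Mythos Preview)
Your analysis is correct: the statement is a conjecture, not a theorem, and the paper offers no proof of it whatsoever. It is simply listed in the preliminaries alongside the OMv, APSP, and $k$-OV conjectures as one of the standing hardness hypotheses from which the paper derives conditional lower bounds (specifically, \autoref{3sum-lb-plus-max-sum}). Your survey of why a proof is out of reach---logarithmic unconditional bounds, the Gr\o nlund--Pettie and Kane--Lovett--Moran decision-tree upper bounds, and the absence of reductions from APSP or OV---is accurate and appropriate, but strictly speaking none of it is needed here: the paper's ``proof'' is empty because the statement is an assumption, not a claim to be established.
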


\section{Conditional lower bounds}
\label{sec:lowerbounds}

In this section, we establish conditional hardness for problems in $\probs'$ under popular conjectures.
We do so by considering per-operation time complexity in terms of $s$ (the side length of the grid), and overall complexity in terms of $n$ (the number of operations).

Backurs et al. \cite{Backurs2016} gave a reduction from {\sc Max $k$-Clique} to {\sc $k$D Weighted Depth}.
When $k = 3$, {\sc Max $k$-Clique} is equivalent to {\sc Negative Triangle}, which is subcubic equivalent to {\sc APSP} \cite{Williams2010}.
Adapting this reduction with a sweepline implies conditional lower bounds for \problemname{$+$}{$\max$}.

\begin{proposition}[Slightly generalised from \cite{Backurs2016}]
    \label{apsp-lb-plus-max}
    If {\sc Offline} \problemname{$+$}{$\max$} can be solved in amortised $O(s^{1-\epsilon})$ time per update and $O(s^{2-\epsilon})$ time per query, or in $O(n^{3/2-\epsilon})$ time overall, for any $\epsilon > 0$, then the \nameref{conj:apsp} is false.
\end{proposition}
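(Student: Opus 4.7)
The plan is to reduce from {\sc Negative Triangle}, which is sub-cubic equivalent to {\sc APSP}~\cite{Williams2010}. Given a complete tripartite graph with vertex parts $A$, $B$, $C$ each of size $v$ and polynomially-bounded integer edge weights, the task is to decide whether there exist $a\in A$, $b\in B$, $c\in C$ with $w(a,b)+w(b,c)+w(a,c)<0$.

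First I would invoke the construction of Backurs et al.\ reducing to 3D {\sc Weighted Depth}: for each edge $e$, create an axis-aligned 3D box with weight $-w_e$, of the form $\{a\}\times\{b\}\times[v]$, $[v]\times\{b\}\times\{c\}$, or $\{a\}\times[v]\times\{c\}$, depending on which two parts $e$ connects. The sum of weights covering the point $(a,b,c)$ is exactly $-(w(a,b)+w(b,c)+w(a,c))$, so the maximum weighted depth over the box $[v]^3$ is positive iff a negative triangle exists.

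Next I would sweepline over the $C$-coordinate to reduce to a {\sc 2D Grid} \rangeqint{$+$}{$\max$} instance on the grid $[v]\times[v]$ (indexed by $A\times B$). Initially, for each edge $(a,b)\in A\times B$, apply an update adding $-w_{ab}$ to the singleton range $\{a\}\times\{b\}$. Then, for each $c\in C$ in turn: apply updates adding $-w_{ac}$ to the row $\{a\}\times[v]$ for every $a\in A$ and $-w_{bc}$ to the column $[v]\times\{b\}$ for every $b\in B$; query the maximum over the whole grid $[1,v]\times[1,v]$; and then issue the corresponding $+w_{ac}$ and $+w_{bc}$ updates to revert the grid to its pre-sweep state. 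The maximum taken over the $v$ queries is positive iff a negative triangle exists. (Using the complete tripartite graph ensures that cells and lines not explicitly touched correctly contribute $0$ and cannot spuriously beat any legitimate triangle value.)

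Now the accounting. The construction uses $\numupdates = O(v^2)$ updates and $\numqueries = v$ queries on a grid of side length $s = v$, and every operation is determined by the input up-front, so the reduction lives in the {\sc Offline} model and amortized per-operation bounds multiply as expected. Under the hypothesised $O(s^{1-\epsilon})$ amortized per-update and $O(s^{2-\epsilon})$ per-query bounds, the total cost is $O(v^2\cdot v^{1-\epsilon}+v\cdot v^{2-\epsilon})=O(v^{3-\epsilon})$; under the hypothesised overall bound $O(n^{3/2-\epsilon})$ on $n=\Theta(v^2)$ operations, it is $O(v^{3-2\epsilon})$. Either gives a truly sub-cubic algorithm for {\sc Negative Triangle}, falsifying the \nameref{conj:apsp}. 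I do not anticipate a genuinely hard step here: the only care needed is to phrase row, column, and singleton additions as bona fide range updates and to ensure that the query and its undo bracket the max-query cleanly, both of which are mechanical.
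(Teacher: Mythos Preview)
Your proposal is correct and follows essentially the same approach as the paper: both reduce from {\sc Negative Triangle}, initialise the grid with the pairwise $AB$ weights via $O(v^2)$ point updates, then for each choice of the third vertex perform $O(v)$ row/column additions, a single global $\max$ query, and undo, yielding $O(v^2)$ updates and $O(v)$ queries on a grid with $s=v$. The only cosmetic differences are that the paper works directly with $\min$ (the equivalent problem) on a general graph and uses a diagonal $+\infty$ to kill degenerate pairs, whereas your tripartite formulation sidesteps that bookkeeping.
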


\begin{proof}
    We reduce from {\sc Negative Triangle} to the equivalent \problemname{$+$}{$\min$} problem by modifying the reductions of Chan \cite{Chan2008} and Backurs et al. \cite{Backurs2016}.
    Let $G = (V, E)$ be a graph without self-loops, and with weighted edges: for each $ij \not\in E$, we let $w_{ij} = \infty$.

    We start with an $|V| \times |V|$ grid $C$ containing entirely zeros.
    Using $O(|V|^2)$ additions, we can initialise this grid to match the adjacency matrix of $G$.
    That is, the point $(i, j)$ has value $w_{ij}$.
    Then, for each vertex $i$, we add $\infty$ to point $(i, i)$.

    Next, we will iterate through each vertex $j$, in order, to test if $j$ is contained in a negative triangle.
    We iterate through each edge $ij$ in non-increasing order of $w_{ij}$.
    For each, we use two updates to add $w_{ij}$ to all points in row $i$ and column $i$ of $H$.
    At this stage, the point $(i, k)$ is equal to $w_{ik} + w_{ij} + w_{jk}$ if $i$, $j$ and $k$ are distinct vertices, and $\infty$ otherwise.
    Hence, in a single $\min$ query, we are able to determine if there is a negative triangle containing $j$.
    We then undo these row and column updates by subtracting these amounts, in the same order we added them.
    We then proceed to vertex $j+1$.

    In all, we perform $O(|V|^2)$ updates, and $O(|V|)$ queries, completing the proof.
\end{proof}

\noindent
We now establish more general linear per-operation lower bounds for {\sc 2D Grid Range} problems in terms of $s$, based on the OMv Conjecture. 

\begin{lemma}
    \label{omv-gridwise} 
    Suppose $(\mathbb{Z}, +, 0)$ is a monoid\footnote{$(\mathbb{Z}, +, 0)$ is a \emph{monoid} if $(\mathbb{Z}, +)$ is a semigroup, and the identity of $+$ is 0.} (resp. group\footnote{$(\mathbb{Z}, +, 0)$ is a \emph{group} if it is a monoid and $+$ is invertible.}), $(\mathbb{Z}, \cdot)$ is a commutative semigroup such that $0r = r0 = 0$ for all $r \in \mathbb{Z}$ and that there exists $x \in \mathbb{Z}$ such that $0 \in \{ xz, (x + x)z, (x + x + x)z \}$ if and only if $z = 0$.
    Then, \problemname{$+$}{$\cdot$} cannot be solved in worst-case (resp. amortised) $O(s^{1-\epsilon})$ time per update and $O(s^{2-\epsilon})$ time per query, for any $\epsilon > 0$, unless the \nameref{conj:OMv} is false.
    If $(\mathbb{Z}, +, 0)$ is a group, \problemname{$+$}{$\cdot$} also cannot be solved in $O(n^{3/2-\epsilon})$ time overall, for any $\epsilon > 0$, unless the \nameref{conj:OMv} is false.
\end{lemma}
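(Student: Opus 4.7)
The plan is to reduce from OuMv --- which is subcubic-equivalent to OMv by Henzinger et al.~\cite{Henzinger2015} --- to \problemname{$+$}{$\cdot$}, using the distinguished element $x$ from the hypothesis. Given a Boolean matrix $M \in \{0,1\}^{m \times m}$ fixed at preprocessing and online query pairs $(u_i, v_i)$ for $i \in [m]$, I would operate on an $m \times m$ grid ($s = m$) and, during setup, apply $+x$ as a point update at each $(j,k)$ with $M_{jk} = 0$. This $O(m^2)$-operation setup leaves cell $(j,k)$ equal to $x$ when $M_{jk} = 0$ and to $0$ when $M_{jk} = 1$. To handle a query $(u_i, v_i)$, I would issue $+x$ as a row range update on each row $j$ with $u_{i,j} = 0$ and as a column range update on each column $k$ with $v_{i,k} = 0$ ($O(m)$ range updates), and then pose a single $\cdot$-query over the entire grid. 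After these updates cell $(j,k)$ equals $\left([M_{jk}=0] + [u_{i,j}=0] + [v_{i,k}=0]\right) \cdot x$, which is $0$ precisely when $(j,k)$ witnesses $u_i^{\top} M v_i = 1$.

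Extracting the OuMv bit from this single query uses both algebraic hypotheses. Since $0$ is absorbing for $\cdot$, the product over the grid is $0$ whenever some cell is $0$. Conversely, if every cell is non-zero then each cell lies in $\{x,\, x+x,\, x+x+x\}$; the hypothesis says that for any non-zero $z$, each of $xz$, $(x+x)z$, $(x+x+x)z$ is non-zero, so a straightforward induction on the cells shows that the $\cdot$-product of non-zero elements stays non-zero. Hence the query returns $0$ iff $u_i^{\top} M v_i = 1$.

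For the group case I would undo the row and column updates between consecutive queries by adding the additive inverse $-x$, restoring the grid to its post-setup state. Over all $m$ OuMv queries the reduction uses $O(m^2)$ updates and $O(m)$ queries; an amortised $O(s^{1-\epsilon})$-per-update and $O(s^{2-\epsilon})$-per-query bound then yields total time $O(m^2 \cdot m^{1-\epsilon}) + O(m \cdot m^{2-\epsilon}) = O(m^{3-\epsilon})$, contradicting \nameref{conj:OMv}. Substituting $n = O(m^2)$ into an overall $O(n^{3/2-\epsilon})$ bound gives $O(m^{3-2\epsilon})$, likewise contradicting the conjecture.

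The main obstacle is the monoid case, in which additive inverses are unavailable, so the undo step fails and the grid is ``polluted'' after the first query. My plan is to allocate $m$ disjoint copies of the setup inside an enlarged grid and process query $i$ only against its $i$-th copy, eliminating the need to undo. The delicate part is choosing a 2D layout in which the $m$ copies of each $(j,k)$ with $M_{jk}=0$ form a contiguous axis-aligned interval --- so that the encoding of $M$ across all copies can still be carried out in $O(m^2)$ range updates --- while each query's row, column, and grid-query operations remain restricted to its own block of contiguous rows or columns. Threading this layout so that the reduction still issues $O(m^2)$ operations and hits the $O(m^{3-\epsilon})$ budget under only a worst-case per-operation bound (reflecting the weakening from amortised to worst-case in the monoid statement) is where the main technical work lies.
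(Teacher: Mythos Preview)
Your reduction for the group case, the extraction argument via the absorbing property of $0$ and the inductive use of the hypothesis on $x$, and the final arithmetic all match the paper's proof. The paper is in fact terser than you on why the $\cdot$-product of cells in $\{x,\,x+x,\,x+x+x\}$ is nonzero; your induction is the right justification.

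The gap is in the monoid case. Your $m$-copies plan cannot meet its own budget: $m$ pairwise disjoint $m\times m$ blocks need at least $m^3$ grid cells, forcing $s \ge m^{3/2}$, so a worst-case $O(s^{1-\epsilon})$-per-update bound becomes $O(m^{3(1-\epsilon)/2})$ and $O(m^2)$ operations already cost $\Omega(m^{7/2-3\epsilon/2})$, not $O(m^{3-\epsilon'})$. Worse, the layout you ask for is impossible: if each copy is a contiguous axis-aligned block (needed so that per-query row, column, and whole-block operations are single ranges), then for fixed $(j,k)$ the $m$ copies of that cell lie on a translate of the set of block origins, which cannot simultaneously be a contiguous axis-aligned interval unless all blocks share a row or column range---and then they are not disjoint.

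The paper resolves the monoid case without enlarging the grid, using the very feature you flagged: the bound is \emph{worst-case}. Any update running in worst-case $O(s^{1-\epsilon})$ time can modify at most $O(s^{1-\epsilon})$ memory words, so one can journal those modifications and roll them back in the same time. After the single $\cdot$-query, undoing the $O(m)$ row/column updates costs $O(m\cdot s^{1-\epsilon})$, and across $m$ OuMv rounds this is $O(m^{3-\epsilon})$ with $s=m$. This is the missing idea; once you use it, the monoid argument is no harder than the group one.
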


\begin{proof}
    We reduce OuMv to an instance of \problemname{$+$}{$\cdot$} with $s = m$.
    Let $A_{(i, j)}$ denote the value of the point $(i, j)$.
    Initially, each $A_{(i, j)} = 0$.
    In preprocessing, for each $M_{ij} = 0$, add $x$ to $A_{(i, j)}$.
    We now say the data structure is in its \emph{ready state}.

    Let $(u, v)$ denote a pair of input vectors.
    For each $u_i = 0$, add $x$ to the value of all points in row $i$, and for each $v_j = 0$, add $x$ to the value of all points in column $j$.
    Every point now has a value in $\{0, x, x + x, x + x + x\}$.
    Now some point has value $0$ if and only if the answer to the OuMv query is 1, so we establish this with a single range query.
    We then restore the data structure to its ready state, either by keeping a journal of updates (semigroup) or by updating with additive inverses (group).
    The reduction uses $O(m^2)$ updates and $O(m)$ queries, implying the stated conditional lower bounds.
\end{proof}

This gives a $\Omega(n^{3/2-o(1)})$ time conditional lower bound on \problemname{$+$}{$\max$}, matching that of \autoref{apsp-lb-plus-max}.
Through different reductions, we are able to obtain matching lower bounds for the other problems in $\probs'$, under the same conjecture.

\begin{lemma}
    \label{omv-max}
    If any of \problemname{$\max$}{$+$}, \problemname{$\min$}{$+$} or \problemname{$\min$}{$\max$} can be solved in amortised $O(s^{1-\epsilon})$ time per update and $O(s^{2-\epsilon})$ time per query, or in $O(n^{3/2-\epsilon})$ time overall, for some $\epsilon > 0$, then the \nameref{conj:OMv} is false. 
\end{lemma}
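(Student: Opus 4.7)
The plan is to reduce online OuMv to each of the three problems, following the setup of \autoref{omv-gridwise}. In each case I set $s = m$ and preprocess by encoding the input matrix $M$ with $O(m^2)$ single-point updates. For \problemname{$\min$}{$\max$} and \problemname{$\min$}{$+$}, I obtain a ready state with $A_{(i,j)} = 0$ when $M_{ij} = 1$ and $A_{(i,j)} = -1$ when $M_{ij} = 0$, via $1 \times 1$ min updates of value $-1$ at each zero of $M$; for \problemname{$\max$}{$+$}, I instead obtain $A_{(i,j)} = M_{ij}$ via $1 \times 1$ max updates of value $1$ at each one of $M$.

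For an online OuMv query $(u, v)$ with $R = \{i : u_i = 1\}$ and $C = \{j : v_j = 1\}$, I issue $O(m)$ range updates that ``mask out'' the cells outside $R \times C$ with a sentinel value dominating the encoding. For \problemname{$\min$}{$\max$} I min-update each row $i \notin R$ and each column $j \notin C$ with $-2$: cells in $R \times C$ are left in $\{-1, 0\}$ and cells outside fall to $\leq -2$, so a single max over the whole grid is $0$ iff some $(i, j) \in R \times C$ has $M_{ij} = 1$, i.e.\ iff OuMv returns $1$. For the sum-query problems I use an analogous mask with sentinel $N > m^2$ (negative for the $\min$-update problem, positive for the $\max$-update one), so that the sum over the grid decomposes as $N \cdot (\text{mask count}) + \sum_{(i,j) \in R \times C} A_{(i,j)}$, and the OuMv indicator is read off from the low-order part after subtracting a quantity depending only on $|R|$ and $|C|$.

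Once the query is answered, the data structure must be returned to its ready state before the next online query arrives. This is the principal obstacle and the point at which the proof must depart from \autoref{omv-gridwise}: there the update operation was the invertible $+$ of a group, so the $O(m)$ per-query row/column additions could simply be undone by their inverses, whereas here neither $\min$ nor $\max$ admits such an inverse. My plan is to handle this by \emph{layering} the sentinel values across queries (query $t$ uses a strictly deeper sentinel such as $-2^t$ or $2^t$) alongside an auxiliary journal recording which rows and columns have been masked, and to amortise occasional full rebuilds of the ready state across the $m$ online queries so that the total reduction still uses only $O(m^2)$ updates and $O(m)$ queries. Working out this scheduling is where the bulk of the technical effort lies. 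Given those final counts, the assumed amortised $O(s^{1-\epsilon})$ per update and $O(s^{2-\epsilon})$ per query (with $s = m$) solve OuMv in $O(m^{3-\epsilon})$ total time, contradicting the OMv Conjecture; the alternative overall $O(n^{3/2-\epsilon})$ bound gives the same contradiction with $n = O(m^2)$.
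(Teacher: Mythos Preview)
Your plan has the right skeleton and correctly isolates the obstacle, but the proposed restoration mechanism does not fit the budget you claim. A full rebuild re-encodes $M$ cell by cell and costs $\Theta(m^2)$ updates; to keep the total at $O(m^2)$ updates you can afford only $O(1)$ rebuilds across all $m$ OuMv rounds, yet between rebuilds each masking step overwrites cells that later rounds may need, and an external journal of masked rows and columns cannot recover their $M$-values through the data structure's interface. So the ``scheduling'' you defer is not merely technical: no schedule with nontrivial rebuilds works. Pure layering without any restoration can in fact be made to work for the $\max$-query variant \emph{if} you encode $M_{ij}=0$ with a value below every future sentinel (your $-1$ is too shallow, so your stated encoding fails even there), but for the two sum-query variants the grid sum after round $k$ depends on the entire masking history in a way a single $+$ query cannot untangle with polynomially bounded word-size values.

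The paper's fix is a one-update restoration after each round. It maintains the invariant that just before round $k$, every cell with $M_{ij}=1$ sits at exactly $-k+1$ and every cell with $M_{ij}=0$ sits at $-m$. Masking applies $\min_{-k}$ to each row $i$ with $u_i=0$ and each column $j$ with $v_j=0$; surviving $1$-cells in $R\times C$ remain at $-k+1$ while everything else drops to at most $-k$, so a single whole-grid $\max$ query (or a single $+$ query, together with the precomputed count of zero entries of $M$) decides the OuMv bit. Then one global $\min_{-k}$ update pulls the surviving $1$-cells from $-k+1$ down to $-k$, leaves the already-masked $1$-cells at $-k$, and leaves the $0$-cells at $-m$, re-establishing the invariant for round $k+1$ at the cost of a single extra update. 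The total is $O(m^2)$ updates and $O(m)$ queries with no journal and no rebuilds, from which both stated lower bounds follow.
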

\begin{proof}
    We will prove this for \problemname{$\min$}{$+$} and \problemname{$\min$}{$\max$}.
    The result for \problemname{$\max$}{$+$} directly follows from \problemname{$\min$}{$+$}.
   
    We reduce OuMv to an instance of \problemname{$\min$}{$q$}, with $q \in \{+, \max\}$ and $s = m$. 
    Let $A_{(i, j)}$ denote the value of the point $(i, j)$.
    Initially, each $A_{(i, j)} = 0$.
    In preprocessing, for each $M_{ij} = 0$, we perform an update to set the value of $A_{(i, j)}$ to $-m$.
    
    Before handling the $k$-th (1-indexed) pair of query vectors $(u, v)$, we maintain that $A_{(i, j)} = -m$ if $M_{ij} = 0$, and $-k+1$ otherwise.
    For each $u_i = 0$, we perform a $\min$ update with value $-k$ to all points in the $i$-th row.
    We perform similar updates to the $j$-th column if $v_j = 0$.
    The answer to the query is $1$ if and only if there is a point in $A$ with value $-k+1$.
    This can be established with a single $\max$ query, or with a $+$ query if we precompute the number of points where $M_{ij} = 0$.
    We finish the query by performing a single $\min$ update with value $-k$ to all points in $A$. 
    
    Over $m$ pairs of query vectors, our reduction performs $O(m^2)$ updates and $O(m)$ queries, implying the stated conditional lower bounds.
\end{proof}

\begin{lemma}
    \label{omv-set}
    If \problemname{$\setf$}{$+$} or \problemname{$\setf$}{$\max$} can be solved in $O(n^{3/2-\epsilon})$ time overall, for some $\epsilon > 0$, then the \nameref{conj:OMv} is false. 
\end{lemma}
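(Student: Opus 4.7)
The plan is to reduce OuMv on an $m \times m$ boolean matrix $M$ to each of \problemname{$\setf$}{$+$} and \problemname{$\setf$}{$\max$} using $O(m^2)$ total data structure operations, following the template of \autoref{omv-gridwise}, \autoref{apsp-lb-plus-max}, and \autoref{omv-max}. Since OuMv requires $\Omega(m^{3-o(1)})$ total time under the OMv Conjecture, any $O(n^{3/2-\epsilon})$-time data structure would, via this reduction, yield an $O(m^{3-2\epsilon})$-time algorithm for OuMv, contradicting the conjecture.

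The first step would be to encode $M$ into an $m\times m$ grid by $O(m^2)$ point $\setf$ updates, setting cell $(i,j)$ to $M_{ij}$. For each OuMv query $(u_t,v_t)$, the basic per-query routine is to issue $O(m)$ range $\setf$ updates masking every row $i$ with $u_t[i]=0$ and every column $j$ with $v_t[j]=0$ to $0$, followed by one range sum query over the grid; when the column updates are applied after the row updates, a short calculation shows that this sum is exactly $u_t^T M v_t$. For \problemname{$\setf$}{$\max$}, an identical scheme with a sufficiently negative sentinel value in place of $0$ and a range max query instead of sum lets us recognise whether $u_t^T M v_t \ge 1$.

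The principal obstacle is that $\setf$ is neither invertible (unlike the $+$ update used in \autoref{omv-gridwise}) nor monotone (unlike the $\min$ update used in \autoref{omv-max}), so once the per-query masking is applied, the $M$-entries in the affected rows and columns are destroyed for good. Naively re-initialising the grid would cost $\Omega(m^2)$ per query and $\Omega(m^3)$ overall, blowing the budget. To sidestep this, my plan is to work on a larger grid in which the $m$ queries are handled on essentially disjoint ``copies'' of the $M$-encoding, laid out by a coordinate transformation so that (i) all $m$ copies can be initialised simultaneously using $O(m^2)$ row- or column-spanning range $\setf$ updates, rather than separately at cost $O(m^3)$, and (ii) the per-query masking updates and the final sum/max query each remain expressible as axis-aligned rectangular updates on a single copy, so that each query still costs $O(m)$ operations. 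The hard part will be exhibiting such a layout: the tension is that compact encodings of $M$ prefer one cell per entry, while destruction-free handling of all $m$ queries requires enough redundancy that the $\setf$ masking for one query leaves the encoding intact for later ones, and reconciling these demands within the $O(m^2)$ total budget is where most of the technical work will lie.
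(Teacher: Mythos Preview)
Your diagnosis of the obstacle is exactly right, and the high-level strategy—encode $M$ once with $O(m^2)$ range $\setf$ updates, then spend $O(m)$ operations per OuMv query without ever reverting—matches the paper's. But the specific mechanism you sketch has a tension you have not resolved. You ask for $m$ spatial copies of $M$ such that (i) all copies can be initialised with $O(m^2)$ rectangular $\setf$ updates, and (ii) on any single copy, both a row mask and a column mask are single rectangles. These two requirements pull in opposite directions. If the copies are contiguous $m\times m$ blocks, then (ii) holds, but initialising a fixed entry $M_{ij}$ across all $m$ copies writes to a periodic, non-contiguous set of columns, so (i) fails. If instead the copies are interleaved column-wise (copy $t$ occupies columns $t, m{+}t, 2m{+}t,\ldots$), then (i) is easy—set column $c$ to $((c-1)\bmod m)+1$ with $m^2$ column updates, then zero each $1\times m$ strip where $M_{ij}=0$—but now a single copy's row is non-contiguous, so row masking is not a rectangle and (ii) fails. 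I do not see a coordinate transformation that satisfies both, and you have not supplied one.

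The paper takes the interleaved layout on an $m\times m^2$ grid (equivalently, value multiplexing: the $k$th OuMv query uses exactly the columns whose preprocessed value is $m-k+1$) and breaks the impasse by treating $u$ and $v$ \emph{asymmetrically}. Only $v$ is handled by $\setf$ masking: for each $v_j=0$, zero the single relevant column in block $j$. The vector $u$ is handled not by row updates but by issuing one row \emph{query} for each $i$ with $u_i=1$, testing whether the value $m-k+1$ survives in that row (directly for $\max$; via a known-baseline subtraction for $+$). This is still $O(m)$ operations per OuMv query, and the relevant columns for query $k$ are zeroed afterward and never needed again, so no reversion is required. The specific idea your plan is missing is precisely this: abandon the symmetric row/column masking, and replace the row masks by per-row queries.
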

%
\begin{proof}
    We reduce OuMv to an instance of \problemname{$\setf$}{$q$}, with $q \in \{+, \max\}$ and $s = m^2$. 
    We only utilise a $m \times m^2$ area of the grid, and we restrict our updates and queries to this area.
    Let $A_{(i, j)}$ denote the value of the point $(i, j)$.
    In preprocessing, we perform $m^2$ updates so that all points in the $j$-th column have a value of $(j-1 \mod m) + 1$. 
    Each $M_{ij}$ is represented in $A$ by a $1 \times m$ section of points with values $[1, 2, 3, ..., m]$. 
    For each $M_{ij} = 0$, we perform an additional update to set its section of points to 0. 
    
    For the $k$-th (1-indexed) query, consider the columns that were assigned a value of $m-k+1$ during preprocessing. 
    We refer to the $j$-th such column as the \emph{$j$-th relevant column}. 
    We maintain that at the beginning of each query, all points in relevant columns from previous queries have been set to 0. 
    Hence, at the start of the $k$-th query, $A_{(i, j)} \leq m-k+1$ holds for all points $(i, j)$, and $A_{(i, j)} = m-k+1$ holds only for points in relevant columns.
    Let $(u, v)$ denote a pair of input vectors.
    For each $v_j = 0$, we perform an update to set all values in the $j$-th relevant column to 0.
    The answer to the query is $1$ if and only if there is a point with value $m-k+1$ in any row $i$ where $u_i = 1$. 
    This can be established by performing a $\max$ or $+$ query in each such row. 
    After performing these queries, we set the value of the points in all relevant columns to 0.
    
    Over $m$ pairs of query vectors, our reduction performs $O(m^2)$ updates and queries, implying the stated conditional lower bound.
\end{proof}

\noindent
Together, these give conditional lower bounds for each of the single-update variants in $\probs'$.

\begin{corollary} 
    If $q \in \{+, \max\}$ and $u \in \{+, \setf, \min, \max\}$ and $q \neq u$, then \problemname{$u$}{$q$} cannot be solved in worst-case $O(s^{1-\epsilon})$ time per update and $O(s^{2-\epsilon})$ time per query, or in $O(n^{3/2-\epsilon})$ time overall, for some $\epsilon > 0$, unless the \nameref{conj:OMv} is false.
    If $u \neq \setf$, then the lower bounds are amortised rather than worst-case.
\end{corollary}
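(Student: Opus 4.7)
The plan is to dispatch each of the six admissible pairs $(u, q)$ with $q \in \{+, \max\}$, $u \in \{+, \setf, \min, \max\}$, $q \neq u$ by instantiating one of the three preceding OMv-based lemmas. Five of the six pairs fall cleanly under existing statements: \autoref{omv-set} handles $u = \setf$ with either choice of $q$, and \autoref{omv-max} handles $(u, q) \in \{(\max, +), (\min, +), (\min, \max)\}$. The remaining pair $(+, \max)$ requires an OuMv reduction of its own, patterned after \autoref{omv-gridwise}.

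For the sixth pair, I would run the OuMv template with update constant $x = -1$. Concretely, I initialise an $m \times m$ grid to $0$, preprocess by $+$-updating $-1$ at each $(i,j)$ with $M_{ij} = 0$, and for each OuMv query $(u, v)$ apply $+$-updates of $-1$ to every row $i$ with $u_i = 0$ and every column $j$ with $v_j = 0$. Every grid entry now lies in $\{0, -1, -2, -3\}$ and equals $0$ iff $M_{ij} = u_i = v_j = 1$, so a single $\max$ range query over the full grid returns $0$ iff $u^{T} M v = 1$; invertibility of $+$ lets me restore the previous state by adding $+1$ back to the same rows and columns before the next OuMv query. In total the reduction uses $O(m^2)$ updates and $O(m)$ queries against the OuMv time barrier $\Omega(m^{3-o(1)})$, yielding the same per-operation and overall lower bounds as \autoref{omv-gridwise} with $s = \Theta(m)$ and $n = \Theta(m^2)$.

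The amortised-versus-worst-case split of the corollary is inherited from the cited lemmas: the reductions of \autoref{omv-gridwise}, \autoref{omv-max} and the $(+, \max)$ reduction just given all amortise across successive OuMv queries---either by inverting updates ($u = +$) or by threading monotone state ($u \in \{\min, \max\}$)---whereas the \autoref{omv-set} reduction admits no such rollback because $\setf$ is not invertible on non-uniform regions, leaving only worst-case bounds for $u = \setf$. The one subtlety is the final case: \autoref{omv-gridwise} as literally stated requires $0 \cdot r = 0$ for all $r \in \mathbb{Z}$, which $\max$ fails globally, but the reduction only ever touches the non-positive sub-semigroup $(\mathbb{Z}_{\leq 0}, \max)$ on which $0$ is genuinely absorbing, and the absorbing-element argument transfers unchanged.
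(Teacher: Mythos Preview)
Your proposal is correct and matches the paper's approach: the corollary is stated in the paper without proof, as an immediate consequence of \autoref{omv-gridwise}, \autoref{omv-max}, and \autoref{omv-set}, and you have correctly allocated the six $(u,q)$ pairs among these lemmas. Your extra care for $(+,\max)$ is well-placed---the paper asserts right after \autoref{omv-gridwise} that it yields the bound for \problemname{$+$}{$\max$}, and your observation that the reduction lives entirely in the sub-semigroup $(\mathbb{Z}_{\le 0},\max)$, on which $0$ is absorbing, is exactly what makes that instantiation legitimate despite the literal hypothesis failing on all of $\mathbb{Z}$.
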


\noindent

When measuring complexity in terms of $s$, it appears difficult to improve upon the naive solution which maintains a 1D instance for each column of the grid, for these problems.
However, when we measure complexity in terms of $n$, there is a polynomial gap between the $\Omega(n^{3/2-o(1)})$ time lower bound, and the $\TO(n^2)$ time naive algorithm.
Indeed, Chan \cite{Chan2008} gave a $\TO(n^{3/2})$ time solution for {\sc Grid \range{$+$}{$\max$}}.
This might lead one to ask if there exists a general mechanism to adapt $\TO(n)$ time algorithms in one dimension to $\TO(n^{3/2})$ time algorithms in two dimensions, as there is for {\sc Range} problems on a set of $n$ explicitly provided points.
Alas, when we consider variants with two simultaneous types of updates, we can obtain stronger reductions from the $d$-OV and 3SUM Conjectures, suggesting that it is unlikely that such a mechanism exists.

\begin{lemma}
    \label{ovc-lb-plus-max-sum}
    Let $d \geq 1$ be a constant.
    If {\sc Offline Static Trellised $d$D Grid \range{$\{+, \max\}$}{$+$}} or {\sc Offline Static Trellised $d$D Grid \range{$\{+, \min\}$}{$\max$}} can be solved in amortised $O(s^{(d-1)-\epsilon})$ time per update and amortised $O(s^{d-\epsilon})$ time per query, or in $O(n^{d-\epsilon})$ time overall, for any $\epsilon > 0$, then the \hyperref[conj:ovc]{$d$-OV Conjecture} is false.
\end{lemma}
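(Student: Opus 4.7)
The plan is to reduce $d$-OV to each of the two claimed problems. Given sets $A_1, \ldots, A_d \subseteq \{0,1\}^z$, each of $m$ vectors, I would construct an Offline Static Trellised instance on a $d$-dimensional grid of side $s = m$, in which the point $(i_1, \ldots, i_d)$ represents the tuple $(a^{(1)}_{i_1}, \ldots, a^{(d)}_{i_d})$. The goal is to arrange, after all updates, for the value at $(i_1,\ldots,i_d)$ to encode whether $\sum_c \prod_j a^{(j)}_{i_j c} = 0$, so that a small number of final queries reveal whether any tuple is orthogonal.

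For each coordinate $c \in [z]$, I would perform a round of trellised updates as follows. For every dimension $j$ and every position $i$ with $a^{(j)}_{ic} = 1$, I apply a $+X$ slab update that constrains only the $j$-th coordinate to $\{i\}$; here $X$ is a magnitude large enough ($X = z+2$ suffices) to dominate any accumulated per-point counter. After this round of $O(dm)$ slab updates, the point $(i_1,\ldots,i_d)$ has received $k_c \cdot X$, where $k_c := \sum_j a^{(j)}_{i_j c}$ counts how many dimensions have a $1$ at coordinate $c$. Since $k_c = d$ precisely when $c$ is a common $1$ across all dimensions, and the strata $k_c = d$ and $k_c \leq d-1$ are separated by a gap of at least $X - z > 1$ (bounding the counter by $z$), a single $\min$ update (respectively $\max$ update) at a carefully chosen global threshold can pin the cap on exactly the $k_c = d$ cells, after which a uniform $+$ update returns the non-$d$ cells to their previous baseline. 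This effectively increments a per-point counter by $[k_c = d]$ in the \rangeqint{$\{+,\min\}$}{$\max$} direction, and symmetrically adjusts it in the \rangeqint{$\{+,\max\}$}{$+$} direction.

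After all $z$ rounds, the value at $(i_1,\ldots,i_d)$ encodes, up to a known affine shift, the quantity $\sum_c \prod_j a^{(j)}_{i_j c}$, and the tuple is orthogonal iff this value attains its extremal ``never-incremented'' level. For the \rangeqint{$\{+,\min\}$}{$\max$} variant a single $\max$ query over the whole grid suffices. For the \rangeqint{$\{+,\max\}$}{$+$} variant, I would first apply a $\max$ update capping the grid at the extremal level and then compare the resulting $+$ query against the a-priori expected sum; any discrepancy detects an orthogonal tuple. In total the reduction uses $O(dmz) = O(m \cdot \mathrm{poly}(z))$ operations with $s = m$, so an $O(n^{d-\epsilon})$-time algorithm would decide $d$-OV in $O(m^{d-\epsilon} \cdot \mathrm{poly}(z))$ time, contradicting the \nameref{conj:ovc}. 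The amortised per-operation bounds $O(s^{d-1-\epsilon})$ for updates and $O(s^{d-\epsilon})$ for queries translate to the same overall bound, since there are $O(dmz)$ updates and $O(1)$ queries.

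The main obstacle is engineering the sequence of slab adds and global caps so that each round cleanly adjusts the per-point counter by exactly $[k_c = d]$, without corrupting the counter in rounds where $k_c < d$. The key invariant I would maintain is that between rounds, the counter values across all points stay in a narrow range $[0, z]$ that is strictly smaller than the gap $X$ between consecutive $k_c$ strata; this ensures the $\min$/$\max$ cap acts uniformly on the target stratum, and the compensating global $+$ update restores the non-target strata exactly. Verifying this invariant under the interplay between trellised slab adds and global caps, in both the $\min$ and $\max$ directions, is where the bulk of the technical care will lie.
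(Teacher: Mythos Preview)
Your reduction outline is on target, but the per-round bookkeeping has a real gap. After the slab additions place cell $x$ at $t_x + k_c X$ and you apply a global $\min$ (resp.\ $\max$) at a threshold separating the $k_c=d$ stratum, every cell in that stratum is mapped to the \emph{same} constant, so its previous counter value $t_x$ is erased. Undoing the slab additions then leaves all $k_c=d$ cells at a fixed value independent of $t_x$. (Note also that a single ``uniform $+$'' cannot undo the slab adds, since the per-cell contribution $k_c X$ varies with $k_c$; you need the per-slab subtracts.) More structurally, any composition of $+$ and $\min$ (or $+$ and $\max$) acts on a cell as $y\mapsto\min(y+a,b)$ (resp.\ $\max(y+a,b)$), a concave (resp.\ convex) map with a single breakpoint; it cannot realise ``add $1$ on the top stratum, add $0$ below'' while preserving $t_x$ on both strata. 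Hence the final grid value cannot encode the full dot product $\sum_c\prod_j a^{(j)}_{i_jc}$, and your sum-variant detection step (cap, then compare to the a-priori dot-product sum) does not go through as written.

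What does survive is a $0/1$ indicator ``has $k_c$ ever equalled $d$?'', and that already suffices. With $\{+,\min\}$: slab-add $X$, apply $\min_{dX-1}$, slab-subtract $X$; this sends every $k_c=d$ cell to $-1$ and leaves the rest at $t_x$, so a final $\max$ query detects a never-capped (i.e.\ orthogonal) point. With $\{+,\max\}$ the symmetric sequence yields a $\{0,1\}$-valued grid whose sum equals $m^d$ iff no tuple is orthogonal---no extra cap is needed. The paper reaches the same endpoint by a different and slightly cleaner device: it runs a conceptual counter $C_x$ through $0\to k_c(x)\to 0$ each round via unit slab additions and uses Ji's historic-maximum trick (each ``$+c$ to $C$'' is implemented as ``$+(-c)$ then $\max_0$'' on the actual array $D$), so that at the end $D_x=\max_c k_c(x)\in[0,d]$; one then tests whether the grid sum equals $d\,m^d$ (sum variant) or whether the minimum is below $d$ ($\max$ variant). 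Your update/query counts and the resulting $O(m^{d-\epsilon}\,\mathrm{poly}(z))$ conclusion are correct once the counter is replaced by the indicator.
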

\begin{proof}
    We first prove this for {\sc Grid \range{$\{+, \max\}$}{$+$}}.
    We reduce from $d$-OV with $z = \log^2 n$ to an instance of {\sc $d$D Grid \range{$\{+, \max\}$}{$+$}} with $s = m$ and $n = \TO(m)$, over the points $[m]^d$.
    Consider the first entry in each vector.
    Let $v_{ji}$ be the $i$th vector in $A_j$.
    For each $j \in [d]$ and each vector $v_{ji} \in A_j$, using the data structure, add $(v_{ji})_1$ to all points with $x_j = i$.
    Now a point $x = (x_1, \ldots, x_d) \in [m]^d$ has value $d$ if and only if $(v_{jx_j})_1 = 1$ for every $j \in [d]$.
    We then undo these updates by repeating the operations with the negations of the added values, to restore every point to 0.
    We repeat this procedure for each of the $z$ entries in the vectors.

    Now observe that $v_{1x_1} \cdot \ldots \cdot v_{dx_d} = 0$ if and only if $C_x$ never attained a value of $d$ throughout this process.
    Using a trick of Ji \cite{Ji2016} for storing the ``historic maximum'', we can use the data structure to represent an array $D$, such that at all times, $D_x$ stores the difference between $C_x$ and the maximum value attained by $C_x$ thus far.
    We can do this by translating each $+$ update to a range in $C$ to a $+$ and $\max$ update in $D$: whenever $c$ is added to $C_x$, we update $D_x \coloneqq \max(D_x - c, 0)$.
    After all updates to $C$ have been performed, $C$ is entirely zeros, so $D_x$ stores the maximum value attained by $C_x$.
    Thus, we need to check if $D$ contains any value less than $d$.
    Since every value is at most $d$, it suffices to check whether the sum of all points' values is $dm^d$.
    We note that this reduction can be done offline, uses $O(mdz) = \TO(m)$ updates (each of the form $x_j = i$) and a single query spanning the whole grid, giving the required lower bounds.
    
    {\sc Grid \range{$\{+, \min\}$}{$\max$}} is equivalent to {\sc Grid \range{$\{+, \max\}$}{$\min$}} by negating all inputs and outputs, so $D$ can be constructed similarly.
    It then suffices to check if the minimum value in $D$ is less than $d$, which can be done in a single query.
\end{proof}

\begin{lemma}
    \label{3sum-lb-plus-max-sum}
    If {\sc Offline Trellised} \problemname{$\{+, \max\}$}{$+$} can be solved with amortised $O(s^{1-\epsilon})$ time per update and query, or in $O(n^{2-\epsilon})$ time overall, and any $\epsilon > 0$, then the \nameref{conj:3sum} is false.
\end{lemma}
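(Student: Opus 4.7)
The plan is to reduce from the 3-set variant of 3SUM (subquadratic-equivalent to 3SUM): given sets $A, B, C \subseteq \{-m^3, \ldots, m^3\}$ each of size $m$, determine whether some $a \in A$, $b \in B$, $c \in C$ satisfy $a + b = c$. The reduction uses $O(m)$ updates and queries over an offline Trellised instance with $s = O(m)$, so any algorithm running in $O(n^{2-\epsilon})$ total time, or in amortised $O(s^{1-\epsilon})$ time per operation, would yield an $O(m^{2-\epsilon})$ algorithm for 3SUM.

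The core construction maintains three copies of an $m \times m$ grid, indexed by $c' \in \{-1, 0, 1\}$, stacked as horizontal bands in a single $3m \times m$ grid (copy $c'$ occupies rows $[(c'+1)m+1,\,(c'+2)m]$). Sort $C$ as $c^{(1)} \leq c^{(2)} \leq \ldots \leq c^{(m)}$. I will maintain the invariant that, just before the $l$-th round of sum queries, cell $(i,j)$ of copy $c'$ holds $E^{c'}_{ij} = \max(A[i] + B[j] - c^{(l)} + c',\, 0)$. Initialisation for $l=1$ uses $3m$ row updates installing $A[i]$ in row $i$ of each band, $m$ column updates installing $B[j]$ in column $j$ (simultaneously affecting all three bands), one row update per band applying the offset $c' - c^{(1)}$, and one $\max$-with-$0$ update per band. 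Between consecutive rounds I perform one full-grid $+$ update subtracting $\delta_l := c^{(l+1)} - c^{(l)} \geq 0$ and one full-grid $\max$-with-$0$ update; since $\delta_l \geq 0$, a brief case analysis on the sign of $x$ shows $\max(\max(x, 0) - \delta_l,\, 0) = \max(x - \delta_l,\, 0)$, preserving the invariant. All of these operations are Trellised: each one is either a row update (full column extent) or a column update (full row extent) on the $3m \times m$ grid.

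At each round $l$ I issue three sum queries, one per band, yielding $S^{c'}_l = \sum_{i,j} E^{c'}_{ij}$. A short case analysis verifies that, for any integer $x$,
\begin{equation*}
    \max(x+1,\, 0) - 2\max(x,\, 0) + \max(x-1,\, 0) = [x = 0],
\end{equation*}
so
\begin{equation*}
    S^1_l - 2 S^0_l + S^{-1}_l = \bigl|\{(i, j) \in [m]^2 : A[i] + B[j] = c^{(l)}\}\bigr|.
\end{equation*}
The 3-set 3SUM instance therefore has a solution if and only if this quantity is positive for some $l$, which is determined in $O(m)$ post-processing from the $3m$ query answers. Since the whole reduction is fixed in advance and has $n = O(m)$, the stated data-structure bounds contradict the \nameref{conj:3sum}. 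The main delicacy is processing query targets in nondecreasing order of $c^{(l)}$: this monotonicity (together with the algebraic identity above) is what allows each transition between successive auxiliary-grid states to be realised by the two Trellised operations of the data structure, rather than requiring $\Omega(m)$ updates to refresh the $\max$-clipped values.
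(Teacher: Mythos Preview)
Your argument is correct. The identity $\max(x+1,0)-2\max(x,0)+\max(x-1,0)=[x=0]$ holds for all integers $x$, and the monotone transition identity $\max(\max(x,0)-\delta,0)=\max(x-\delta,0)$ for $\delta\ge 0$ is exactly what is needed to carry the invariant from one sorted target $c^{(l)}$ to the next. All updates you list are Trellised on the padded $s\times s$ grid with $s=3m$ (the extra columns receive spurious values, but your band queries over columns $[1,m]$ never touch them), and the total operation count is $O(m)$.

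The paper's proof follows the same high-level script---initialise the grid additively, then sweep a sorted list of thresholds using monotone $\max$ updates so that successive sums encode counts---but implements the counting differently. It works in the one-set formulation on a single $m\times m$ grid, applies $\max_{c-1}$ then $\max_c$ and reads the \emph{first} difference of the two sums to obtain $|\{(i,j):A_{ij}\ge c\}|$, then combines two such counts to get $|\{(i,j):A_{ij}=c\}|$; it must also subtract off contributions from row $i$, column $i$ and the diagonal. Your route trades that exclusion bookkeeping for the three-band construction and the 3-set formulation, using a \emph{second} difference across the bands. Both approaches yield the same $O(m)$ operation count and the same hardness conclusion; yours is arguably cleaner in that no diagonal correction is required, at the cost of the small overhead of tripling $s$.
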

 \begin{proof}
     We reduce from an instance of 3SUM on a sequence $x_1, \ldots, x_m$ of $m$ integers to an instance of \problemname{$\{+, \max\}$}{$+$} with $s = m$.   
 
     Construct an $m \times m$ array $A$, such that $A_{ij} = x_i + x_j$, by using the data structure to add $x_i$ to each row $i$, and add $x_j$ to each column $j$.
     It suffices to check if $x_i$ is present in $A$, excluding row $i$ and column $i$, for any $i \in [m]$.
     This can be done for the whole of $A$ by counting the number of points in $A$ with value at least $x_i$, and value at least $x_i+1$.
     We can count and exclude the number of points in row or column $i$ with value $x_i$, which is double the number of values in the array equal to $0$, subtracting 1 for point $(i, i)$ if $x_i = 0$.
 
     Suppose we would like to count the number of points $A_{\geq c}$ in $A$ with value at least $c$.
     We can do so using a constant number of updates and queries to the data structure.
     First, request an update to replace each value $A_{ij}$ with $\max(A_{ij}, c-1)$, then calculate the new sum $\Sigma_{\geq c-1}$ of $A$.
     We can repeat this to find $\Sigma_{\geq c}$ of $A$.
     All points with original value at most $c-1$ are affected by the second update, and will increase in value by 1 between the first and the second update.
     The remaining points are unaffected by either the first or the second update.
     Thus, $A_{\geq c} = m^2 - (\Sigma_{\geq c} - \Sigma_{\geq c-1})$.
 
     This process reduces the problem to a series of $\max$ updates to $A$, each immediately followed by a sum query over $A$.
     To ensure these updates do not interfere with each other, we perform these update-query pairs in non-decreasing order of update value.
     Noting that this reduction uses $O(m)$ updates and queries completes the proof.
 \end{proof}

A trivial extension of \autoref{apsp-lb-plus-max}, together with the results above imply strong conditional hardness for {\sc Offline} \problemname{$\{+, \max\}$}{$+$}, when complexity is measured per-operation.

\begin{corollary}
    If {\sc Offline \problemname{$\{+, \max\}$}{$+$}} can be solved in amortised $O(s^{1-\epsilon})$ time per update and query, or in $O(n^{3/2-\epsilon})$ overall, for any $\epsilon > 0$, then the \hyperref[conj:apsp]{APSP}, \hyperref[conj:ovc]{2-OV} and \hyperref[conj:3sum]{3SUM} Conjectures are all false. 
\end{corollary}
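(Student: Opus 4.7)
The plan is to combine three separate conditional lower bounds, one per conjecture. For 2-OV and 3SUM, I would invoke \autoref{ovc-lb-plus-max-sum} (with $d = 2$) and \autoref{3sum-lb-plus-max-sum} directly. Both reductions already target Offline (Static) Trellised \problemname{$\{+, \max\}$}{$+$}, which is a special case of Offline \problemname{$\{+, \max\}$}{$+$}; the corollary's hypotheses imply those of each lemma, since the amortised $O(s^{1-\epsilon})$ per-query bound dominates the $O(s^{2-\epsilon})$ bound in each lemma, while $O(n^{3/2-\epsilon}) \subseteq O(n^{2-1/2})$. So an algorithm matching the corollary's hypothesis would refute both conjectures.

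For APSP, I would extend \autoref{apsp-lb-plus-max}. Its reduction from Negative Triangle on $G = (V, E)$ performs $O(|V|^2)$ $+$ updates and $O(|V|)$ min queries against a single $|V| \times |V|$ grid $C$; the answer is ``negative triangle exists'' iff some min query returns a negative value. My plan is to replay the same sequence of $+$ updates against two disjoint $|V| \times |V|$ regions $R_C$ and $R_S$ placed side-by-side in a $|V| \times 2|V|$ grid, applying each $+$ update identically to both regions, and to replace each min query with a single $\max(\cdot, 0)$ update applied to all of $R_S$ (and nothing to $R_C$). At the very end, two $+$ queries extract $\Sigma(R_S)$ and $\Sigma(R_C)$, and I would report ``negative triangle exists'' iff $\Sigma(R_S) > \Sigma(R_C)$.

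The main step is verifying correctness of this encoding, which I expect to be the most delicate part. Between max-update times, $R_{S,x} - R_{C,x}$ is constant, because the $+$ updates are mirrored; the only changes come from the max updates, each contributing $\max(0, -R_{S,x}(t^-)) \geq 0$ at every $x$. Thus $\Sigma(R_S) \geq \Sigma(R_C)$ at all times. For the converse, suppose that at some query time $t_j$ the original reduction would have seen $R_{C,x}(t_j^-) < 0$; take the smallest such $j$ for that fixed $x$, so the prior increments at $x$ are zero and $R_{S,x}(t_j^-) = R_{C,x}(t_j^-) < 0$. Then the max update at $t_j$ contributes a strictly positive increment at $x$, which survives all later operations because subsequent max updates only add further non-negative amounts at $x$. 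Therefore $\Sigma(R_S) > \Sigma(R_C)$ iff $G$ contains a negative triangle. The core subtlety is exactly this monotonicity: showing that the ``signal'' contributed by the first guilty query cannot be erased by later mirrored updates or max updates.

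The resulting reduction uses $O(|V|^2)$ operations on a grid of side $s = O(|V|)$. The hypothesised $O(n^{3/2-\epsilon})$ algorithm would then solve Negative Triangle in $O(|V|^{3-2\epsilon})$ time, while the per-operation bound of amortised $O(s^{1-\epsilon})$ per update and query would yield total time $O(|V|^{3-\epsilon})$; either contradicts the APSP Conjecture.
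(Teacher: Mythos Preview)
Your proposal is correct. For 2-OV and 3SUM you invoke \autoref{ovc-lb-plus-max-sum} (with $d=2$) and \autoref{3sum-lb-plus-max-sum} exactly as the paper does; the containments you note between the corollary's hypotheses and those of the lemmas are fine.

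For APSP, the paper gives no proof beyond calling it ``a trivial extension of \autoref{apsp-lb-plus-max}'', so there is nothing concrete to compare against. Your two-copy construction is a clean and correct way to carry this out. The key obstacle is that the original reduction's $\min$ queries cannot be simulated in place by a $\max_0$ update plus a sum query without irreversibly corrupting the grid for subsequent vertices $j$; your mirrored pair $(R_C,R_S)$ sidesteps this by letting $R_C$ serve as an untouched reference throughout. Your monotonicity argument (that $\delta_x = R_{S,x}-R_{C,x}$ is preserved by mirrored $+$ updates and only increases at $\max_0$ updates, together with the ``first guilty $j$'' induction) is correct in both directions, and the operation count $O(|V|^2)$ on a grid of side $O(|V|)$ yields the stated $O(|V|^{3-\epsilon})$ and $O(|V|^{3-2\epsilon})$ bounds.

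One minor point worth making explicit in a write-up: the $\infty$ placeholders from the original reduction are large finite values, so the final sum comparison $\Sigma(R_S) > \Sigma(R_C)$ is well-defined; these entries are always positive at every $t_j^-$ and hence never contribute to $\delta_x$.
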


Our lower bounds show that a general approach adapting almost-linear one dimensional algorithms for {\sc Grid Range} problems to truly subquadratic solutions for two dimensional instances is unlikely to exist.
However, these results do not preclude the existence of efficient and practical algorithms for specific problems, so we would like to classify which {\sc 2D Grid Range} problems can (and can't) be solved in truly subquadratic time.
To this end, we now describe truly subquadratic algorithms to some of the as-of-yet unclassified problems in $\probs'$, and generalise these to specific subclasses of {\sc 2D Grid Range} problems.

\newcommand{\staticsplit}{\texttt{partitionSlab}}

\section{Solving \texorpdfstring{{\sc Grid Range}}{Grid Range} problems with a Dynamic Partition}
\label{sec:multid-oy}

In this section, we describe an extension of the partition of Overmars and Yap \cite{Overmars1991}, and provide some applications.
Notably, our data structure is fully-online in that it does not require the coordinates in the input to be known during preprocessing.

\subsection{Dynamic Structure}

First, we introduce some terminology to reason about orthogonal objects in $d$ dimensions.

A \emph{box} is a $d$-dimensional (orthogonal) range. 
If $B = [l_1, r_1] \times \ldots \times [l_d, r_d]$ is a box and $i \in [d]$ is a dimension, the $i$-boundaries of $B$ are
\begin{align*}
 &[l_1, r_1] \times \ldots \times [l_{i-1}, r_{i-1}] \times \{ l_i \} \times [l_{i+1}, r_{i+1}] \times \ldots \times [l_d, r_d] \text{ and}\\
 &[l_1, r_1] \times \ldots \times [l_{i-1}, r_{i-1}] \times \{ r_i \} \times [l_{i+1}, r_{i+1}] \times \ldots \times [l_d, r_d].
\end{align*}
The \emph{coordinate} of an $i$-boundary is its only coordinate in the $i$th dimension.
Hence, the coordinates of the 1-boundaries of $B$ are $l_i$ and $r_i$.
If $V$ is a set of boxes, the $i$-boundaries of $V$ are the $i$-boundaries of the boxes in $V$.

For two boxes $R_1 = \prod_{i=1}^d [l_i^1, r_i^1]$ and $R_2 = \prod_{i=1}^d [l_i^2, r_i^2]$, we say that $R_1$ is an $i$-\emph{pile} with respect to $R_2$ if $[l_j^2, r_j^2] \subseteq [l_j^1, r_j^1]$ for all dimensions $j \in [d] \setminus \{i\}$.
Separately, we say that $R_1$ \emph{partially covers} $R_2$ if $\emptyset \subsetneq R_1 \cap R_2 \subsetneq R_2$.
Similarly, $R_1$ \emph{completely covers} $R_2$ if $R_2 \subseteq R_1$.

An $i$-\emph{slab} $(i \in [d])$ is a box of the form $[l_1, r_1] \times \ldots \times [l_i, r_i] \times \mathbb Z^{d-i}$.
We define $\mathbb Z^d$ to be a 0-slab. 
A \emph{partition tree} is a rooted tree where
\begin{enumerate}
    \item All nodes are orthogonal ranges of $\mathbb Z^d$
    \item The root is $\mathbb Z^d$
    \item Every non-leaf node is the disjoint union of its immediate children.
\end{enumerate}
A partition tree is a \emph{level partition tree} if it has depth $d$, and the nodes at depth $i$ ($i \in [0, d]$) are $i$-slabs.
Hence, a node at depth $i$ in a level partition tree is \emph{cut} at a series of coordinates in dimension $i+1$ to form its children. 

We now describe a data structure that maintains a list of boxes, corresponding to updates for a {\sc Grid Range} problem.
It will maintain a level partition tree of the grid so that the only boxes intersecting each leaf node are piles with respect to that leaf.
We will construct this in such a way that the number of children of each node, and number of boxes intersecting each leaf is not too large.
To this end, we describe how to form our level partition tree.
We start with a recursive function which constructs a level partition tree over a given $i$-slab.

%
%

\begin{algorithm}[H]
\DontPrintSemicolon

\SetKwProg{Fn}{function }{}{end}
\SetKwFunction{partitionslab}{partitionSlab}

\Fn{\partitionslab{$i$, $I_1 \times \ldots \times I_i$}}{
    $s \leftarrow I_1 \times \ldots \times I_i \times \mathbb Z^{d-i}$ \;
    \If{$i = d$}{
        \Return $s$
    }
    Let $V'$ be the set of boxes in $V$ which have a $j$-boundary intersecting $s$ for some $j \leq i$ \;
    Let $\{x_1, \ldots, x_b\}$ be the set of coordinates of $(i+1)$-boundaries in $V'$, sorted so that $x_1 < \ldots < x_b$ \;
    Let $x_0 = -\infty$ and $x_{b+1} = \infty$. \;
    \For{$j = 0$ \KwTo $b$}{
        $c \leftarrow$ \partitionslab{$i+1$, $I_1 \times \ldots \times I_i \times [x_j, x_{j+1}]$} \;
        Append $c$ as a child of $s$
    }
    \Return $s$
}

\caption{Partitioning an $i$-slab}
\end{algorithm}

Suppose that $v$ is a box in $V$ and that $s$ is a node returned by \staticsplit{} at level $i$.
Then, there is at most one dimension $j \leq i$ where $v$ has $j$-boundaries intersecting $s$.
In particular, if $s$ is a leaf, then if $v$ intersects $s$, $v$ is a pile with respect to $s$.

\subparagraph*{Inserting a box.}
\label{box-insertion}
Suppose $T$ is a level partition tree over a set $V$ of boxes.
We will describe how to add a box $b$ to $T$, while maintaining this condition.

First, find the (at most) 2 columns which contain the 1-boundaries of $b$.
If either of these columns $C$ contains $2\sqrt{n}$ 1-boundaries of $V \cup \{b\}$, we delete it, and partition it into two disjoint columns $C_1$ and $C_2$, each containing at most $\sqrt{n}$ 1-boundaries of $V \cup \{b\}$.
We construct subtrees by calling \staticsplit{} on $C_1$ and $C_2$, which replace the subtrees of $C$ as children of the root.
Clearly, only $O(\sqrt{n})$ columns are created over the course of the algorithm.


Next, for each $i \in [1, d-1]$, let $S_i$ be the set of slabs $s_i$ in $T$, such that a $j$-boundary of $b$ intersects $s_i$ for some $j \leq i$.
To maintain the condition, we need to cut each $s_i$ at the coordinates of the $(i+1)$-boundaries of $b$. 
This splits (up to) two children of $s_i$ into two new slabs at depth $i+1$: we once again call \staticsplit{} on each of these new slabs and replace the subtree rooted at $s_i$ with the returned trees.

\subparagraph*{Balancing boxes.} 
The final ingredient in our algorithm is a mechanism for bounding the maximum number of children of a node.
To do so, we strategically insert additional boxes, called \emph{balancing boxes}. 
Let $C$ be a column of $T$, $i$ be a dimension and $x$ be a coordinate in dimension $i$.
A $(C, i, x)$-balancing box has the following properties:
\begin{itemize}
    \item The box has both of its 1-boundaries inside the column $C$. 
    \item The $i$-boundaries of the box both have coordinate $x$;
    \item The remaining $j$-boundaries ($j \not\in \{1, i\}$) are chosen arbitrarily.
\end{itemize}
For each dimension $i$, we keep a list $X_i$ of the coordinates of $i$-boundaries of $V$ in non-decreasing order.
For each column $C$, we maintain the invariant that for each consecutive range $X_i[j, j+\sqrt{n}-1]$ of $\sqrt{n}$ $i$-boundaries in $X_i$, there is at least one $(C, i, x)$-balancing box with $x \in [X_i[j], X_i[j+\sqrt{n}-1]]$.
If after the insertion of a box there exists a range where this condition is not met, we add a balancing box with $x$ being the $i$-coordinate in the centre of the range. 
Hence, $O(d\sqrt{n})$ balancing boxes are added for each column, and so $O(n)$ balancing boxes are added overall, since $d$ is a constant.

This concludes the description of the algorithm; we now analyse its time complexity.
First, we make an observation about the impact of balancing boxes.

\begin{lemma}
    \label{balancing-boxes}
    When balancing boxes are used, the set $V'$ described in \textnormal{\staticsplit{}} will be of size $O(\sqrt{n})$ for all nodes in the partition tree.
\end{lemma}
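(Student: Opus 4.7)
My plan is to fix an arbitrary slab $s$ at depth $i$ inside a column $C$, with $j$-range $[l_j, r_j]$ for each $j \le i$, and then partition the boxes of $V'$ according to the smallest dimension $j \le i$ for which they have a $j$-boundary intersecting $s$. Boxes falling into the $j = 1$ case are exactly those with a 1-boundary lying inside $C$; by the column-splitting rule, which keeps each column at $O(\sqrt{n})$ many 1-boundaries at all times, there are at most $O(\sqrt{n})$ such boxes, so this case is immediate.

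For the remaining boxes, I would first reduce the problem to bounding, for each $j \in [2, i]$, the number of $j$-boundaries of $V$ whose $j$-coordinate lies in $[l_j, r_j]$: since each such box in $V'$ contributes a $j$-coordinate to this set, and since $d$ is a constant, an $O(\sqrt{n})$ bound per $j$ will sum to $O(\sqrt{n})$ overall.

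The heart of the argument — and the main obstacle — is showing that $[l_j, r_j]$ contains at most $\sqrt{n}$ entries of $X_j$. Here I would use the balancing-box invariant together with the structure of \staticsplit{}. Let $s'$ be the depth-$(j-1)$ ancestor of $s$; when \staticsplit{} partitioned $s'$ it cut in dimension $j$ at the $j$-coordinates of every $j$-boundary of a box whose $k$-boundary ($k \le j-1$) intersects $s'$. I would argue that every $(C, j, x)$-balancing box $b$ qualifies: its 1-boundaries have 1-coordinates in $C$ (matching $s'$'s 1-range), and we may take $b$ to span all of $\mathbb Z$ in dimensions outside $\{1, j\}$, so its 1-boundary intersects $s'$ in every dimension. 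Hence every balancing-box $j$-coordinate becomes a cut, and $[l_j, r_j]$ cannot contain any such coordinate in its interior.

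Finally, I would combine this with the invariant: if $[l_j, r_j]$ contained $\sqrt{n}$ or more consecutive entries of $X_j$, the invariant would force a $(C, j, x)$-balancing box with $x$ strictly inside this range, contradicting the previous paragraph. Therefore $[l_j, r_j]$ contains at most $\sqrt{n}$ entries of $X_j$, giving $O(\sqrt{n})$ boxes for each $j \in [2, i]$. Summing this case with the $j = 1$ case over the $O(d) = O(1)$ dimensions completes the bound $|V'| = O(\sqrt{n})$; the remaining details are routine.
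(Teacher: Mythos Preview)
Your proposal is correct and follows essentially the same line as the paper's proof: both arguments hinge on the observation that every $(C,j,x)$-balancing box has a $1$-boundary inside $C$, so it belongs to the set $V'$ of the depth-$(j-1)$ ancestor, forcing a cut at $x$; the balancing invariant then caps the number of $j$-coordinates of $X_j$ in each child's $j$-range at $O(\sqrt{n})$, and summing over the $O(d)=O(1)$ dimensions gives $|V'|=O(\sqrt{n})$. The paper's proof is considerably terser --- it states the inductive claim that each child contains $O(\sqrt{n})$ many $(i{+}1)$-boundaries and leaves the per-dimension summation implicit --- whereas you spell out the partition of $V'$ by the minimal witnessing dimension $j$ and treat $j=1$ separately via the column-splitting bound. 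One point you make explicit that the paper glosses over: you stipulate that the ``arbitrary'' coordinates of a $(C,j,x)$-balancing box are taken to span all of $\mathbb{Z}$ in dimensions outside $\{1,j\}$, which is what guarantees its $1$-boundary actually intersects every slab in $C$'s subtree; without this choice the paper's assertion ``a cut is made at the $(i{+}1)$-boundaries of $b\in B_C$'' would not be justified for slabs at depth $\ge 2$.
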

\begin{proof}
    Let $B_C$ be the set of balancing boxes with a column $C$.
    By definition, all $b \in B_C$ have a 1-boundary in $C$.
    Consider all slabs $s_i$ in the subtree of $C$ at a depth $i < d$. 
    Since a cut is made at the $(i+1)$-boundaries of $b \in B_C$, the children of $s_i$ each contain at most $O(\sqrt{n})$ $(i+1)$-boundaries, and so by induction, this is true for all nodes in the subtree rooted at $s_i$.
\end{proof}

We now obtain the following generalisation of the result of Overmars and Yap.

\begin{theorem}
    \label{t-regions}
    There is a data structure that maintains a set $V$ of boxes, and supports $n$ fully-online box insertions to $V$.
    Throughout, it can maintain a level partition tree of $\mathbb Z^d$ whose leaves partition $\mathbb Z^d$ into a set of $O(n^{d/2})$ axis-aligned regions (colloquially ``t-regions'', short for ``trellised-regions'') such that:
    \begin{enumerate}
        \item Every box in $V$ does not intersect, completely covers or is a pile with respect to each t-region;
        \item Each box partially covers $O(n^{(d-1)/2})$ t-regions; 
        \item Each t-region is partially covered by at most $O(\sqrt{n})$ boxes;
        \item Any line parallel to a coordinate axis intersects at most $O(n^{(d-1)/2})$ t-regions; and
        \item A list of the boxes that partially cover each t-region is maintained. 
    \end{enumerate}
    This all can be done in amortised $\TO(n^{(d-1)/2})$ time per insertion.
\end{theorem}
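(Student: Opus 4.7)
I would derive each of the five listed properties from the construction together with Lemma~\ref{balancing-boxes}, then amortise the running time against column rebuilds and the cuts spliced into existing slabs by each insertion.

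For property~1, fix a box $b$ intersecting a leaf $t$ with ancestor chain $\mathbb{Z}^d = s_0 \supset s_1 \supset \cdots \supset s_d = t$. Assume for contradiction that $b$'s range fails to contain $t$'s range in two dimensions $i_1 < i_2$; then $b$ has an $i_1$-boundary inside $t$'s $i_1$-range, which coincides with $s_{i_2-1}$'s $i_1$-range because $i_1 \leq i_2-1$, so $b \in V'_{i_2-1}$. The construction then cuts $s_{i_2-1}$ at $b$'s $i_2$-boundaries, so $t$ sits between two consecutive such cuts; every way of placing an $i_2$-boundary of $b$ inside $t$'s $i_2$-range (strictly inside, or aligned with $t$'s $i_2$-boundary) contradicts the hypothesis that $b$'s $i_2$-range fails to contain $t$'s $i_2$-range. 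Property~4 follows because a line parallel to axis $i$ has a variable coordinate only in dimension $i$ and so resides in a unique child of each slab at every level except the one that cuts in dimension $i$; that single level has $O(\sqrt n)$ branching by Lemma~\ref{balancing-boxes}, giving $O(\sqrt n) \leq O(n^{(d-1)/2})$ leaves. The leaf count for property~1 is $O(\sqrt n)$ columns times $O(n^{(d-1)/2})$ leaves per column subtree, i.e.\ $O(n^{d/2})$ t-regions.

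For property~3, every box $b$ partially covering $t$ is a pile in a unique dimension $i$ and so contributes an $i$-boundary inside $t$. The induction in Lemma~\ref{balancing-boxes} bounds, for each $k \in [d]$, the number of $k$-boundaries of $V$ inside any depth-$\geq k$ slab by $O(\sqrt n)$; summing over $k$ bounds the boxes partially covering $t$ by $O(\sqrt n)$. For property~2, fix $b$ and a dimension $i$; each t-region partially covered by $b$ via dimension $i$ must have one of $b$'s two $i$-boundary hyperplanes in its interior. That hyperplane is $(d-1)$-dimensional and axis-aligned with a fixed coordinate in dimension $i$; iterating the per-level case analysis of property~4 across the $d-1$ free dimensions yields a factor of $O(\sqrt n)$ at each and a single child at the level cutting in dimension $i$, for $O(n^{(d-1)/2})$ leaves per hyperplane. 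Summing over the $2d$ boundaries of $b$ gives property~2.

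For the running time, a call to \staticsplit{} does $\TO(\sqrt n)$ local work per internal node to sort the $O(\sqrt n)$-sized $V'$, so rebuilding a full column subtree of $O(n^{(d-1)/2})$ nodes costs $\TO(n^{d/2})$. A rebuild fires only after $\Theta(\sqrt n)$ new $1$-boundaries accumulate in the column, amortising this to $\TO(n^{(d-1)/2})$ per $1$-boundary insertion. For insertions that do not trigger a rebuild, the $O(1)$ boundaries of $b$ force re-splitting $O(1)$ slabs per level across $d-1$ levels, each of subtree size $O(n^{(d-1)/2})$; this is $\TO(n^{(d-1)/2})$ per insertion. The $O(n)$ balancing boxes added over the structure's lifetime contribute $\TO(n^{(d+1)/2})$ in total, again $\TO(n^{(d-1)/2})$ per insertion. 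Finally, property~5's per-t-region list is refreshed whenever a t-region is created or modified; by properties~2 and~3, each insertion touches $\TO(n^{(d-1)/2})$ (box, t-region) pairs, which fits within the same amortised budget.
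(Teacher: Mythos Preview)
Your arguments for properties 1--5 are sound and follow the paper's reasoning closely (your bound of $O(\sqrt n)$ for property~4 is in fact sharper than the stated $O(n^{(d-1)/2})$, and your hyperplane argument for property~2 is a clean alternative to the paper's level-by-level count).

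The running-time analysis, however, has a real gap for $d\geq 3$. The assertion that inserting $b$ ``forces re-splitting $O(1)$ slabs per level'' is false. Write $S_i$ for the set of depth-$i$ slabs that contain a $j$-boundary of $b$ for some $j\leq i$; these are exactly the slabs whose children must be cut at $b$'s $(i{+}1)$-boundaries. Because the dimension-$j$ range is fixed from depth $j$ downward, \emph{every} child of a slab in $S_i$ inherits the offending $j$-boundary and hence lies in $S_{i+1}$. Thus $|S_{i+1}|$ can absorb all $O(\sqrt n)$ children of each $S_i$-slab, and one gets $|S_i|=O(n^{(i-1)/2})$, not $O(1)$. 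The paper's accounting is that each of these $O(n^{(i-1)/2})$ slabs spawns a \texttt{partitionSlab} call on a depth-$(i{+}1)$ slab costing $\TO(n^{(d-i)/2})$, and the product telescopes to $\TO(n^{(d-1)/2})$ at every level. Your final bound is correct only because you simultaneously undercount the number of split slabs and overstate their subtree sizes (you use $O(n^{(d-1)/2})$ uniformly, whereas a depth-$(i{+}1)$ subtree has only $O(n^{(d-i-1)/2})$ leaves), and the two errors happen to cancel. You should replace the ``$O(1)$ per level'' claim with the growth-and-shrink balancing argument above.
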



\begin{proof}
    It follows from \autoref{balancing-boxes} that in each execution of \staticsplit{}, $|V'|$ $= O(d\sqrt{n})$ $= O(\sqrt{n})$.
    Hence, each recursive call in \staticsplit{} creates $O(\sqrt{n})$ children.
    
    We will now prove the five conditions of \autoref{t-regions} are met. 
    The first condition is true due to the method used to partition slabs. 
    In particular, performing cuts for all boxes in $V$ which have a $j$-boundary intersecting $s$ for some $j \leq i$ (that is, the set $V'$) ensures that only one $j$-boundary of a given box can intersect each t-region.
    
    For the second condition, consider some box $b$.
    For $b$ to partially cover a t-region, it must have an $i$-boundary in the region for some dimension $i$.
    Consider all nodes $s$ in the tree with a depth $i-1$. 
    There are $O(n^{(i-1)/2})$ such nodes $s$, since the degree of all nodes in the tree is bounded by $O(\sqrt{n})$.
    At most 2 children of $s$ contain an $i$-boundary of $b$. 
    There are $O(n^{(d-i)/2})$ leaves in the subtrees of these children.
    Hence, at most $O(n^{(i-1)/2} \times n^{(d-i)/2}) = O(n^{(d-1)/2})$ t-regions are partially covered by $b$ for each dimension $i$.  
    
    The third condition follows from \autoref{balancing-boxes}.
    
    The fourth condition can be proved in a similar way to the second.
    Consider a line parallel to the $i$-axis. 
    For each node at a depth $i-1$ in the partition tree, the line can only intersect one of its children.
    
    For the fifth condition, each time a new t-region is created during \staticsplit{}, we create a list of the boxes which partially cover it.
    Additionally, each time a new box $b$ is inserted we append it to the lists of all pre-existing t-regions that it partially covers. 
    Note that by the second condition, $b$ is added to $O(n^{(d-1)/2})$ lists. 
    This can be done in $O(n^{(d-1)/2})$ time by iterating over all nodes at a depth of $d-1$ in the tree, and finding which of their children are partially covered by $b$.
    
    We will now consider the time complexity of \staticsplit{}.
    Since the returned tree has depth $d-i$, there are at most $O(n^{(d-i)/2})$ leaves in the partition.
    In each call to \staticsplit{}, we can find the set $V'$ in $\TO(|V'|) = \TO(\sqrt{n})$ time by keeping a sorted list of $i$-boundaries for each dimension $i$.
    We can create the list of boxes which cover a t-region in the same way.   
    Hence, \staticsplit{} on an $i$-slab has a time complexity of $O(n^{(d-i+1)/2})$.
    
    Now consider the complexity of producing the partition. 
    Across all $n$ insertions, $O(\sqrt{n})$ columns are created, each in $\TO(n^{d/2})$ time.
    This takes $\TO(n^{(d+1)/2})$ time overall and hence amortised $\TO(n^{(d-1)/2})$ time per insertion.
    
    When a box is inserted, up to two columns are split.
    Additionally, let $S$ be the union of the sets of slabs $S_i$ described in the \hyperref[box-insertion]{\emph{inserting a box}} paragraph.
    We will now consider the contribution of splits made to slabs in $S$.
    If a slab is not in $S$, at most 2 of its children are in $S$.
    Hence, for each dimension $i$, there are at most $O(n^{(i-1)/2})$ slabs that are split.
    The overall cost of calling \staticsplit{} on these slabs is $O(n^{(d-1)/2})$.
\end{proof}

Using this structure, we can reduce {\sc Grid Range} problems to {\sc Trellised Grid Range} problems when the update operation distributes over the query operation.

\begin{theorem}
    \label{reduce-to-trellised}
    Suppose $q$ and $u$ are associative operations, both computable in $O(1)$ time, such that $q$ is commutative, $u$ distributes over $q$, and 0 is an identity of $u$.
    If {\sc Trellised $d$D Grid \range{$u$}{$q$}} can be solved in $\TO(n)$ time, then {\sc Grid \range{$u$}{$q$}} can be solved in $\TO(n^{(d+1)/2})$ time.
\end{theorem}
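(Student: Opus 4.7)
The plan is to reduce {\sc Grid \range{$u$}{$q$}} to the trellised version via the dynamic partition of \autoref{t-regions}. I would treat every operation's box --- update or query --- as a box inserted into the partition, so that at all times we maintain an $O(n^{d/2})$-size collection of t-regions, each seeing every box in $V$ as disjoint from, completely covering, or a pile with respect to it. With each current t-region $R$ I would associate two pieces of state: an online trellised $d$D \range{$u$}{$q$} subinstance $D_R$, whose inputs are precisely the slab-shaped restrictions of pile-updates to $R$, together with a ``lazy'' function $\ell_R$ obtained by composing every update function whose box has completely covered $R$. Because $u$ distributes over $q$ and $0$ is an identity of $u$, the true $q$-aggregate over any sub-box $Q \cap R$ equals $\ell_R$ applied to the corresponding trellised range-query on $D_R$, in the spirit of \autoref{lazyprop}.

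Given this setup, the per-operation logic is straightforward. For an update box $B$, after paying the amortized $\TO(n^{(d-1)/2})$ partition-maintenance cost to insert $B$, I would iterate over the $O(n^{(d-1)/2})$ t-regions that $B$ touches (conditions 2 and 3 of \autoref{t-regions}), composing $B$'s update into $\ell_R$ at every complete cover and pushing the slab-restriction of $B$ into $D_R$ as a trellised update at every pile. A query box $Q$ is handled dually: after its insertion, $q$ combines the contributions $\ell_R$ applied to the global aggregate of $D_R$ from the t-regions $Q$ completely covers with $\ell_R$ applied to a trellised range query on $D_R$ from the t-regions where $Q$ is a pile. Commutativity of $q$ lets these contributions be aggregated in any order.

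The main obstacle will be keeping the subinstances coherent as the partition refines, since when an insertion splits a t-region $R$ into children $R_1, \ldots, R_m$, each $R_i$ needs a fresh subinstance containing all piles that partially cover it. I plan to (re-)initialize each $D_{R_i}$ from scratch by running the hypothesized online trellised subroutine over the per-region list of partially covering boxes guaranteed by condition 5 of \autoref{t-regions}, inheriting $\ell_R$ unchanged into each $\ell_{R_i}$. The delicate step of the analysis is then the amortized bookkeeping --- I would show that the total number of trellised operations ever processed across all subinstances is $\TO(n^{(d+1)/2})$, by charging each (pile, t-region) incidence that is either newly created by an operation or re-materialised by a split against the $\TO(n^{(d+1)/2})$ partition-maintenance budget from \autoref{t-regions} and the $O(\sqrt{n})$ bound on each covering-box list. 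Adding this cost to the partition maintenance itself, and invoking the hypothesised $\TO(k)$-time trellised subroutine, then yields the claimed $\TO(n^{(d+1)/2})$ overall.
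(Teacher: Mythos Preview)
Your main gap is in the treatment of complete covers. You claim that an operation box $B$ touches only $O(n^{(d-1)/2})$ t-regions, citing conditions 2 and 3 of \autoref{t-regions}, but those conditions bound only the number of t-regions that $B$ \emph{partially} covers. A box can completely cover $\Theta(n^{d/2})$ t-regions --- for instance, an update over the whole grid covers all of them --- so iterating over every completely-covered t-region to compose the update into its $\ell_R$ (and, dually, aggregating from every such t-region for a query) costs $\Theta(n^{d/2})$ per operation and $\Theta(n^{d/2+1})$ overall, not $\TO(n^{(d+1)/2})$.

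The paper sidesteps this by exploiting the level partition tree rather than storing $\ell_R$ explicitly at each leaf. For each operation it iterates over all $O(n^{(d-1)/2})$ $(d{-}1)$-slabs; within each slab it keeps a 1D lazy-propagation structure (in the sense of \autoref{lazyprop}) over that slab's $O(\sqrt n)$ t-region children, whose ``array values'' are the $q$-aggregates reported by the corresponding trellised instances. An update that completely covers a contiguous range of those children is then a single $\TO(1)$ lazy update to this 1D structure; only the children that the box partially covers get forwarded to their trellised instances, and queries are handled symmetrically. Thus complete covers are never pushed down to individual t-regions. Your per-leaf $\ell_R$ idea is in the right spirit, but to make the complexity work you need precisely such an intermediate lazy layer so that composing into all completely-covered leaves can be done without visiting each one.
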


\begin{proof}
    We use the structure given by a dynamic level partition tree $J$ from \autoref{t-regions}.
    Recall that there are $O(n^{(d-1)/2})$ $(d-1)$-slabs in $J$, and each is a parent of $O(\sqrt{n})$ t-regions, which are leaves of $J$. 
    For each $(d-1)$-slab, we maintain a data structure supporting range updates and queries within the slab.
    Conceptually, we maintain a 1D array over its children, in order of their $d$-coordinate, with each array entry containing a {\sc Trellised} instance for the corresponding t-region.
    We support operations whose ranges completely cover multiple children by using a modified version of the {\sc 1D Grid \range{$u$}{$q$}} structure (see \autoref{lazyprop}), treating the $q$-value of entire regions as the array values.
    Operations affecting only part of a child are handled by the corresponding {\sc Trellised} instance.
    In particular, when a box corresponding to an update (query) is inserted into the list of a t-region, we perform an update (query) on that t-region.

    Operations are then performed by iterating over all $(d-1)$-slabs, and updating and querying the respective data structures, as necessary.
    Our data structure is maintained in such a way that over its lifetime, there will be $O(n^{d/2})$ {\sc Trellised} instances, each facilitating $O(\sqrt{n})$ operations, giving the required time complexity.
\end{proof}

\subsection{Applications}

We apply \autoref{reduce-to-trellised} to give $\TO(n^{(d+1)/2})$ time algorithms for particular problems.

First, we show that {\sc Trellised} variants can be solved as separate one dimensional instances, when the conditions of \autoref{reduce-to-trellised} are met and $u$ is also \emph{commutative}.

\begin{lemma}
    \label{semiring-trellised}
    Suppose $q$ and $u$ are associative, commutative binary operations, computable in $O(1)$ time, such that $u$ distributes over $q$, and 0 is an identity of $u$.
    Then {\sc Trellised Grid \range{$u$}{$q$}} is solvable in $\TO(n)$ time.
\end{lemma}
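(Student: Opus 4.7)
The plan is to exploit the one-dimensional nature of each trellised update. Since every update affects exactly all points whose coordinate along one specific axis lies in a given interval, for each dimension $i \in [d]$ I would maintain a single one-dimensional array $A_i$ in which $A_i[x_i]$ is the $u$-combination of the constants of all dimension-$i$ updates whose interval contains $x_i$; initially, every entry is $0$, matching that $0$ is the identity of $u$. Using commutativity and associativity of $u$, a straightforward induction on the sequence of updates shows that the value stored at any point $(x_1, \ldots, x_d)$ equals $u(A_1[x_1], \ldots, A_d[x_d])$, so the full $d$-dimensional state decouples into $d$ one-dimensional arrays.

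The heart of the argument is the factorization identity: if $Q_i \coloneqq q_{x_i \in [l_i, r_i]} A_i[x_i]$, then the answer to a query on $[l_1, r_1] \times \cdots \times [l_d, r_d]$ equals $u(Q_1, \ldots, Q_d)$. I would prove this by induction on $d$. For the inductive step, I would peel off the innermost sum over $x_d$: since each $A_i[x_i]$ with $i < d$ is independent of $x_d$, one application of distributivity of $u$ over $q$ yields $q_{x_d} u(\ldots, A_d[x_d]) = u(\ldots, Q_d)$ with the other coordinates held fixed. Applying the inductive hypothesis to the remaining $(d-1)$-dimensional sum, treating $Q_d$ as a constant $u$-argument, gives the result; commutativity of $q$ justifies reordering the sums.

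With this identity the algorithm is immediate: maintain $d$ independent 1D {\sc Grid \range{$u$}{$q$}} data structures, one per dimension. A trellised update in dimension $i^\ast$ becomes a single range $u$-update on the $i^\ast$-th structure; a query issues one range-$q$ query on each of the $d$ structures and combines the results with $u$. The family $U = \{u_c : c \in \mathbb{Z}\}$ is closed under composition since $u_{c_1} \circ u_{c_2} = u_{u(c_1, c_2)}$, and $u$ distributes over $q$ by hypothesis, so each 1D structure satisfies the hypotheses of \autoref{lazyprop} and supports every operation in $\TO(1)$ time. As $d$ is constant and each of the $n$ operations touches $O(d)$ structures, the total cost is $\TO(n)$. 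The only step requiring genuine care is the factorization identity; once that is in place, everything else reduces to a black-box invocation of \autoref{lazyprop}.
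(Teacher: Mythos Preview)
Your proposal is correct and follows essentially the same approach as the paper: decouple the $d$-dimensional state into $d$ one-dimensional arrays (the paper calls these $U_i$), establish the factorization identity via distributivity, and reduce to $d$ independent {\sc 1D Grid \range{$u$}{$q$}} instances solved by \autoref{lazyprop}. The paper states the factorization identity in a single displayed equation (for the illustrative case $q=\max$, $u=+$) rather than spelling out the induction on $d$, but the content is the same.
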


\begin{proof}
    For notational convenience we provide a proof for the case where $q$ is $\max$ and $u$ is $+$: other operations are proven identically.
    By the definition of {\sc Trellised}, we can associate every update with a dimension $i$ such that the update range is an $i$-pile with respect to $\mathbb Z^d$; we call this an \emph{$i$-update} for short.
    At any given point in time, let $U_i(x)$ be the sum (with respect to $u$) of all $i$-updates with coordinate $x$ in dimension $i$.

    Now consider a query over the range $R = [l_1, r_1]\times\ldots\times[l_d, r_d]$. 
    The answer to the query can be written as
    \[
        \max_{(x_1, \ldots, x_d) \in R} \sum_{i \in [d]} U_i(x_i)
        =
        \sum_{i \in [d]} \max_{(x_1, \ldots, x_d) \in R} U_i(x_i)
        =
        \sum_{i \in [d]} \max_{x_i \in [l_i, r_i]} U_i(x_i)
    \]
    by distributivity.
    Hence, we can reduce to $d$ instances of {\sc 1D \range{$u$}{$q$}}, which each can be solved in $\TO(n)$ time, by \autoref{lazyprop}.
\end{proof}

\noindent
This proves \autoref{thmt@@conditionsoy}, giving efficient fully-online algorithms in these cases.

\begin{corollary}
    {\sc Grid \rangeqint{$+$}{$\max$}} and {\sc Grid \rangeqint{$\min$}{$\max$}} can be solved in $\TO(n^{(d+1)/2})$ time.
\end{corollary}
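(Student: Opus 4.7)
My plan is to directly chain Lemma~\ref{semiring-trellised} and Theorem~\ref{reduce-to-trellised}: the lemma handles {\sc Trellised} instances in $\TO(n)$ time, and the theorem lifts any such bound to $\TO(n^{(d+1)/2})$ for the general (non-trellised) {\sc Grid Range} problem. Since both results share the same algebraic preconditions, the whole task reduces to verifying those preconditions for the two pairs $(q,u) \in \{(\max, +),\, (\max, \min)\}$.

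For {\sc Grid \rangeqint{$+$}{$\max$}}, everything is textbook: $\max$ and $+$ are associative, commutative and $O(1)$-computable; $+$ distributes over $\max$ via $a + \max(b,c) = \max(a+b,\, a+c)$; and $0$ is the additive identity. So Lemma~\ref{semiring-trellised} applies and Theorem~\ref{reduce-to-trellised} then gives the desired $\TO(n^{(d+1)/2})$ bound.

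For {\sc Grid \rangeqint{$\min$}{$\max$}}, associativity, commutativity, $O(1)$-computability, and the tropical distributivity $\min(a, \max(b,c)) = \max(\min(a,b),\, \min(a,c))$ are immediate. The only non-routine check — and the main obstacle to a pure ``plug-and-play'' argument — is that $0$ is not literally the identity of $\min$ over all of $\mathbb{Z}$. However, every value $v_i$ starts at $0$ and every $\min$-update can only decrease it, so the reachable value set lies in $(-\infty, 0]$, on which $\min(0, x) = x$. This is exactly where the proof of Lemma~\ref{semiring-trellised} uses the identity hypothesis: to interpret the empty $u$-sum appearing in $U_i(x)$ (when no $i$-update has coordinate $x$) as the initial value $0$. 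Hence its proof goes through unchanged in the $\min$ case.

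Combining the two results for each pair yields $\TO(n^{(d+1)/2})$ time for both problems, completing the corollary. The only part that required actual thought was the $\min$-identity observation; once that is in place, the statement follows from the previously established machinery without further work.
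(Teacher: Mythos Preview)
Your proof is correct and follows exactly the approach the paper intends: the corollary is stated without proof immediately after Lemma~\ref{semiring-trellised} as a direct instantiation of that lemma together with Theorem~\ref{reduce-to-trellised} (equivalently, of Theorem~\ref{conditions-for-overmars-yap}). Your explicit treatment of the $\min$-identity issue---observing that all values remain in $(-\infty,0]$ so that $0$ acts as an identity for $\min$ on the reachable set---fills in a detail the paper leaves implicit, but the overall argument is the same.
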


\noindent
There also exists some instances where $u$ is not commutative for which an $\TO(n)$ solution to {\sc Trellised Grid \rangeqint{$u$}{$q$}} exists, giving us algorithms with the same time complexity.
\begin{lemma}
    \label{(max, set) trellised}
    {\sc Trellised Grid \rangeqint{$\setf$}{$\max$}} can be solved in $\TO(n)$ in $d$ dimensions.
\end{lemma}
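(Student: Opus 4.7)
The plan is to reduce the $d$-dimensional trellised instance to $d$ independent 1D threshold-max subproblems. For each dimension $i$, let $T_i(x_i)$ denote the time of the most recent dim-$i$ update whose range contains $x_i$, and let $c_i(x_i)$ denote its value; both $T_i$ and $c_i$ are 1D arrays evolving only under range-$\setf$ operations induced by the dim-$i$ updates. Since the updates are trellised, the current value at a grid point $(x_1, \ldots, x_d)$ equals $c_{i^\ast}(x_{i^\ast})$, where $i^\ast \in [d]$ is any index attaining $\max_j T_j(x_j)$. Given a query on $R = \prod_i [l_i, r_i]$, set $m_j := \min_{x_j \in [l_j, r_j]} T_j(x_j)$ and $\tau_i := \max_{j \neq i} m_j$. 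Partitioning the points of $R$ according to which dimension supplies the dominating timestamp shows that the answer equals $\max_i A_i$, with
\[
A_i \;=\; \max\{c_i(x_i) : x_i \in [l_i, r_i],\ T_i(x_i) > \tau_i\}.
\]
Each $m_j$ can be read off from a standard segment tree on $T_j$ with $\setf$-lazy propagation in $O(\log n)$ per operation, since $\setf$ distributes over $\min$ (Proposition 2.1).

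For the 1D threshold-max subproblem I would maintain the maximal runs of constant $(T, c)$ in a Chtholly-style interval container: a single $\setf$ update destroys all intervals strictly inside $[l, r]$, splits at most two boundary intervals, and inserts one new interval, giving $O(n)$ interval-events across all $n$ updates (amortised $O(\log n)$ per update). Since the intervals are disjoint and together cover the axis, those intersecting $[l, r]$ form a contiguous block under the left-endpoint order. I plan to hold the intervals in a weight-balanced BST keyed by left endpoint, augmenting each subtree with the sorted-by-$T$ list of its intervals' $(T, c)$ pairs, annotated with a $T$-suffix maximum of $c$. A threshold-max query decomposes the target block into $O(\log n)$ subtrees and handles each by binary search in $O(\log n)$; insertions and deletions touch $O(\log n)$ ancestor augmentations at $O(\log n)$ per splice, and standard partial-rebuilding analysis keeps the amortised cost at $O(\log^2 n)$ per structural event. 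Summing over the $O(n)$ structural events and $O(n)$ queries gives $\TO(n)$ overall, and the $d$ independent copies contribute only a constant factor.

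I expect the 1D subproblem to be the main obstacle. Lazy propagation alone cannot answer the conditional max because the query couples $T$ and $c$; and a straightforward segment-tree-beats descent on $T$ against a per-query threshold $\tau$ admits no obvious polylog amortised bound (a query can be forced to visit $\Omega(n)$ nodes when the $T$ values are pairwise distinct and $\tau$ falls between them at every level of the tree). The Chtholly viewpoint sidesteps this by bounding the total number of ``live'' intervals by $O(n)$ across all updates, so the augmented BBST only ever processes structural events and never re-examines overwritten cells.
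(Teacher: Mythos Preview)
Your approach is correct and essentially the same as the paper's: both observe that the value at a point is set by whichever dimension carries the latest timestamp, compute the per-dimension minimum timestamps over the query range to obtain a visibility threshold, maintain the Chtholly-style runs, and answer the resulting ``max $c$ among intervals with time above threshold in a position range'' via a 2D (position, time) orthogonal range-max structure. The paper's presentation is a bit leaner---it uses a single global threshold $t=\max_i m_i$ (equivalent to your $\tau_i$ once timestamps are distinct, since a dimension-$i$ timestamp can never equal $m_j$ for $j\neq i$) and simply cites an off-the-shelf 2D segment tree with point updates and range-max queries rather than hand-building the augmented weight-balanced BST, but the underlying algorithm is the same.
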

\begin{proof}
    Consider only $i$-updates for some dimension $i$. 
    For each $i$-coordinate, we will store the latest $\setf$ update which has been applied to it.
    We compress this by storing a structure $R_i$ of non-intersecting segments, where each segment corresponds to contiguous $i$-coordinates which have had the same latest update applied to them. 
    We can maintain this as new updates occur in $\TO(n)$ across all updates.
    
    Consider a query on a box $b$.
    For each dimension $i$, there will be a range of segments in $R_i$ which lie within or intersect $b$.
    Let $t_i$ be the earliest update time of all such segments.
    Let $t$ by the maximum among all $t_i$.
    
    Consider a segment in $R_i$ (for any dimension $i$) which lies within or intersects $b$.
    Its update is visible at some point in $b$ if and only if its update occurred at or after time $t$.
    The answer to the query is the maximum value among all such segments.
    This can be computed in $\TO(1)$ using 2D segment trees that support point updates and range maximum queries. 
\end{proof}
\begin{corollary}
    \label{(max, set)}
    {\sc Grid \rangeqint{$\setf$}{$\max$}} can be solved in $\TO(n^{(d+1)/2})$ time.
\end{corollary}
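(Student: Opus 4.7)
The plan is to obtain the corollary as an immediate composition of \autoref{(max, set) trellised} and \autoref{reduce-to-trellised}. The former provides an $\TO(n)$-time algorithm for the trellised $d$-dimensional variant; the latter bootstraps any such algorithm into an $\TO(n^{(d+1)/2})$-time algorithm for the unrestricted variant, which is exactly the claimed bound.

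To apply \autoref{reduce-to-trellised} with $q = \max$ and $u = \setf$, I would first verify its algebraic preconditions. Both operations are associative and $O(1)$-computable, and $\max$ is commutative. For distributivity, $\setf_c(\max(a,b)) = c = \max(c,c) = \max(\setf_c(a), \setf_c(b))$, which covers every $\setf$-update used in the input. The one subtle hypothesis is that $0$ be an identity of $u$: since $\setf_0(x) = 0 \neq x$ in general, no $\setf_c$ is the identity function on values. I would resolve this by extending the label set of pending updates used inside the proof of \autoref{reduce-to-trellised} with a symbolic ``no-op'' $\bot$, representing ``nothing has been lazily propagated at this node yet'', and composing as $\setf_c \circ \bot = \bot \circ \setf_c = \setf_c$. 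Every covering node starts in state $\bot$; otherwise the lazy-propagation machinery runs unchanged, and distributivity still makes the $q$-value invariant go through with $O(1)$ composition work.

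The only genuine obstacle is this sentinel extension, and it is entirely local to the lazy-propagation layer of the theorem: the rest of the argument in \autoref{reduce-to-trellised} (the dynamic partition from \autoref{t-regions}, the amortised $\TO(n^{(d-1)/2})$ maintenance cost, and the distribution of each operation between the outer 1D lazy-propagation structure and the $O(\sqrt{n})$ relevant t-region instances) is insensitive to the change. With this in place, plugging the $\TO(n)$-time trellised algorithm of \autoref{(max, set) trellised} into the reduction yields the claimed $\TO(n^{(d+1)/2})$-time algorithm for {\sc Grid \rangeqint{$\setf$}{$\max$}} in $d$ dimensions.
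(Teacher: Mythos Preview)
Your approach is correct and matches the paper's: the corollary is stated without a separate proof, being obtained by plugging \autoref{(max, set) trellised} into \autoref{reduce-to-trellised}. You are in fact more careful than the paper here, since you flag that the hypothesis ``$0$ is an identity of $u$'' fails for $\setf$ and supply the standard sentinel fix; the paper silently applies the reduction (the identity hypothesis is only used to initialise the lazy tags in the underlying \autoref{lazyprop} structure, and the no-op sentinel is the implicit fix there as well).
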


\section{Reducing to \texorpdfstring{{\sc Static Trellised}}{Static Trellised} instances}
\label{sec:minmax}
The technique from the previous section does not work on all variants.
For instance, consider the \problemname{$\{\min, \max\}$}{$\max$} problem.
While the update operations ($\min$ and $\max$) are individually associative, commutative and distribute over the query operation ($\max$), they do not commute with each other.
Given that the order of operations matters greatly, it is difficult to decompose this into separate one dimensional problems.
As a means for solving such problems, in this section we present a general framework for reducing multidimensional range problems to {\sc Static Trellised} instances.

\subsection{A sketch in two dimensions}
\label{subsec:sketch}

We will first give a high-level overview of our reduction framework, using \problemname{$\{\min, \max\}$}{$\max$} as an example.
For simplicity, we assume these instances are {\sc Offline} and in two dimensions.
We formalise this framework in \autoref{subsec:formal}, also generalising our results to {\sc Online} and higher dimensional settings, accompanied by full details and proofs.

\subparagraph*{General approach.}
Our algorithm operates by partitioning operations into chronologically contiguous batches of at most $k$ operations, for some function $k$ of $n$.
Each batch may contain both updates and queries.
Let $B$ be a particular batch, and let $G_B$ be the state of the grid at the start of $B$.
We will show how to answer the queries within $B$.

The $\bar{k} = O(k)$ coordinates of $B$'s operations partition the grid into a $\bar{k} \times \bar{k}$ \emph{overlay grid} of \emph{overlay regions}.
Any update or query will concern all points that fall within a 2D range of whole overlay regions.
Since the update operations $\min_c$ and $\max_c$ are monotonically increasing functions for any constant $c$, it suffices to know the maximum value within each overlay region according to $G_B$, to answer the queries of $B$.
We use these maximums as initial values for a \problemname{$\{\min, \max\}$}{$\max$} instance over the overlay grid, which we solve by keeping $\bar{k}$ {\sc 1D Range} instances, one for each column, and facilitating operations in $\TO(k)$ time, by iterating over each one.

It remains to find the maximum value within each overlay region, according to $G_B$.
To do so, consider an alternate partition of the grid into t-regions according to \autoref{t-regions}.
In each t-region $Y$, we will form an instance of {\sc Static Trellised \problemname{$\{\min, \max\}$}{$\max$}} with $O(\sqrt{n})$ updates; we describe how to do so below.
Then, to find the maximum value within an overlay region $O$, we issue queries to the instances corresponding to the t-regions intersecting $O$.

\subparagraph*{Forming {\sc Static Trellised} instances.}
By construction, each t-region $Y$ is affected by up to $n$ \emph{whole updates} which completely cover $Y$, and up to $\sqrt{n}$ \emph{partial updates} which cover all points in a range of rows or range of columns of $Y$.
We can afford to include each partial update in our instance, but need to find a succinct way to represent whole updates.

To do so, we construct a segment tree $T$ over time.
Both $\min$ and $\max$ operations can be written as functions of the form $f_{a, b}(x) = \min(a, \max(b, x))$, which are closed under composition.
For each range of time $[t_1, t_2]$ in the tree, we will build a data structure which, given a t-region $Y$, returns values $a$ and $b$ such that the composition of whole updates to $Y$ occurring during $[t_1, t_2]$ is $f_{a, b}$.
Such a data structure can be built by performing lazy updates over the t-region tree.
Each update affects $O(\sqrt{n})$ nodes in the t-region tree, and each whole update is present in $O(\log n)$ ranges of $T$.
$T$ can then return an $f_{a, b}$ for arbitrary ranges of time and t-regions.

Each t-region $Y$ has $\sqrt{n} + 1$ ranges of time before and after each of its partial updates.
For each of these, we query $T$ and translate the returned $f_{a, b}$ into a $\min$ and $\max$ update spanning all of $Y$.
Hence, we have formed an instance of {\sc Static Trellised \problemname{$\{\min, \max\}$}{$\max$}} with $O(\sqrt{n})$ updates, as required.
This concludes the description of our approach.
We now turn our attention to analysing its time complexity.

\begin{lemma}
    \label{minmax-reduction}
    If there an algorithm for {\sc Static Trellised \problemname{$\{\min, \max\}$}{$\max$}} running in $\TO(\numupdates^{c-\gamma} (\numupdates+\numqueries)^\gamma)$ time for some $c \geq 1$ and $\gamma \in [0, 1]$, then \problemname{$\{\min, \max\}$}{$\max$} can be solved in $\TO(n^{5/4 + c/2})$ time.
\end{lemma}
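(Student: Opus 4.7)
My plan is to implement the sketch of \autoref{subsec:sketch} with batch size $k = n^{3/4}$, chosen to balance the two contributions to the running time. I partition the $n$ operations into $n/k = n^{1/4}$ chronologically contiguous batches of $k$ consecutive operations and process each batch $B$ independently. The coordinates occurring in $B$'s operations induce an $O(k) \times O(k)$ overlay grid, which I represent by $O(k)$ column-wise \textsc{1D} \range{$\{\min, \max\}$}{$\max$} instances (via \autoref{lazyprop}), each seeded with the maximum value of its overlay cells at the start of $B$. The $k$ operations in $B$ are then executed in $\TO(k)$ each by iterating across these column instances, contributing $\TO(k^2)$ per batch and $\TO(nk) = \TO(n^{7/4})$ overall.

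To obtain the initial maxes seeding the overlay grid, I invoke the t-regions of \autoref{t-regions} maintained over all $n$ updates. For each t-region $Y$ I build a \textsc{Static Trellised} \problemname{$\{\min, \max\}$}{$\max$} instance reflecting $Y$'s state at the start of $B$: $Y$'s $O(\sqrt{n})$ partial updates (property 3) are included verbatim, interleaved with $O(\sqrt{n})$ aggregate updates of the form $f_{a,b}(x) = \min(a, \max(b, x))$, each summarising the whole updates to $Y$ between two consecutive partial updates. The aggregates are extracted in polylogarithmic amortised time from a segment tree over time that maintains lazy $f_{a,b}$ tags on the t-region tree. For each overlay cell $O$ intersecting $Y$, I query this instance on $O \cap Y$; $O$'s initial max is the maximum over all such queries. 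Maintaining the dynamic t-region partition and the time segment tree via standard lazy propagation contributes only $\TO(n^{3/2})$ overhead.

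The crux of the complexity analysis is bounding $\sum_Y q_Y$ within a single batch, where $q_Y = (h_Y+1)(v_Y+1)$ is the number of overlay cells intersecting $Y$ and $h_Y, v_Y$ are the batch horizontal and vertical lines cutting $Y$. Property 4 of \autoref{t-regions} gives $\sum_Y h_Y = \sum_Y v_Y = O(k\sqrt{n})$, while $\sum_Y h_Y v_Y = O(k^2)$ because the $O(k^2)$ pairs of (horizontal, vertical) batch lines each intersect in exactly one point, which lies in exactly one t-region. Hence $\sum_Y q_Y = O(k^2 + k\sqrt{n} + n) = O(n^{3/2})$ for $k = n^{3/4}$. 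Applying Jensen's inequality to the concave-in-$\numqueries$ cost $\TO(\numupdates^{c-\gamma}(\numupdates + \numqueries)^\gamma)$ across the $O(n)$ t-regions with $\numupdates = \sqrt{n}$ yields a per-batch static trellised cost of $\TO(n^{1 + c/2})$, and hence $\TO(n^{5/4 + c/2})$ across all $n^{1/4}$ batches, dominating the overlay grid cost for all $c \geq 1$. The main obstacle is precisely this aggregate bound together with its concavity/Jensen combination; once these are in place, the remaining steps are mechanical.
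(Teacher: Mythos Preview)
Your proposal is correct and follows essentially the same approach as the paper: the batching, overlay grid, t-region Static Trellised instances, and choice of $k = n^{3/4}$ all match. The only cosmetic difference is in the aggregation step: the paper bounds the total number of fragment queries as $O((k+\sqrt{n})^2)$ directly (``laying the overlay grid atop the t-regions'') and then splits t-regions into those with fewer than $\sqrt{n}$ queries versus at least $\sqrt{n}$ queries, using subadditivity for the latter; you instead expand $\sum_Y (h_Y+1)(v_Y+1)$ term by term and apply Jensen to the concave function $q \mapsto (\numupdates + q)^\gamma$ across all $O(n)$ t-regions at once. Both yield the same $\TO(n^{1+c/2})$ per-batch trellised cost, and your Jensen argument is arguably cleaner.
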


\begin{proof}
    We process each of the batches separately, and consider the time taken to answer the queries within a particular batch $B$.

    Using the methods above, in $\TO(n^{3/2})$ time we form an instance of {\sc Static Trellised \problemname{$\{\min, \max\}$}{$\max$}} with $\numupdates = O(\sqrt{n})$ updates in each of $O(n)$ t-regions.
    Next, we bound the number of queries made to such instances within $B$.
    Consider the partition $\mathcal P$ of the grid produced by refining each t-region by the overlay grid.
    A query to a {\sc Static Trellised} instance is made for each region in $\mathcal P$, and the number of these regions and thus, queries, is $O((k + \sqrt{n})^2) = O(k^2 + n)$: this is the number of intersections found when the overlay grid is laid atop the t-regions.

    For a given t-region $Y$, let the number of updates and queries made to the instance in $Y$ be $n_{{\textnormal u}_Y}$ and $n_{{\textnormal q}_Y}$, respectively.
    First, consider the regions $y$ where $n_{{\textnormal q}_y} < \sqrt{n}$.
    There are $O(n)$ regions in total, so we spend $\TO(n^{1+c/2})$ time answering queries for these regions.

    Now consider the regions $\Upsilon$ where $n_{{\textnormal q}_\Upsilon} \geq \sqrt{n}$.
    Since $n_{{\textnormal q}_\Upsilon} = \Omega(\numupdates)$, the running time of $\Upsilon$'s {\sc Trellised} instance is $\TO(n_{{\textnormal u}_\Upsilon}^{c-\gamma} n_{{\textnormal q}_\Upsilon}^\gamma)$, which is subadditive with respect to $n_{{\textnormal q}_\Upsilon}$.
    Thus, the total running time for these regions is maximised when the queries are distributed evenly among as many regions as possible.
    Subject to the constraint on these regions, this maximum is achieved, within a constant multiplicative factor, when there are at most $O((k^2 + n)/\sqrt{n})$ regions, each with $\sqrt{n}$ queries.
    Hence, in this case, the total running time is bounded by $\TO((k^2 + n)n^{(c-1)/2})$.
    We thus spend time $\TO(n^{3/2} + n^{1+c/2} + k^2 n^{(c-1)/2})$ for each of $n/k$ batches.
    For balance, we choose $k = n^{3/4}$, giving a running time of $\TO(n^{5/4 + c/2})$ overall.
\end{proof}

\sloppy
It remains to show that we can solve {\sc Static Trellised Grid} \range{$\{\min, \max\}$}{$\max$} efficiently.
To do so, we first describe the following ancillary data structure.

\begin{lemma}
    \label{buckets}
    If $S$ is a (multi)set of integers, let $L(S)$ denote the value in $S$ with largest absolute value.

    Let $S_1, \dots, S_n$ be multisets of integers.
    Then, there exists a data structure that can facilitate any $m$ of the operations:
    \begin{itemize}
        \item $\text{add}(l, r, x)$: for each $i \in [l, r]$, add a single copy of $x$ to $S_i$
        \item $\text{remove}(l, r, x)$: for each $i \in [l, r]$, remove a single copy of $x$ from $S_i$; it is guaranteed that a previous $\text{add}(l, r, x)$ call has been made, and that a corresponding $\text{remove}$ call has not yet been made
        \item $\text{query}(l, r, x)$: return $\max_{i \in [l, r]} L(S_i)$
    \end{itemize}
    in $O(\log n \log m)$ time for add and remove, and $O(\log n)$ time for query, in the worst case.
\end{lemma}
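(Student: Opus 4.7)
My plan is to build a segment tree $T$ over the index range $[1, n]$ and, via the standard canonical decomposition, treat each active $\text{add}(l, r, x)$ as inserting $x$ into a multiset $M_v$ at each of the $O(\log n)$ nodes whose ranges canonically cover $[l, r]$; the matching $\text{remove}$ deletes the one copy deposited by its partner $\text{add}$, which is well-defined by the problem guarantee. Under this encoding, the standard segment-tree identity $S_i = \bigcup_{v \in P_i} M_v$ (where $P_i$ is the root-to-leaf path to index $i$) holds throughout, reducing our query to a structural statement about the $M_v$'s.

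Each $M_v$ will be stored as a balanced BST keyed by $(|x|, x)$, so that $L(M_v)$ is read off in $O(1)$ while inserts and deletes each take $O(\log m)$ (at most $m$ operations ever occur, bounding $|M_v|$). At each node $v$ I then maintain two summaries: $\text{own}(v) = L(M_v)$, and a constant-size $\text{agg}(v)$ summarising, across all leaves $i$ in the subtree of $v$, the value $\max_i L(S_i)$ restricted to contributions from $M_u$ with $u$ in the subtree of $v$.

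Each $\text{add}$ or $\text{remove}$ touches $O(\log n)$ canonical nodes, doing $O(\log m)$ work on the BBST per node and $O(\log n)$ additional $O(1)$ aggregate recomputations up to the root, for $O(\log n \log m)$ total. For $\text{query}$ I would descend $T$ along the canonical decomposition of $[l, r]$, carrying a running ``ancestor'' accumulator holding $L$ of all $M_u$ for ancestors $u$ already visited; at every node $v$ fully contained in $[l, r]$ I combine in $\text{agg}(v)$ and at splitting nodes I recurse into both children. Each touched node does $O(1)$ work for $O(\log n)$ per query.

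The main obstacle is designing $\text{agg}(v)$ rich enough that the top-down traversal correctly reconstructs $\max_{i \in [l, r]} L(S_i)$. Because $L$ is itself an absolute-value argmax while the outer operator is ordinary integer $\max$, a single scalar per node cannot suffice: I would instead store the largest-absolute-value positive element and the largest-absolute-value negative element appearing anywhere in the subtree, together with the integer $\max$ over leaves of ``what $L$ of the path so far would be, conditional on the path ancestor's absolute value being $0$''. A short case analysis on whether the ancestor accumulator or some subtree candidate has larger absolute value then yields an associative $O(1)$ combine, which suffices to hoist the subtree aggregates up to $v$ from its children and to merge the $O(\log n)$ canonical contributions during a query. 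Verifying that this constant-sized summary composes correctly is the only non-routine step of the proof.
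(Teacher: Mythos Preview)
Your approach is essentially the same as the paper's: a segment tree over $[1,n]$, a balanced ordered multiset $M_v$ at each node receiving the canonical pieces of each \texttt{add}/\texttt{remove}, an aggregate at each node pushed up from the children, and queries answered by walking the canonical decomposition while carrying an ancestor accumulator. The time bounds you state match the paper's.

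Where you diverge is in believing that ``a single scalar per node cannot suffice'' for $\text{agg}(v)$. It can. The paper keeps only
\[
L^{\ast}_v \;=\; L\bigl(\{\,L(M_v),\ \max(L^{\ast}_a,\,L^{\ast}_b)\,\}\bigr),
\]
and at a canonical node $v$ with ancestor accumulator $A$ it returns $L(\{A,\,L^{\ast}_v\})$. The reason this works is the identity
\[
\max_i\,L(\{\alpha, x_i\}) \;=\; L\bigl(\{\alpha,\ \max_i x_i\}\bigr),
\]
valid for any real $\alpha$ and any collection $\{x_i\}$. Briefly: if $|\alpha|\ge |\max_i x_i|$ then any $x_i$ with $|x_i|>|\alpha|$ must be negative and below $-|\alpha|\le \alpha$, so the left side is $\alpha$; if $|\alpha|<|\max_i x_i|$ then both sides equal $\max_i x_i$, since every $L(\{\alpha,x_i\})\in\{\alpha,x_i\}$ is at most $\max_i x_i$. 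Applying this once per level shows $L^{\ast}_v=\max_{i\text{ under }v} L\bigl(\bigcup_{u\in P_i(v)} M_u\bigr)$, and applying it once more combines $L^{\ast}_v$ with the ancestor accumulator at query time.

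So your ``only non-routine step'' evaporates: you do not need the extra positive/negative extremes in the summary, and the case analysis you anticipate reduces to the two-line observation above. Your third stored quantity is already the paper's $L^{\ast}_v$; drop the other two.
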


\begin{proof}
    Construct a segment tree, and keep an ordered set $S_u$ (e.g. a self-balancing binary search tree) at each node $u$.
    We maintain a value $L^{\ast}_v$ at each node $v$, such that at all times, $L^{\ast}_v = L(\{L(S_v), \max(L^{\ast}(S_a), L^{\ast}(S_b))\})$, where $a$ and $b$ are the immediate children of $v$.

    Recall (from \autoref{sec:prelim}) that any range $[l, r]$ can be written as the disjoint union of a set $\text{base}([l, r])$ of $O(\log n)$ segment tree ranges.
    For the add (remove) operation, add (remove) $x$ from the sets corresponding to the nodes in $\text{base}([l, r])$.
    Each addition or removal from a set takes $O(\log m)$ time.
    Next, update the $L^{\ast}$ values of these nodes, and their ancestors.
    To answer a query, it suffices to consider the $L^{\ast}$ values of the nodes in $\text{base}([l, r])$, together with the $L$ values of their ancestors and use a recurrence identical to what is used to update the tree.
\end{proof}

\begin{lemma}
    \label{minmax-st}
    {\sc Static Trellised} \problemname{$\{\min, \max\}$}{$\max$} can be solved in $\TO(u)$ time.
\end{lemma}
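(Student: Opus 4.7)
The plan is to exploit both the \textsc{Static} and \textsc{Trellised} structure. The key observation is that any composition of $\min_c$ and $\max_c$ updates can be represented as a function of the form $f_{A,B}(x) = \min(A, \max(B, x))$ with $A \geq B$, collapsing to a constant function when further updates force $A < B$. This family is closed under composition, so the row-$i$ updates alone can be summarised by a single function $R_i$, and similarly the column-$j$ updates by a function $C_j$. Sweeping through updates in chronological order, all $R_i$ and $C_j$ can be computed in $\TO(\numupdates)$ time; only the $O(\numupdates)$ canonical rows and columns (those demarcated by update boundaries) need be tracked, since any row (column) inside such a slab receives exactly the same row (column) updates.

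The main subtlety is that the final value at a point $(i, j)$ is \emph{not} simply $R_i \circ C_j$ applied to $0$: the chronological interleaving between row-$i$ updates and column-$j$ updates affects the outcome. I would address this by identifying, for each point, the single latest update to row $i$ or column $j$, expressing the final value at $(i, j)$ as that update applied to an earlier value determined by the remaining updates. A query box $B$ can thus be partitioned according to whether the latest update to each covered cell is a row update or a column update, and according to its time; within each class, the final value has a uniform structure and its maximum over the relevant cells can be computed collectively.

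For each query I would then use the buckets data structure of Lemma~\ref{buckets} to aggregate the maximum over the query box. Encoding $\max_c$ updates as positive contributions and $\min_c$ updates as negative contributions of magnitude $|c|$, the $L(S)$ operator (value of largest absolute value in $S$) extracts the update whose magnitude dominates in a range, which suffices to recover the resulting contribution to the query answer. A segment tree over time, whose nodes carry buckets indexed by the canonical rows and columns, allows us to locate the dominant update for a range of rows or columns in a given time interval in polylogarithmic time, so each query is answered in $\TO(1)$ amortised.

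The main obstacle will be the case analysis on the interleaving at each point, in particular the transition between the interval regime $f_{A, B}$ and the constant regime, and making sure that the buckets-based aggregation picks out precisely the update responsible for the maximum value inside the query rectangle (rather than a larger-magnitude update that has been overridden). Once this case analysis is resolved, combining the $\TO(\numupdates)$ preprocessing with the $\TO(1)$-amortised query bound gives the claimed $\TO(\numupdates + \numqueries)$ total running time.
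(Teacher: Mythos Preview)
Your proposal correctly identifies that the latest update in time is what matters, and that Lemma~\ref{buckets} is the right tool, but it is missing the two ideas that make the paper's proof work, and your proposed encoding is incorrect.

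The paper does \emph{not} try to compute the maximum value in the box directly. Instead, it binary-searches on a threshold $x$ and asks: is some cell in the box at least $x$? For a \emph{fixed} $x$, only $\min_c$ updates with $c < x$ and $\max_c$ updates with $c \geq x$ are relevant; each relevant update at time $t$ is labelled $-t$ or $+t$ according to its kind. Writing $L_r$ (resp.\ $L_c$) for the label of the latest relevant row (resp.\ column) update covering $r$ (resp.\ $c$), the cell $(r,c)$ attains value $\geq x$ exactly when $L_r + L_c > 0$. This arithmetic identity is what decouples the two dimensions: one only needs $\max_r L_r$ and $\max_c L_c$ over the query ranges, which the buckets structure supplies. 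Sorting all updates by their constant and sweeping $x$ upward turns these into $O(\numupdates)$ add/remove operations to the buckets, and persistence lets any $x$ be queried online.

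Your encoding, by contrast, labels a $\max_c$ update by $+|c|$ and a $\min_c$ by $-|c|$. This records \emph{magnitude of value}, not \emph{time}, so $L(S)$ returns the update with the largest constant rather than the latest one. Concretely, if a row receives $\max_{10}$ at time $1$ followed by $\min_2$ at time $2$ (and the column has no updates), the cell value is $2$, yet your encoding gives labels $\{+10,-2\}$ with $L = +10$, wrongly suggesting the $\max_{10}$ survives. The ``main obstacle'' you flag---that a larger-magnitude update may have been overridden---is not a case analysis to be cleaned up later; it is the entire difficulty, and it is resolved precisely by labelling with $\pm t$ (so that ``largest absolute value'' means ``latest in time'') together with the threshold binary search that separates good updates from bad ones. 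Without those two ingredients, there is no decoupling of rows and columns, and the proposal does not yield an algorithm.
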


\begin{proof}
    We describe how to preprocess a data structure that can answer queries of the form: ``Is there a value in a given range $[l_1, r_1] \times [l_2, r_2]$, greater than or equal to a given value $x$?'', where the range and $x$ vary per query.
    Each query can then be answered with binary search.

    Observe that for any fixed $x$, we need only to consider two types of updates: $\min$ updates with value less than $x$, and $\max$ updates with value at least $x$.
    If an update occurred at time $t$, we assign it a {\em label} of $-t$ if it is of the first kind, and $+t$ if it is of the second.
    We say that a label is {\em later} than another if its absolute value is larger.
    
    Consider a row (column) labelled by the labels of all row (column) updates covering it.
    Let $L_r$ ($L_c$) be the latest label of row $r$ (column $c$).
    The answer to the query is then ``Yes'' if and only if there is a point $(r, c) \in [l_1, r_1] \times [l_2, r_2]$ such that either $L_r > 0$ and $|L_r| > |L_c|$, or $L_c > 0$ and $|L_c| > |L_r|$.
    However, this is precisely when $L_r + L_c > 0$. 
    As $r$ and $c$ can be chosen independently, it suffices to find $\max_{r \in [l_1, r_1]} L_r$ and $\max_{c \in [l_2, r_2]} L_c$, and check if their sum is greater than 0.

    We show how to find the former value: the same can then be done for the latter.
    First, create an instance of the \autoref{buckets} data structure with a multiset for each row.
    We start with $x = -\infty$, adding all $\max$ updates into the data structure, with their positive labels.
    Next, we sort all row updates (both $\min$ and $\max$) into non-decreasing order by update value, and iterate through this sorted list.
    When we encounter a $\min$ update, we add its label to its corresponding range of rows, and conversely, when we encounter a $\max$ update, we remove its label from its corresponding range of rows.
    In this way, each update corresponds to an increase in $x$. 
    By making this data structure persistent \cite{Driscoll1986}, we can facilitate online queries for \emph{any} $x$ with only constant multiplicative overhead in running time.
\end{proof}

By observing that $\setf_z = \min_z \circ \max_z$, we obtain the following result.

\begin{corollary}
    \label{minmax-2d}
    \problemname{$\{\min, \setf, \max\}$}{$\max$} can be solved in $\TO(n^{7/4})$ time.
\end{corollary}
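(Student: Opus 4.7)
The plan is to reduce \problemname{$\{\min, \setf, \max\}$}{$\max$} directly to \problemname{$\{\min, \max\}$}{$\max$} by eliminating all $\setf$ updates using the identity $\setf_z = \min_z \circ \max_z$. For any integer $z$, applying $\max_z$ to a range first forces every affected value to be at least $z$, and a subsequent $\min_z$ caps each affected value at $z$; the composition leaves every affected point with value exactly $z$. Since both components act on the same rectangular range, I can simulate each $\setf$ update online by issuing the $\max_z$ update immediately followed by the $\min_z$ update over the identical range.

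Because each of the (at most) $n$ original operations is replaced by at most two operations, the resulting instance of \problemname{$\{\min, \max\}$}{$\max$} has $n' = O(n)$ operations. Crucially, the reduction preserves the online requirement: at the moment the $\setf$ update arrives we have both $z$ and the rectangular range, so the two replacement updates can be dispatched immediately, before the next incoming operation is observed. Query operations are unchanged, and the value visible to any query in the new instance is identical to the value it would have seen in the original instance.

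Now I would plug this into the pipeline already assembled in this section. Lemma \ref{minmax-st} solves {\sc Static Trellised \problemname{$\{\min, \max\}$}{$\max$}} in $\TO(\numupdates)$ time, which matches the hypothesis of Lemma \ref{minmax-reduction} with $c = 1$ and $\gamma = 0$. Applying Lemma \ref{minmax-reduction} therefore yields an algorithm for \problemname{$\{\min, \max\}$}{$\max$} running in $\TO(n'^{5/4 + 1/2}) = \TO(n'^{7/4}) = \TO(n^{7/4})$ time, which transfers back to the original problem through the reduction above.

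There is no substantive obstacle here: the only point worth checking is that the $\setf$-elimination is valid in the fully online setting, and this is immediate since the replacement does not require look-ahead. The corollary is essentially a direct bookkeeping consequence of Lemmas \ref{minmax-reduction} and \ref{minmax-st} once the simple algebraic identity $\setf_z = \min_z \circ \max_z$ is noted.
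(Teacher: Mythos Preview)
Your proposal is correct and matches the paper's approach exactly: the paper's entire proof is the one-line remark ``By observing that $\setf_z = \min_z \circ \max_z$, we obtain the following result,'' which implicitly invokes Lemmas~\ref{minmax-reduction} and~\ref{minmax-st} just as you spell out. Your added verification that the reduction is online-preserving and that $c=1$, $\gamma=0$ fit the hypothesis of Lemma~\ref{minmax-reduction} is accurate and simply makes explicit what the paper leaves to the reader.
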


\subsection{Online and higher dimensional settings}
\label{subsec:formal}

Most of the steps in the approach of the previous subsection are not specific to the update and query operations in our example.
To generalise this approach to fully-online and multidimensional settings, we introduce the following ``partial information'' variant of {\sc 1D Range}.

\begin{definition}[{\sc 1D Partitioned Range $(q, U)$}]
    Let $A_0$ be an integer array of length $s$ and let $0 = a_0 < \dots < a_\rho = s$ be a sequence of indices.
    Let $Q$ be a corresponding sequence of $\rho$ integers, denoting the values $q(A_0[a_0 + 1, a_1]), \ldots, q(A_0[a_{\rho-1} + 1, a_\rho])$.
    Initially, $\rho = 1$.

    Let $J$ be a list of range updates applicable to $A_0$ in chronological order, initially empty.
    We write $A_J$ for the result of applying the updates of $J$ to $A_0$, in order.

    \sloppy
    Given an integer $s$, and the value of $q(A_0[1, s])$, support the following operations:
    \begin{itemize}
        \item $\text{split}(i, a, q_l, q_r)$: given $i \in [0, \rho-1]$ and $a_i < a < a_{i+1}$, add $a$ to the sequence of indices, and update $Q$ with the knowledge that $q(A_0[a_i + 1, a]) = q_l$ and $q(A_0[a + 1, a_{i+1}]) = q_r$
        \item $\text{update}_j(a_l, a_r)$: append to $J$, the update: ``for each $i \in [a_l+1, a_r]$, set $A[i] := u_j(A[i])$''
        \item $\text{query}(a_l, a_r)$: return $q(A_J[a_l+1, a_r])$
    \end{itemize}
    It is guaranteed that every $a_l$ or $a_r$ provided as input will already be in the sequence.
\end{definition}

Note that this problem is not solvable for all choices of $q$ and $U$: one may need to know the individual values of $A_0$, and not just the result of $q$ over some ranges, to facilitate queries after certain types of updates.
To see this, consider the case when $q = +$ and $U = \{max\}$, and suppose that we know that $q(A_0[1, s]) = A_0[1] + \ldots + A_0[s] = 3$.
After we perform $\max_5$ over the entire array, we do not have enough information to facilitate a query for the sum of the new array: for instance, we would need to know the number of elements with value at most 5, and the sum of elements greater than 5.

We use the {\sc 1D Partitioned Range} problem to describe a more general result.

\begin{restatable}{theorem}{onlinemultidstatictrellis}
    \label{general-static-trellis}
    Suppose $U$ is a set of update functions, and $q$ is a query function, computable in $O(1)$ time.
    If there is a set $\bar{U}$ such that:
    \begin{enumerate}
        \item $U \subseteq \bar{U}$ are sets of functions that can be represented and composed in $\TO(1)$ space and time, such that the composition of any series of at most $n$ (possibly non-distinct) functions of $U$ results in a function in $\bar{U}$; 
        \item There is an algorithm for {\sc 1D Partitioned Range $(q, U)$} that performs both updates and queries in $\TO(1)$ time;
        \item There is an algorithm for {\sc Static Trellised $d$D Grid \range{$\bar{U}$}{$q$}} that runs in $\TO(\numupdates^{c-\gamma} (\numupdates+\numqueries)^\gamma)$ time for some $\gamma \in [0, 1]$
    \end{enumerate}
    then {\sc Grid \range{$U$}{$q$}} can be solved in $\TO(n^{\frac{c+d+1}{2} - \frac{1}{2d}})$ time.
    When $d = 2$, this is $\TO(n^{5/4 + c/2})$ time.
\end{restatable}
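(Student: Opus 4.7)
The plan is to extend the two-dimensional offline argument of \autoref{minmax-reduction} in two directions at once: to $d$ dimensions and to a fully online setting. I would process the $n$ operations in $n/k$ chronological batches of $k$ consecutive operations, for a parameter $k$ to be tuned later. Across batches I would maintain a global t-region partition via \autoref{t-regions}, together with a segment tree $T$ over time whose nodes store a lazy-propagation data structure over the t-region tree; the lazy values are elements of $\bar U$ representing the composition of all whole updates to each t-region during the corresponding time range, which hypothesis~(1) lets us represent and compose in $\TO(1)$ space and time. At the start of each batch, for every t-region $Y$, I would rebuild a {\sc Static Trellised $d$D Grid \range{$\bar U$}{$q$}} instance exactly as in the 2D sketch: its updates are $Y$'s $O(\sqrt n)$ partial updates interspersed in chronological order with the $\bar U$-elements returned by $T$ for the gaps between successive partial updates, giving $O(\sqrt n)$ updates per instance. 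Hypothesis~(3) then operates each such instance in $\TO((\sqrt n)^{c-\gamma}(\sqrt n + n_{{\textnormal q}_Y})^\gamma)$ time, where $n_{{\textnormal q}_Y}$ is the number of queries directed to the instance of $Y$.

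The main new ingredient for the online $d$-dimensional case is the overlay data structure for the current batch. I would maintain it as a nested structure whose innermost level is a {\sc 1D Partitioned Range $(q, U)$} instance supplied by hypothesis~(2), so that updates, queries and splits at the overlay level take $\TO(1)$ amortised time each (for constant $d$). At the start of a batch the overlay is a single cell whose $q$-value is the known global $q$-value of the grid; each incoming operation may introduce new coordinates on up to $d$ axes, and for each new coordinate I would perform the corresponding {\tt split} operations in the overlay, obtaining the $q$-values of the new sub-cells by querying the {\sc Static Trellised} instances of those t-regions that intersect the affected slab. The batch's range updates and range queries themselves are then applied directly to the overlay.

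To finish, I would run the same balance as in the 2D proof. By \autoref{t-regions}, the number of (t-region, overlay-cell) incidences per batch is $O((k+\sqrt n)^d) = O(k^d)$ for $k \geq \sqrt n$, and these account for all the queries to the {\sc Static Trellised} instances. Splitting the $O(n^{d/2})$ t-regions into those with $n_{{\textnormal q}_Y} < \sqrt n$ (contributing $\TO(n^{(c+d)/2})$ in aggregate) and those with $n_{{\textnormal q}_Y} \geq \sqrt n$ (contributing at most $\TO(k^d\, n^{(c-1)/2})$ by the subadditivity of $n_{{\textnormal q}_Y}^{\,\gamma}$) gives a per-batch bound of $\TO(n^{(c+d)/2} + k^d\, n^{(c-1)/2})$. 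Multiplying by $n/k$ and choosing $k = n^{(d+1)/(2d)}$ to balance the two resulting terms yields the claimed $\TO(n^{(c+d+1)/2 - 1/(2d)})$, which specialises to $\TO(n^{5/4 + c/2})$ at $d = 2$.

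The hard part will be the online, incremental management of the $d$-dimensional overlay. In the offline 2D proof the overlay grid was determined as soon as the batch began, whereas here its partition must grow as operations arrive. Hypothesis~(2) and the {\tt split} primitive of {\sc 1D Partitioned Range} are tailored precisely to this issue, provided the required sub-cell $q$-values can be supplied by the {\sc Static Trellised} instances; verifying that the nested version of this primitive handles splits, range updates and range queries in $\TO(1)$ amortised time per operation for constant $d$, and that the total t-region query work induced by all splits fits into the balance above, is where most of the care will lie.
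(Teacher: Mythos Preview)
Your proposal is essentially the paper's approach: batching, per-batch {\sc Static Trellised} instances on the t-regions via the segment-tree-over-time trick, an overlay whose last dimension is handled by {\sc 1D Partitioned Range}, and the same balance $k = n^{(d+1)/(2d)}$. Three imprecisions are worth correcting. First, the paper does not maintain the t-region partition dynamically via \autoref{t-regions}; it simply rebuilds the static Overmars--Yap partition (\autoref{overmars-yap}) from all prior-batch updates at the start of each batch, which is simpler and already within budget. Second, the overlay is not ``nested'': for general $U$ there is no lazy propagation across the outer $d-1$ dimensions, so the paper keeps a \emph{flat} collection of $O(\bar k^{d-1})$ {\sc 1D Partitioned Range} instances, one per region of $\mathcal{R}_1 \times \cdots \times \mathcal{R}_{d-1}$; when a coordinate is added in some dimension $i<d$ it creates the new instances from scratch and replays all earlier updates of the batch into them. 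Overlay work is therefore $\TO(\bar k^{d})$ per batch, not $\TO(1)$ per operation---but this is still dominated. Third, your fragment count $O((k+\sqrt n)^d)$ cannot be read off as a product-grid incidence bound, since the t-regions do not form a grid; the paper instead charges each of the $O(k)$ overlay hyperplanes with $O(n^{(d-1)/2} + \bar k^{d-1})$ new fragments (the first term coming from \autoref{overmars-yap}(2)), which yields the same $O(k^d)$ once $k \ge \sqrt n$.
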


\noindent
We dedicate the remainder of this section to proving this result, which will generalise \autoref{minmax-2d}.

To prove this, we adopt a batching approach once again, and precompute data structures at the start of every batch which allow us to facilitate the operations within the batch in and online manner.
Throughout the batch, we partition the grid according to the \emph{overlay grid}, whose construction is dependent on the operations in the batch.
We formalise this as follows.
At any given point in time, we maintain, for each each dimension $i$, a sorted sequence $C_i$ of distinct coordinates in dimension $i$, with $C_{i, 0} = 0 < C_{i, 1} < \ldots < C_{i, |C_i|} = s$.
This partitions the coordinates of dimension $i$ into disjoint ranges $\mathcal{R}_i = \{[C_{i, j} + 1, C_{i, j+1}] : j \in [0, |C_i|-1]\}$.
In turn, these $\mathcal{R}_i$'s partition the grid into a set of disjoint \emph{overlay regions} $\mathcal{O} = \{ R_1 \times \ldots \times R_d : (R_1, \ldots, R_d) \in \mathcal{R}_1 \times \ldots \times \mathcal{R}_d \}$.
We refer to this structure as an \emph{overlay grid}.
At the beginning of a batch, we have each $C_i = \{ s\}$, each $\mathcal{R}_i = \{[1, s]\}$ and so $\mathcal{O} = \{[1, s]^d\}$.

When we receive an update or query for a range $[l_1, r_1] \times \ldots \times [l_d, r_d]$, for each dimension $i$, we add $l_i - 1$ and $r_i$ to $C_i$.
There are thus at most $\bar{k} = 2k + 1$ ranges in each dimension per batch, and each added coordinate splits a range in $\mathcal{R}_i$ thereby refining the partition $\mathcal{O}$ according to a $d-1$ dimensional hyperplane.
After these additions, the update or query at hand concerns precisely a range of overlay regions within the overlay grid.

Let the \emph{initial value} of an overlay region $R$ be $q(R)$ at the start of the batch.
Our algorithm consists of two parts: first, maintaining the initial value of each overlay region as $\mathcal{O}$ changes; and using these values to support updates and queries within this batch.

\subparagraph*{Maintaining the initial value of overlay regions.}
We perform the following precomputation at the start of each batch.
Let $V$ be the set of update ranges appearing in prior batches.
Separate to the overlay grid, we use the following result of Overmars and Yap \cite{Overmars1991} to construct a separate partition $\mathcal{Y}$ of the grid into \emph{t-regions} according to $V$.
This theorem is a static version of \autoref{t-regions}, so we also call these \emph{t-regions}, since they bear the same properties.

\begin{lemma}[\cite{Overmars1991}]
    \label{overmars-yap}
    A partition of the grid into t-regions can be constructed from a collection $V$ of $n$ $d$-dimensional ranges in $O(n^{(d+1)/2})$ time, with the following properties:
    \begin{enumerate}
        \item There are $O(n^{d/2})$ t-regions;
        \item Any hyperplane orthogonal to a coordinate axis intersects at most $O(n^{(d-1)/2})$ t-regions. Hence, each range of $V$ partially covers at most $O(n^{(d-1)/2})$ points;
        \item Each t-region only contains piles (with respect to itself) in its interior; and
        \item Each t-region has at most $O(\sqrt{n})$ ranges of $V$ partially (but not completely) covering it.
    \end{enumerate}
\end{lemma}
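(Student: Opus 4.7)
The plan is to extract a static counterpart of the construction underpinning \autoref{t-regions}, taking advantage of the fact that all $n$ ranges are known up front. First, I would partition the coordinate range of dimension $1$ into $O(\sqrt{n})$ columns, each containing at most $\sqrt{n}$ of the 1-boundaries of $V$; these form the children of the root of a level partition tree. For each column $C$ and each dimension $i \in \{2, \ldots, d\}$, I would then introduce balancing boxes exactly as in the dynamic construction: for each block of $\sqrt{n}$ consecutive $i$-boundaries of $V$, insert a $(C, i, x)$-balancing box whose two 1-boundaries lie inside $C$ and whose $i$-boundaries have coordinate $x$ at the midpoint of that block, with remaining boundaries chosen arbitrarily. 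This adds $O(n)$ boxes in total. Finally, within each column I would apply the \staticsplit{} procedure recursively to the augmented collection of boxes, yielding the leaf t-regions.

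For correctness, \autoref{balancing-boxes} only uses the existence of balancing boxes, not any online aspect, and so implies that the set $V'$ consulted at every recursive call has size $O(\sqrt{n})$. Hence every internal node has $O(\sqrt{n})$ children; the tree has depth $d$, yielding $O(n^{d/2})$ leaves, which is property $1$. Property $3$ is immediate from the cutting rule of \staticsplit{}, which eliminates any interior $j$-boundary of a range from a leaf. Property $4$ then follows: a range partially covering a leaf must, by property $3$, have some $j$-boundary passing through the leaf, and balance bounds the number of such boundaries per leaf per dimension by $O(\sqrt{n})$. For property $2$, I would repeat the argument used for the fourth condition of \autoref{t-regions}: a hyperplane orthogonal to axis $i$ meets at most one child of each node at depth $i-1$, there are $O(n^{(i-1)/2})$ such nodes, and each relevant subtree has $O(n^{(d-i)/2})$ leaves, giving $O(n^{(d-1)/2})$ intersected t-regions.

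For the running time, each node at depth $i-1$ does $\TO(\sqrt{n})$ work inspecting $V'$ and emitting the $(i+1)$-boundary cuts, and there are $O(n^{(i-1)/2})$ such nodes; summing over $i \in [0, d-1]$ gives $O(n^{(d+1)/2})$ total, as claimed. The main obstacle is a careful implementation of \staticsplit{} so that assembling $V'$ and extracting its $(i+1)$-boundary coordinates at each recursive call is not charged more than $\TO(|V'|)$ work. I would handle this by precomputing, for each dimension, a sorted list of boundary coordinates of $V$ together with the boxes owning them, and passing sub-ranges of these lists down the recursion, splitting them at the cut coordinates as the recursion descends in the spirit of a multidimensional sweep. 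Given that the dynamic variant in \autoref{t-regions} achieves amortized $\TO(n^{(d-1)/2})$ per insertion for exactly this bookkeeping, the static analogue goes through without further difficulty.
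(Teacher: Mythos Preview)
The paper does not prove this lemma at all: it is stated with the citation \cite{Overmars1991} and used as a black box. Your proposal is therefore not being compared against a proof in the paper, but is a self-contained derivation.

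Your derivation is correct and, pleasingly, obtains the static result as a special case of the paper's own dynamic machinery in \autoref{t-regions}. You observe that when all boxes are known in advance, one can (i) cut dimension~1 into $O(\sqrt n)$ columns up front, (ii) insert all balancing boxes at once, and (iii) run \staticsplit{} once per column. The correctness arguments for properties 1--4 are exactly those already given in the proof of \autoref{t-regions} (degree bound from \autoref{balancing-boxes}, the pile property from the cutting rule, the hyperplane count from the depth-$i$ branching argument), and you cite them appropriately. The running-time analysis is also sound: the $\TO(|V'|)$ work per node and the $O(n^{(i-1)/2})$ node count at depth $i-1$ combine to $O(n^{(d+1)/2})$. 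The only thing to be careful about is that your bookkeeping sketch for assembling $V'$ in $\TO(|V'|)$ time at each node is a little informal, but the paper's own proof of \autoref{t-regions} is equally informal on this point (``we can find the set $V'$ in $\TO(|V'|)$ time by keeping a sorted list of $i$-boundaries for each dimension''), so you are not being held to a higher standard.
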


Within each t-region $Y$, we will construct an instance of {\sc Static Trellised $d$D \range{$U$}{$q$}}, with $O(\sqrt{n})$ updates.
We do so in the same way as in \autoref{subsec:sketch}: each pile intersecting $Y$ will translate to a trellised update.
Updates completely covering $Y$ can be summarised as $O(\sqrt{n})$ updates by building a segment tree over time.
In each segment tree range, we will perform lazy propagation over the t-region tree.
This takes $\TO(n^{(d+1)/2})$ time in total.


Next, define the set of \emph{fragments} to be $\mathcal{F} = \{ Y \cap O : Y \in \mathcal{Y} \text{ and } O \in \mathcal{O} \}$.
$\mathcal{F}$ can be seen as a refinement of our partitions $\mathcal{Y}$ and $\mathcal{O}$, and so is itself a partition of the grid into disjoint ranges.
At the beginning of the batch, the set of \emph{initial fragments} is $\mathcal{Y}$.
When a coordinate is added to $C$, some fragments are split, with each split creating two (smaller) new fragments.
We will maintain the initial value of each fragment throughout the batch.
Whenever a fragment is created, we issue a query to the {\sc Static Trellised} instance in its t-region, to determine its value.
We associate the initial value of each fragment to an arbitrary point within the fragment, and store these values in a data structure (which we call the \emph{fragment data structure}) that supports updates which add or remove a point in $\mathbb{Z}^d$ and range queries which ask for the value of $q$ evaluated over points in a range.
Such a data structure can be constructed with a $d$-dimensional segment tree.
After we have determined the value of the new fragments formed by a split, we remove the point corresponding to the fragment being split from the fragment data structure, then add to it new points located arbitrarily within the newly created fragments, with values corresponding to the respective fragments.
After the fragment data structure is updated, we can issue a range query over it to find the value of each newly created overlay region.
Each point update and range query over the data structure takes $O(\log^d s) = O(\log^d n)$ time, since $d$ is a constant and $s$ is bounded by a polynomial in $n$.

This concludes the description of the first part of the algorithm.
Before proceeding, let us pause to consider the time complexity of the procedure thus far, for any given batch.
There are initially $O(n^{d/2})$ t-regions, and thus, initial fragments.
Whenever a new coordinate is added to $C$, all fragments intersecting the hyperplane $H$ corresponding to this coordinate are split in two.
This creates a number of new fragments proportional to the number of such intersections, which in turn is proportional to the number of intersections between \emph{fragment boundaries} and $H$.
This intersection can be either at a boundary of an initial fragment (at most $O(n^{(d-1)/2})$ of these, by the second property of \autoref{overmars-yap}), or a boundary of an overlay region (at most $O(\bar{k}^{d-1})$ of these, since $H$ cannot intersect a boundary parallel to $H$).
Hence, the total number of fragments created over the batch is $f = O(n^{d/2} + \bar{k}(\bar{k}^{d-1} + n^{(d-1)/2}))$.
Using an identical argument to that in the proof of \autoref{minmax-reduction}, the time complexity of finding the initial value of these fragments is $O(n^{(c+d)/2} + fn^{(c-1)/2})$.

\subparagraph*{Supporting updates and queries.}
We now use the initial values of overlay regions to support updates and queries.
Let $\mathcal{O'} = \{ R_1 \times \ldots \times R_{d-1} : (R_1, \ldots, R_{d-1}) \in \mathcal{R}_1 \times \ldots \times \mathcal{R}_{d-1} \}$ be the overlay regions with respect to the first $d-1$ dimensions.
For each region in $\mathcal{O'}$, we will maintain an instance of {\sc 1D Partitioned \range{$U$}{$q$}}, with $C_{d}$ as the sequence of indices.
Recall that we add the coordinates of each update or query range to $C$, so we may assume each operation affects precisely a range of overlay regions.
To answer a query, we iterate over each region in $\mathcal{O'}$ in the query range, and query its corresponding {\sc Partitioned} instance, combining results along the way.

When performing an update, we assume that the initial values of each of the current overlay regions have already been computed by the first part of the algorithm.
Each coordinate added to $C_1$ through $C_{d-1}$ splits $O(\bar{k}^{d-2})$ regions in $\mathcal{O'}$ into two new regions.
For each, we create a new instance of {\sc 1D Partitioned \range{$U$}{$q$}} so after each, we have an instance for each region in the new $\mathcal{O'}$.
For each such region $O'$, we iterate through the updates, in order, in the batch so far (excluding the present update).
For each such update, its range is necessarily either disjoint from $O' \times \mathbb{Z}$, or its intersection with $O'$ is of the form $O' \times R_d$ for some range $R_d$ of the $d$-th dimension.
In the latter case, we apply the update to $R_d$.
Next, for each coordinate $x_d$ that has been added to dimension $d$, we iterate through all $O(\bar{k}^{d-1})$ {\sc Partitioned} instances, splitting each with $x_d$.
As a result, every range in every {\sc Partitioned} instance once again corresponds one-to-one with an overlay region.
For each newly created overlay region, we use its value to update the corresponding range on the corresponding {\sc Partitioned} instance.
Finally, we apply the present update to all {\sc Partitioned} instances.

To complete our proof of \autoref{thmt@@onlinemultidstatictrellis}, it remains to analyse the overall time complexity of our algorithm.
Recall that the number of dimensions, $d \geq 1$, is a constant.
We create $O(\bar{k}^{d-1})$ overlay regions overall.
Each is affected by $O(k)$ queries, updates, and splits, so we expend $\TO(\bar{k}^d)$ time on this data structure over all queries and updates in a batch.
Hence, the overall running time of the algorithm is dominated by the time taken to find the initial value of each fragment.
For balance, we set $\bar{k} = n^{(d+1)/2d}$, so $k = (\bar{k}-1)/2$ and $f = O(n^{(d+1)/2})$, since $d \geq 1$.
This gives an overall running time of $\TO(n^{\frac{c+d+1}{2} - \frac{1}{2d}})$ over the $n/k$ batches, as required.
This completes the proof of \autoref{thmt@@onlinemultidstatictrellis}.

By extending the {\sc Static Trellised} algorithm of \autoref{minmax-st} to multiple dimensions, we can apply this result to generalise our algorithm for \problemname{$\{\min, \setf, \max\}$}{$\max$} from \autoref{minmax-2d} to higher dimensions.

\begin{restatable}{theorem}{multidminsetmaxmaxthm}
  \label{multid-min-set-max-max}
  {\sc Grid} \range{$\{\min, \setf, \max\}$}{$\max$} can be solved in $\TO(n^{(d^2+2d-1)/2d})$ time.
\end{restatable}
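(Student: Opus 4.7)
The plan is to reduce to \autoref{thmt@@onlinemultidstatictrellis} with $c=1$, since $(c+d+1)/2 - 1/(2d) = (d^2+2d-1)/(2d)$ exactly when $c=1$. To do so, I take $\bar{U}$ to be the class of functions of the form $f_{a,b}(x) = \min(a, \max(b, x))$ (with $b \le a$), which contains $\min_c$, $\max_c$ and $\setf_c = f_{c,c}$, is closed under composition, and whose members each distribute over $\max$. The 1D Partitioned Range requirement is then met by a standard lazy-propagating segment tree whose nodes store the $\max$ of their range together with a pending $f_{a,b}$: every update, query and split takes $\TO(1)$ time, since values of $\max$ over freshly split ranges are supplied as input. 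It remains to produce a $\TO(\numupdates + \numqueries)$ algorithm for {\sc Static Trellised $d$D Grid} \range{$\bar{U}$}{$\max$}, which generalises \autoref{minmax-st}.

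For the static trellised step, I would answer each query ``maximum over a box $R$'' by binary searching on a candidate value $x$ and deciding whether some point of $R$ has value $\ge x$. For fixed $x$, label each relevant update by its time $t$ with sign $+$ if it is a $\max$ with value $\ge x$ and $-$ if it is a $\min$ or $\setf$ with value $< x$; remaining updates are irrelevant for this $x$. Since updates are trellised, each one is a pile along some dimension $i \in [d]$; for each dimension $i$ and coordinate $x_i$, let $L_i(x_i)$ be the signed label of the most recent dimension-$i$ update whose range contains $x_i$ (i.e., the label of largest absolute value). A point $(x_1, \ldots, x_d)$ has value $\ge x$ iff the label of largest absolute value among $L_1(x_1), \ldots, L_d(x_d)$ is positive, which, since labels from distinct updates have distinct absolute values, is equivalent to
\[
\max_i L_i(x_i) + \min_i L_i(x_i) > 0.
\]

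The key observation that makes the $d$-dimensional case tractable is the following decoupling: if $P_i := \max_{x_i \in [l_i, r_i]} L_i(x_i)$, then
\[
\max_{(x_1, \ldots, x_d) \in R}\Bigl(\max_i L_i(x_i) + \min_i L_i(x_i)\Bigr) \;=\; \max_i P_i + \min_i P_i.
\]
This follows from a monotonicity argument: raising a single $L_i(x_i)$ (by choosing a better $x_i \in [l_i, r_i]$) weakly increases both $\max_i$ and $\min_i$, so the joint maximum is achieved by picking every $x_i$ to realise $P_i$. Since $d$ is constant, the decision problem reduces to computing $d$ independent 1D range maxima. This can be supported by running, for each dimension $i$, an instance of the \autoref{buckets} data structure indexed by coordinates in that dimension, into which we insert the signed labels. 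Sweeping over increasing $x$ swaps $\max$ and $\min$ labels in the usual way; making the structures persistent \cite{Driscoll1986} lets us query any $x$ with $O(\log^{O(1)} n)$ overhead and combine with binary search to answer each query in polylogarithmic time, yielding total time $\TO(\numupdates + \numqueries)$ as required.

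The main obstacle is the decoupling identity: once it is in hand, the rest is a careful but routine generalisation of the 2D data structure, with mild attention to ties against the initial value 0 (which is folded into the binary search). Plugging $c = 1$ into \autoref{thmt@@onlinemultidstatictrellis} gives the claimed $\TO(n^{(d^2+2d-1)/(2d)})$ bound.
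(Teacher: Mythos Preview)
Your proposal is correct and follows essentially the same route as the paper: the same choice of $\bar{U}=\{f_{a,b}\}$, the same lazy-propagation idea for the 1D Partitioned Range condition, and the same extension of \autoref{minmax-st} to $d$ dimensions via per-dimension \autoref{buckets} structures with persistence and binary search. Your explicit monotonicity argument for the decoupling identity $\max_{x\in R}(\max_i L_i + \min_i L_i)=\max_i P_i+\min_i P_i$ is in fact a cleaner justification of what the paper states more tersely as ``check if the sum of the latest and earliest among $\{L^*_i\}$ is positive.''
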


\begin{proof}
    The first condition of \autoref{thmt@@onlinemultidstatictrellis} is met with $\bar{U} = \{ f_{a, b} : |a|, |b| \leq g \}$, where $g$ is the largest absolute value in the input, and $f_{a, b}(x) = \min(a, \max(b, x))$ as described earlier, which is closed under composition.
    The second condition is met using a {\sc 1D \range{$\{\min, \setf, \max\}$}{$\max$}} data structure.
    For each range of $A_0$, the data structure maintains a point placed arbitrarily within it.
    By using an ancillary lazy update data structure (as described in \autoref{subsec:sketch}), we can determine the composition of all updates to a given point.
    When given the maximum value $v$ of a new range of $A_0$, we are able to $\setf$ its corresponding point's value to the value obtained after all updates have been applied to $v$.
    Updates can then proceed as usual.
    Finally, it remains to solve {\sc $d$D Static Trellised Grid \range{$\{\min, \setf, \max\}$}{$\max$}} efficiently.

    To do so, we extend the {\sc Static Trellised} algorithm of \autoref{minmax-st} to multiple dimensions.
    Let $L_{i,j}$ be the latest label (as defined in \autoref{minmax-st}) covering coordinate $j$ in dimension $i$.
    We seek a point $(x_1, \ldots, x_d)$ for which there is a dimension $j$ such that $L_{j, x_j} + L_{j', x_{j'}} > 0$ for all $j' \neq j$.
    Hence, it suffices to find the latest label $L^{\ast}_i$ in every dimension $i$ and check if the sum of the latest and earliest among these is positive.
    This can be done in $\TO(1)$ time per update and $\TO(d) = \TO(1)$ time per query, by keeping a \autoref{buckets} data structure for each dimension.
    Hence, for constant $d$, {\sc Static Trellised \range{$\{\min, \setf, \max\}$}{$\max$}} can be solved in $\TO(n)$ time, and the result follows.
\end{proof}

In the next section, we give two additional applications of this theorem in two dimensions, which establish relationships between counting paths on graphs and {\sc Static Trellised} problems with $\setf$ updates and sum queries.

\section{Truly subquadratic \texorpdfstring{$\setf$}{set} updates and \texorpdfstring{$+$}{+} queries by counting paths}
\label{sec:setplus}

In this section, we apply \autoref{general-static-trellis} to give truly subquadratic algorithms for \setsum{} and \problemname{$\{+, \setf\}$}{$+$}.

\subsection{\texorpdfstring{\setsum{}}{2D Grid Range (+, set)} by counting inversions}
\label{sec:setsum}
The first condition of \autoref{general-static-trellis} is met for $U = \bar{U} = \{\setf\}$, since $\setf_a \circ \setf_b = \setf_a$ for any $a$ and $b$.
The second condition can be met solving {\sc 1D Partitioned Range $(+, \setf)$} with a data structure for {\sc 1D Grid Range $(+, \setf)$} over an array of size $n$.
To initialise the structure, for each $a_i$, we use an update to set the value at $a_i$ to $q([a_i, a_{i+1}-1]) = A[a_i] + \ldots + A[a_{i+1}-1]$.
All other values are set to 0.
Subsequent updates and queries are then passed through to the underlying data structure, which each occur in $O(\log n) = \TO(1)$ time.
Operations each occur in $O(\log n) = \TO(1)$ time.

Finally, we address the third condition by drawing an equivalence between {\sc Static Trellised \setsum} and a class of range query problems over arrays.
The {\sc RangeEqPairsQuery} accepts an array of size $n$ as input, and asks for the number of pairs of equal elements within each of $q$ given ranges.
Duraj et al. \cite{Duraj2020} defined this weighted analogue for counting inversions between pairs of ranges.

\begin{definition}[{\sc Weighted 2RangeInversionsQuery}]
    Given an integer array $A$, an integer array of weights $W$, both of length $n$, and a sequence of $q$ pairs of non-overlapping ranges $([l_1', r_1'], [l_1'', r_1'']), \dots, ([l_q', r_q'], [l_q'', r_q''])$, with $r_i' < l_i''$, compute for each pair $([l', r'], [l'', r''])$ the quantity
    \[
        \sum_{i \in [l', r']} \sum_{j \in [l'', r'']} \ind_{A[i] > A[j]} \cdot W[i] \cdot W[j].
    \]
    {\sc 2RangeInversionsQuery} is the problem with the added restriction that every weight is 1.
\end{definition}

\noindent
They showed that {\sc RangeEqPairsQuery} is equivalent, up to polylogarithmic factors, to {\sc 2RangeInversionsQuery}, even when the time complexity is expressed as a function of both $n$ and $q$.
We extend this equivalence to {\sc Static Trellised \setsum}.

\begin{lemma}
    \label{trellis-range-equiv}
    {\sc Static Trellised \problemname{$\setf$}{$+$}}, {\sc Weighted 2RangeInversionsQuery} and {\sc 2RangeInversionsQuery} all have the same complexity, up to polylogarithmic factors.
    This holds even when the queries are presented online, and with the complexity measured as a function of two variables, $n$ and $q$.
\end{lemma}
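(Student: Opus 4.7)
My plan is to establish the equivalence by exhibiting a cycle of three polylogarithmic-overhead reductions, each preserving online queries.

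The first reduction is {\sc 2RangeInversionsQuery} $\leq$ {\sc Static Trellised \setsum{}}. Given array $A$ of length $n$, use an $n \times n$ grid. Process indices in order of increasing $A$-value, breaking ties by preferring later indices, and let $r(i)$ denote this rank. For each $i$, perform a row update on row $i$ at time $2r(i) - 1$ with value $1$, followed by a column update on column $i$ at time $2r(i)$ with value $0$. A short case analysis shows cell $(i, j)$ has value $1$ iff $r(i) > r(j)$; for $i < j$ (guaranteed by the non-overlapping query ranges), this coincides with $A[i] > A[j]$. Each inversions query then maps to a single sum query over the corresponding rectangle.

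The second (reverse) reduction is {\sc Static Trellised \setsum{}} $\leq$ {\sc Weighted 2RangeInversionsQuery}. After preprocessing all updates, each row $i$ has a latest update $(t_r(i), v_r(i))$ (with $(-\infty, 0)$ for untouched rows), and similarly each column has $(t_c(j), v_c(j))$. These are piecewise constant in $i, j$ with $O(n)$ breakpoints, so we coordinate-compress rows and columns into $R, C = O(n)$ groups. Form an array $B$ of length $R + C$ whose entries are the row times followed by the column times. Each query sum decomposes into two parts depending on whether the row or the column update wins at each cell, and each part is a weighted inversion count between the two halves of $B$ under one of two precomputed weight arrays (one giving weights $s_r^{(i)} v_r^{(i)}$ and $s_c^{(j)}$ to row and column groups respectively, the other giving $s_r^{(i)}$ and $s_c^{(j)} v_c^{(j)}$, where $s$ denotes group size). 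Groups fully contained in the query range use these precomputed weights directly; the $O(1)$ boundary groups per query---only partially covered---are handled separately using auxiliary persistent 1D range-sum structures, with $\tilde O(1)$ overhead per query.

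To close the cycle, trivially {\sc 2RangeInversionsQuery} $\leq$ {\sc Weighted 2RangeInversionsQuery} (set all weights to $1$). For the converse, decompose weights bitwise as $W[i] = \sum_b 2^b W[i]_b$ with $W[i]_b \in \{0,1\}$; then the weighted inversion sum equals $\sum_{b, b'} 2^{b + b'}$ times the unweighted count of inversion pairs $(i, j)$ with $W[i]_b = W[j]_{b'} = 1$. Each inner count is a single {\sc 2RangeInversionsQuery} on the array filtered to elements whose relevant bit is set, with query ranges re-mapped via precomputed prefix counts. Hence each weighted query reduces to $O(\log^2 n)$ unweighted calls.

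The main obstacle is the second reduction: a query cutting across the interior of a row or column group cannot be handled by naive on-the-fly refinement of $B$ in the online setting. A careful case analysis of the $O(1)$ partially-covered boundary groups, combined with auxiliary persistent 1D data structures, sidesteps this while preserving online-ness and keeping per-query overhead polylogarithmic.
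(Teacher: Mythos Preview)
Your proposal is correct and follows essentially the same three-reduction cycle as the paper. The one place where the paper is simpler is the boundary-group handling in the Static Trellised $\to$ Weighted direction: since every row in a compressed group shares the same latest update, a partially covered group's contribution is just the full group's contribution scaled by the covered fraction, so the paper issues $O(1)$ additional queries to the \emph{same} weighted instance and scales the results, avoiding your auxiliary persistent structures altogether.
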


\begin{proof}
    ({\sc Static Trellised \problemname{$\setf$}{$+$}} $\rightarrow$ {\sc Weighted 2RangeInversionsQuery}).
    By adding a dummy update with value 0 covering the whole grid at time $-1$, we may assume -- without loss of generality -- that every row (column) is covered by at least one row (column) update. 
    Now observe that the value of a given point (after all updates) is equal to the value of the later of the latest row update and the latest column update covering this point's row and column, respectively.
    Thus, for each row (column), it suffices to keep the latest row (column) update covering it.
    With a sort and sweep, we can process the row updates into $O(n)$ disjoint row updates which preserve this property.
    Hence, without loss of generality, we may assume that the row updates are pairwise disjoint, as are the column updates.
    This preprocessing takes $O(n \log n)$ time.

    Consider a query over the points $[l_1, r_1] \times [l_2, r_2]$.
    We will show how to compute the contribution (sum of values) of those points whose value is determined by a row update: we can treat the contribution from column updates identically, and the sum of these contributions is the answer to the query.
    To do so, we form an instance of {\sc Weighted 2RangeInversionsQuery} as follows.
    Create an array $A$, whose values are equal to the update time of each row update in order, from top to bottom, concatenated with the update time of each column update in order, from left to right.
    For the weights of our instance, let the weight corresponding to the row update with value $v$ over rows $[l_1, r_1]$ be $v \times (r_1-l_1+1)$, and the weight corresponding to a column update over columns $[l_2, r_2]$ be $(r_2-l_2+1)$, irrespective of its value.
    This describes our instance (see \autoref{fig:reduce}).
    We will now describe the queries made to this instance.

    Since row (column) updates are disjoint, there is a contiguous range of row (column) updates in this array which lie entirely inside $[l_1, r_1]$ ($[l_2, r_2]$), whose indices can be found with binary search.
    The contribution of row updates to the points from these ranges is the result of a query on our instance over the corresponding ranges.
    $O(1)$ parts of the query range may fall within part of a row or column update: the contribution from these can be found by scaling the result of a similarly constructed query. 

    ({\sc Weighted 2RangeInversionsQuery} $\rightarrow$ {\sc 2RangeInversionsQuery}).
    We apply the same trick as Duraj et al. used to reduce multiedge instances of {\sc Triangle Counting} to single edge instances, with a multiplicative factor of $O(\log^2 n)$.
    This is achieved by converting each weight into a binary string of $\log n$ bits, and considering each possible pair of bit positions of the weights corresponding to the inverted elements.
    Formally, for an integer $b$ in $[0, \log n]$, let $A_b$ be the array $A$ containing only those $A[i]$ such that $W[i]$ has its $b$-th bit set.
    Then for each $(b_i, b_j)$ in $[0, \log n]^2$, we create an instance with $A_{b_i}$ concatenated with $A_{b_j}$.
    For each pair of query ranges $[l', r']$ and $[l'', r'']$, we query for $[l', r']$ in the first part of the array, and for $[l'', r'']$ in the second part of the array, adjusting the indices to account for the elements removed from the array.

    ({\sc 2RangeInversionsQuery} $\rightarrow$ {\sc Static Trellised \problemname{$\setf$}{$+$}}). 
    We produce an instance over an $n \times n$ grid, and perform (trellised) updates to the grid as follows.
    For each value $v$ appearing in $A$, in non-decreasing order: first, for each index $i$ such that $A[i] = v$, set the values in row $i$ to 1, then for each index $j$ such that $A[j] = v$, set the values in column $j$ to 0.
    The order guarantees that after these updates have been performed, $A[i][j] = 1$ if and only if $A[i] > A[j]$.
    Each pair of query ranges $[l', r']$ and $[l'', r'']$ corresponds to a query over the range $[l', r'] \times [l'', r'']$ in our instance.
\end{proof}

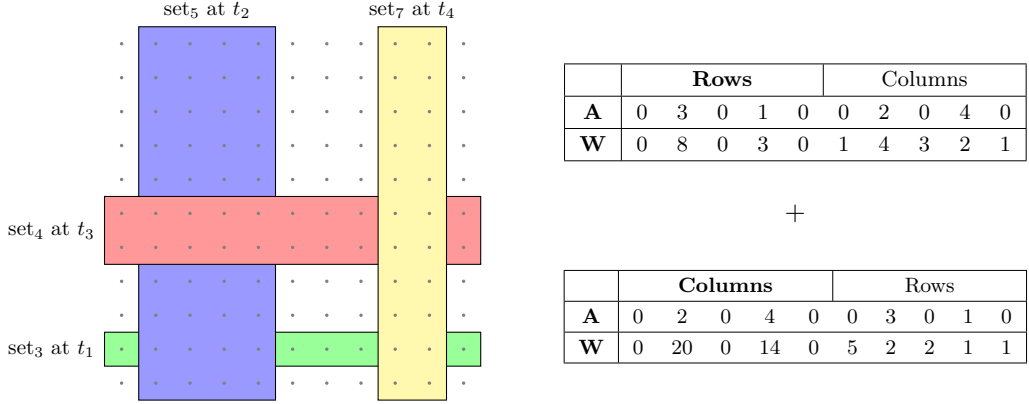
\begin{figure}
    \begin{tikzpicture}
        \begin{scope}[scale=0.45, every node/.style={scale=0.8}]
            \begin{scope}[shift={(0.3, 0)}]
                \draw[fill=green!40] (-0.5, 0.5) rectangle (10.5, 1.5);
                \draw[fill=blue!40] (0.5,-0.5) rectangle (4.5, 10.5);
                \draw[fill=red!40] (-0.5, 3.5) rectangle (10.5, 5.5);
                \draw[fill=yellow!40] (7.5,-0.5) rectangle (9.5, 10.5);
                \foreach \x in {0,1,...,10} {
                    \foreach \y in {0,1,...,10} {
                        \fill[color=gray] (\x,\y) circle (0.05);
                    }
                }
                \node[align=left] at (2.5, 11) {$\setf_5$ at $t_2$};
                \node[align=left] at (8.5, 11) {$\setf_7$ at $t_4$};
            \end{scope}
            \node[align=left] at (-1.75, 4.5) {$\setf_4$ at $t_3$};
            \node[align=left] at (-1.75, 1) {$\setf_3$ at $t_1$};
        \end{scope}

        \node[align=left] at (9, 2.25) {
                \resizebox{0.45 \textwidth}{!}{
                \begin{tabular}{|c|cccccccccc|}
                    \hline
                    & \multicolumn{5}{c|}{\bf Rows} & \multicolumn{5}{c|}{Columns} \\
                    \hline
                    {\bf A} & 0 & 3 & 0 & 1 & 0 & 0 & 2 & 0 & 4 & 0 \\
                    \hline
                    {\bf W} & 0 & 8 & 0 & 3 & 0 & 1 & 4 & 3 & 2 & 1 \\
                    \hline
                \end{tabular}
                }
                \\
                \\
                \\
                \\
                \\
                \resizebox{0.45 \textwidth}{!}{
                \begin{tabular}{|c|cccccccccc|}
                    \hline
                    & \multicolumn{5}{c|}{\bf Columns} & \multicolumn{5}{c|}{Rows} \\
                    \hline
                    {\bf A} & 0 & 2 & 0 & 4 & 0 & 0 & 3 & 0 & 1 & 0 \\
                    \hline                                           
                    {\bf W} & 0 & 20 & 0 & 14 & 0 & 5 & 2 & 2 & 1 & 1 \\
                    \hline
                \end{tabular}
                }
            };
        \node[align=center] at (9, 2.25) {$+$};
    \end{tikzpicture}
    \caption{Reducing {\sc Static Trellised \problemname{$\setf$}{$+$}} to {\sc Weighted 2RangeInversionsQuery}. Updates occur at times $t_1$ through $t_4$. Separate instances for row and column contributions.}
    \label{fig:reduce}
\end{figure}

\noindent
Duraj et al. \cite{Duraj2020} gave an $\TO(n^{(2\omega-2)/(\omega+1)}(n+q)^{2/(\omega+1)})$ time\footnote{The multivariate running time given in \cite{Duraj2020} is slightly better than this when $q \leq n$, but this simplified form suffices our purposes.} algorithm for {\sc RangeEqPairsQuery}, so we obtain the following truly subquadratic time algorithm for \setsum.

\begin{restatable}{theorem}{setsumthm}
    \label{setsum}
    \setsum{} can be solved in $\TO(n^{5/4 + \omega/(\omega+1)}) = O(n^{1.954})$ time. 
\end{restatable}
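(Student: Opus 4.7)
The plan is to invoke \autoref{general-static-trellis} with $d = 2$, $q = +$, and $U = \{\setf\}$, verifying its three hypotheses using only machinery already assembled in the paper. Most of the work has effectively been done: the real content is packaged into \autoref{trellis-range-equiv} and the algorithm of Duraj et al., so this proof should amount to checking that the pieces plug in cleanly and to computing the resulting exponent.

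For the first hypothesis, I would take $\bar{U} = U = \{\setf\}$, which is closed under composition because $\setf_a \circ \setf_b = \setf_a$, and each $\setf_c$ is represented by a single integer, so composition and evaluation cost $O(1)$. For the second hypothesis, I would implement {\sc 1D Partitioned Range $(+, \setf)$} by building a lazily-propagated 1D segment tree (\autoref{lazyprop}) over the partition indices, and treating a $\text{split}(i, a, q_l, q_r)$ as a pair of $\setf$-assignments that install $q_l$ and $q_r$ on the two halves of the split block. Because $\setf$ overwrites whatever value was present, the individual entries of $A_0$ are never needed: knowing the block aggregates suffices. Subsequent range $\setf$ updates and $+$ queries are then forwarded to the underlying tree, costing $\TO(1)$ per operation.

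For the third hypothesis, I would combine \autoref{trellis-range-equiv} with the $\TO(n^{(2\omega-2)/(\omega+1)}(n+q)^{2/(\omega+1)})$ time algorithm of Duraj et al.\ for {\sc 2RangeInversionsQuery}, which, via the equivalence, gives a {\sc Static Trellised \setsum} algorithm matching the template $\TO(\numupdates^{c-\gamma}(\numupdates+\numqueries)^{\gamma})$ with $c = 2\omega/(\omega+1)$ and $\gamma = 2/(\omega+1)$. Plugging $c = 2\omega/(\omega+1)$ into the conclusion $\TO(n^{5/4 + c/2})$ of \autoref{general-static-trellis} yields $\TO(n^{5/4 + \omega/(\omega+1)})$; substituting $\omega < 2.37286$ gives $5/4 + \omega/(\omega+1) < 1.9536$, so the overall bound is $O(n^{1.954})$.

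The main obstacle here is really not in the present theorem but in the ingredients it relies on: the combinatorial reduction of \autoref{trellis-range-equiv}, and the matrix-multiplication-based algorithm of Duraj et al. Given these, the only delicate step in this proof is the verification that {\sc 1D Partitioned Range $(+, \setf)$} is truly $\TO(1)$ per operation, and this is clean because $\setf$ is idempotent in a strong sense, so aggregated initial data is sufficient.
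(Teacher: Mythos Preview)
Your proposal is correct and follows essentially the same route as the paper: verify the three hypotheses of \autoref{general-static-trellis} with $\bar U=\{\setf\}$, a 1D segment-tree implementation of {\sc 1D Partitioned Range $(+,\setf)$}, and the equivalence of \autoref{trellis-range-equiv} combined with the Duraj et al.\ bound to read off $c=2\omega/(\omega+1)$ and $\gamma=2/(\omega+1)$, yielding $\TO(n^{5/4+\omega/(\omega+1)})$. One small caveat on your second hypothesis: when a $\text{split}$ arrives after a $\setf_c$ update has already covered that block, blindly installing $q_l,q_r$ (which describe $A_0$) would be wrong---you must instead use $c$ times the half-widths---but this is a trivial bookkeeping fix (track per-block whether a $\setf$ has landed), and the paper's own description glosses over the same point.
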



\subsection{\texorpdfstring{\problemname{$\{+, \setf\}$}{$+$}}{2D Grid Range (+, \{+, set\})} by counting 3-paths}
\label{sec:setplus-sum}

We will once again employ \autoref{general-static-trellis} to give a truly subquadratic time algorithm for \problemname{$\{+, \setf\}$}{$+$}.
The first condition is met with $\bar{U} = \{+_c: |c| \leq ng\} \dot\cup \{\setf_c: |c| \leq ng\} $, where $g$ is the largest value in the input, since $+_a \circ \setf_b = \setf_{a + b}$.
These can be represented in $O(1)$ words.
The second condition can be met with a {\sc 1D Grid \range{$\setf$}{$+$}} data structure, constructed similarly as in the previous section.
Fulfilling the third condition is the subject of the remainder of this section.

We begin by defining the following graph problems.

\begin{definition}
    The {\sc $k$-WalkQuery} (resp. {\sc Simple$k$-PathQuery}) problem gives, as input, a simple graph with $m$ edges and $O(m)$ vertices, and poses $q$ online queries, each asking for the number of $k$-edge walks (resp. simple $k$-edge paths) between a given pair of vertices.
\end{definition}

Duraj et al. \cite{Duraj2020} proved an equivalence between {\sc RangeEqPairsQuery} when $n = q$, and {\sc EdgeTriangleCounting}, which gives as input an $n$-edge, $O(n)$-vertex graph, and asks to count the number of triangles each edge is contained in.
When the restriction $n = q$ is relaxed, an equivalence can be drawn with {\sc 2WalkQuery} instead. 
In the same vein, we reduce {\sc Static Trellised \problemname{$\{+, \setf\}$}{$+$}} to {\sc 3WalkQuery} via a generalisation of {\sc RangeEqPairsQuery}.


First, we define a class of problems that generalises {\sc 2RangeEqPairsQuery} and {\sc 2RangeInversionsQuery}.

\begin{definition}
    Let $A$ and $W_A$ be an integer arrays of length $n$, and suppose $L$, $R$ and $W$ are arrays of length $n$, such that $1 \leq L[i] \leq R[i] \leq n$ for all $i$.
    The {\sc Weighted-$(0,1)$-IndexedEqPairsQuery} problem poses $q$ queries, each accepting a pair of ranges $[l, r]$ and $[l', r']$ and asking for the sum
    \[
        \sum_{i \in [l, r]} \sum_{k' \in [l', r']} \sum_{j \in [L[k'], R[k']]} \ind_{A[i] = A[j]} \cdot W_A[i] \cdot W_A[j] \cdot W[k'].
    \]
    The {\sc Weighted-$(0,1)$-IndexedInversionsQuery} problem poses $q$ queries, each accepting a pair of ranges $[l, r]$ and $[l', r']$ and asking for the sum
    \[
        \sum_{i \in [l, r]} \sum_{k' \in [l', r']} \sum_{j \in [L[k'], R[k']]} \ind_{A[i] > A[j]} \cdot W_A[i] \cdot W_A[j] \cdot W[k'].
    \]
    The {\sc $(0,1)$-IndexedEqPairsQuery} and {\sc $(0,1)$-IndexedInversionsQuery} are the problems with the added restriction that every entry in $W_A$ and $W$ is 1.
\end{definition}

These problems are so named as the first range in each query $[l, r]$ directly indexes the array $A$, whereas the second range $[l', r']$ indexes an array of ranges (defined by $L$ and $R$).
Hence, the parameters undergo 0 and 1 levels of indirection, respectively.
{\sc 2RangeEqPairsQuery} and {\sc 2RangeInversionsQuery} are thus {\sc $(0,0)$-IndexedEqPairsQuery} and {\sc $(0,0)$-IndexedInversionsQuery}, respectively. 

Through reductions similar to Duraj et al. \cite{Duraj2020} and those in the previous subsection, these are equivalent, up to polylogarithmic factors.

\begin{lemma}
    \sloppy
    {\sc $(0,1)$-IndexedEqPairsQuery}, {\sc Weighted-$(0,1)$-IndexedEqPairsQuery}, {\sc $(0,1)$-IndexedEqPairsQuery} and {\sc Weighted-$(0,1)$-IndexedEqPairsQuery} have the same running time, up to polylogarithmic factors, even when the running time is expressed as a function of $n$ and $q$.
\end{lemma}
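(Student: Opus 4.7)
The plan is to follow the template of Duraj et al.~\cite{Duraj2020} for the analogous equivalences in the $(0,0)$-indexed setting (that is, for {\sc 2RangeEqPairsQuery} vs {\sc 2RangeInversionsQuery}, weighted and unweighted), and to check that each of their transformations carries over in the presence of the inner indirection via the arrays $L$ and $R$.

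The unweighted-to-weighted directions are immediate: setting every entry of $W_A$ and $W$ to $1$ turns any unweighted instance into a weighted instance with identical answer. For the reverse weighted-to-unweighted directions, I would apply the binary-decomposition trick already used in the proof of \autoref{trellis-range-equiv}. Each of $W_A[i]$, $W_A[j]$, $W[k']$ is an integer expressible in $O(\log n)$ bits, so
\[
    W_A[i] \cdot W_A[j] \cdot W[k'] \;=\; \sum_{b_1, b_2, b_3} 2^{b_1+b_2+b_3}\, \ind_{b_1\text{-th bit of }W_A[i]}\, \ind_{b_2\text{-th bit of }W_A[j]}\, \ind_{b_3\text{-th bit of }W[k']}.
\]
For each of the $O(\log^3 n)$ bit-triples, produce an unweighted instance by ``masking out'' the array entries whose relevant weight bit is zero. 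To avoid conflating the two roles of $A$ (once indexed by $i\in[l,r]$, once indexed by $j\in[L[k'],R[k']]$), duplicate $A$ as in the proof of \autoref{trellis-range-equiv}, applying the $b_1$-mask to the first copy and the $b_2$-mask to the second; masked-out positions are assigned pairwise-distinct sentinel values so they never contribute to any equality or inversion count. Summing the answers with weights $2^{b_1+b_2+b_3}$ recovers the weighted total, with only $\mathrm{poly}\log n$ overhead.

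For the equality-versus-inversions direction, I would adapt Duraj et al.'s equivalence between {\sc RangeEqPairsQuery} and {\sc 2RangeInversionsQuery}. In the $\mathrm{Eq}\to\mathrm{Inv}$ direction, use
\[
    \ind_{A[i] = A[j]} \;=\; \ind_{A[i] \ge A[j]} - \ind_{A[i] > A[j]},
\]
converting $\ge$ into strict $>$ by rescaling (e.g.\ map the ``$i$-side'' values to $2A[\cdot]$ and the ``$j$-side'' values to $2A[\cdot]-1$) and answering each equality query with a constant number of inversion queries on the transformed instance. In the $\mathrm{Inv}\to\mathrm{Eq}$ direction, perform a divide-and-conquer over the sorted value domain: at each split of the values into a ``low'' half and a ``high'' half, the cross-contribution to the inversion count can be written as a product of counts, each of which is an equality-pair query on an appropriately $\{0,1\}$-labelled copy of the arrays; recursion contributes an $O(\log n)$ overhead.

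The main obstacle will be verifying that the $(0,1)$-indexed structure survives all of these transformations. Because the inner range $[L[k'], R[k']]$ is determined by $k'$ (and not by the query), each transformation must act only on $A$, $W_A$, $W$, and on the outer query parameters $[l,r]$ and $[l',r']$, while leaving the mapping $k'\mapsto[L[k'], R[k']]$ structurally intact -- possibly replicating a shared copy of $L$ and $R$ across the $\mathrm{poly}\log n$ sub-instances. Once this compatibility is confirmed, composing the three directions above yields pairwise reductions with only $\mathrm{poly}\log n$ overhead in both $n$ and $q$, establishing the claimed equivalence.
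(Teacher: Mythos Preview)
Your proposal is correct and follows essentially the same template as the paper, which in turn follows Duraj et al.: bit-decomposition of the weights to remove weighting (you handle three weight factors for $O(\log^3 n)$ sub-instances, which is fine), and a bit/prefix decomposition over the value domain to interconvert equalities and inversions. The paper's written proof actually only spells out the unweighted \textsc{EqPairs}$\leftrightarrow$\textsc{Inversions} directions (using ``total pairs minus inversions in both orientations'' for one direction and the shared-binary-prefix trick for the other), leaving the weighted$\leftrightarrow$unweighted reductions implicit by reference to \autoref{trellis-range-equiv}; your sketch is therefore more complete, and your care about duplicating $A$ so that the $i$-side and $j$-side can be relabelled independently (with $L,R$ shifted into the second copy) is exactly the point that makes the $(0,1)$-indexed structure go through.
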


\begin{proof}
%
    ({\sc $(0,1)$-IndexedEqPairsQuery} $\rightarrow$ {\sc $(0,1)$-IndexedInversionsQuery})
    We can easily compute the number of pairs of entries using prefix sums: this takes time linear in the input, and any algorithm for {\sc $(0,1)$-IndexedInversionsQuery} takes at least as long.
    For each query, count the number of inversions between the pair of query ranges.
    By maintaining a copy of $A$ with all entries negated, we can also count the number of inversions in the opposite direction.
    Subtracting the answers to these two queries from the number of pairs in each query gives the answer.

    ({\sc $(0,1)$-IndexedInversionsQuery} $\rightarrow$ {\sc $(0,1)$-IndexedEqPairsQuery})
    Suppose each of the entries of $A$ are $z$-bit integers.
    If $a > b$, then the (padded) binary representations of $a$ and $b$ share some prefix, with $a$'s next bit being 1, and $b$'s next bit being 0.
    We reduce to $O(\log n)$ instances of {\sc $(0,1)$-IndexedEqPairsQuery}: one for each prefix.
    Let $v_i(x)$ be the $i$th bit of $x$, from most significant, and let $p_i(x) = \lfloor x / 2^{z-i} \rfloor$ be the $i$ most significant bits of $x$.
    For each $i = 1, \ldots, z$ we create an array $A_i$ such that:
    \begin{align*}
        A_i[j] &= p_{i-1}(A[j]) \text{ if $v_i(A[j]) = 1$ and } -\infty \text{ otherwise} \\
        A_i[n + j] &= p_{i-1}(A[j]) \text{ if $v_i(A[j]) = 0$ and } \infty \text{ otherwise}
    \end{align*}
    The answer to a query over $A$ is the sum of the answers to the queries over each $A_i$.
\end{proof}

Next, we reduce from {\sc Static Trellised \problemname{$\{+, \setf\}$}{$+$}} to these problems.

\begin{lemma}
    If {\sc $(0,1)$-WeightedIndexedInversionsQuery} can be solved in $T(n, q)$ time, then {\sc Static Trellised \problemname{$\{+, \setf\}$}{$+$}} can be solved in $\TO(T(n, q))$ time.
\end{lemma}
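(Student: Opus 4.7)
Begin with an $\TO(n)$ chronological preprocessing sweep that produces, for each row $i$, the time $t_r(i)$ and value $s_r(i)$ of $i$'s latest row-$\setf$ (defaulting to $(-\infty, 0)$), together with the effective row-$+$ accumulation $P_r(i) := \sum_u c_u \cdot \mathbf{1}[i \in [a_u, b_u]\wedge t_u > t_r(i)]$ of the row-$+$ updates that outlast $i$'s last row-$\setf$; the analogous $t_c, s_c, P_c$ are produced for columns. The final value at $(i, j)$ then decomposes as $V(i, j) + \Phi_r(i, j) + \Phi_c(i, j)$, where $V(i, j) \in \{s_r(i), s_c(j)\}$ records the later of the two $\setf$'s covering $(i, j)$ and $\Phi_r, \Phi_c$ sum the row- and column-$+$ updates whose time exceeds $\max(t_r(i), t_c(j))$. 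A rectangle-sum query therefore splits as $\Sigma_V + \Sigma_r + \Sigma_c$; $\Sigma_V$ is handled exactly as in \autoref{trellis-range-equiv} and reduces to {\sc Weighted 2RangeInversionsQuery}, the singleton $L[k']=R[k']$ sub-case of {\sc Weighted-$(0,1)$-IndexedInversionsQuery}. By the row/column symmetry it remains only to reduce $\Sigma_r$.

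\textbf{Reducing $\Sigma_r$.} Condition on whether $t_c(j) < t_r(i)$ or $t_r(i) \leq t_c(j)$. In the first branch the effectiveness clause $\max(t_r(i), t_c(j)) < t_u$ collapses to $t_r(i) < t_u$ --- exactly the filter defining $P_r$ --- so the contribution rewrites as $\sum_{i \in R_r} P_r(i) \cdot |\{j \in R_c : t_c(j) < t_r(i)\}|$, a single {\sc Weighted 2RangeInversionsQuery}. The complementary branch leaves the constraint $t_c(j) < t_u$ and yields
\[
  \sum_{(u, i, j)} c_u \cdot \mathbf{1}\bigl[i \in R_r \cap [a_u,b_u],\ j \in R_c,\ t_r(i) \leq t_c(j) < t_u\bigr],
\]
a ``count weighted $3$-walks in a tripartite graph with both endpoints range-restricted'' query --- precisely the scenario the indexed indirection of {\sc Weighted-$(0,1)$-IndexedInversionsQuery} is designed for. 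I take $k_q = u$ (so $[L[u], R[u]] := [a_u, b_u]$ and $W[u] := c_u$), $j_q = i$ inside $[L[u], R[u]]$, and $i_q = j$ in $[l, r] := R_c$.

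\textbf{Main obstacle and resolution.} The principal obstacle is that the single inversion $A[i_q] > A[j_q]$ can verify only one comparison, but the hard branch demands two: $t_r(i) \leq t_c(j)$ and $t_c(j) < t_u$. My plan is to decompose $R_r$ across a segment tree over rows into $O(\log n)$ canonical nodes $v$; in preprocessing, compile for each $v$ the count $\rho_v(u) := |\{i \in v : t_r(i) < t_u\}|$ for every update $u$ whose canonical row decomposition is an ancestor of $v$, accessing $\rho_v$ on the fly via persistent 1D range structures so that total preprocessing stays in $\TO(n)$. A query then emits one {\sc Weighted-$(0,1)$-IndexedInversionsQuery} sub-query per $v$ whose middle-index ordering groups the relevant $u$'s into a contiguous block, whose static weight is $c_u \cdot \rho_v(u)$, and whose inner range $[L[u], R[u]]$ is a singleton sentinel with $A$-value $-t_u$; the single inversion against the column entries $A[i_q] = -t_c(j)$ then tests precisely $t_c(j) < t_u$, and an inclusion--exclusion against the easy branch already computed cancels the unwanted $t_r(i) > t_c(j)$ triples. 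The $O(\log^{O(1)} n)$ overhead is absorbed into $\TO(T(n, q))$, and the symmetric treatment of $\Sigma_c$ finishes the reduction.
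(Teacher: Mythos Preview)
Your high-level decomposition is sound and matches the paper: split the value at $(i,j)$ into the latest-$\setf$ contribution (handled by \autoref{trellis-range-equiv}) plus the row-$+$ and column-$+$ contributions, and note that your easy/hard split for $\Sigma_r$ telescopes to the single quantity
\[
  Q \;=\; \sum_{(u,i,j)} c_u \cdot \mathbf{1}\bigl[i \in R_r \cap [a_u,b_u],\ j \in R_c,\ t_r(i) < t_u,\ t_c(j) < t_u\bigr],
\]
which is exactly what the paper targets. The gap is in your ``Main obstacle and resolution'' paragraph.

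\medskip
\noindent\textbf{The problem with $\rho_v(u)$ as a weight.} You propose to absorb the constraint $t_r(i) < t_u$ into a count $\rho_v(u) = |\{i \in v : t_r(i) < t_u\}|$ and use $c_u \cdot \rho_v(u)$ as the middle-index weight $W[k']$. But $v$ is a canonical node of the \emph{query} range $R_r$, so $\rho_v(u)$ is query-dependent, while $W[\cdot]$ in the {\sc Weighted-$(0,1)$-IndexedInversionsQuery} instance must be fixed at construction time. Materialising one copy of each relevant $u$ per segment-tree node $v$ with its own static weight $\rho_v(u)$ does not save you: the total number of pairs $(v,u)$ with $v \subseteq [a_u,b_u]$ is $\Theta\bigl(\sum_u |[a_u,b_u]|\bigr)$, which is $\Theta(n^2)$ in the worst case, so the instance is no longer of size $\TO(n)$. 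The ``persistent 1D range structures'' you invoke let you \emph{evaluate} $\rho_v(u)$ cheaply, but give no way to install it as a static weight. A second, related defect: by restricting to updates whose canonical decomposition has a node that is an ancestor of $v$, you only capture $u$ with $v \subseteq [a_u,b_u]$; updates whose row range partially overlaps $v$ (canonical nodes strictly inside $v$) are simply dropped, and these can carry the bulk of the contribution when $R_r$ is large and $[a_u,b_u]$ is small.

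\medskip
\noindent\textbf{How the paper resolves this.} The fix is to reverse the roles of row and update in the indirection. Build the segment tree over compressed rows and, at each node $w$, store the list of row-$+$ updates whose canonical decomposition contains $w$, sorted by \emph{decreasing} time; concatenate these lists to form $A_2$, with entry value $t_u$ and weight $c_u$. Now let the middle index $k'$ range over pairs (compressed row $i$, ancestor $w$ of leaf $i$), concatenated in row order; for each such pair, $[L[k'],R[k']]$ is the prefix of $w$'s list consisting of updates with $t_u > t_r(i)$ (found by binary search), and $W[k']$ is the width of compressed row $i$. Two things are gained: the constraint $t_r(i) < t_u$ is enforced structurally by the inner range rather than by a weight, so $W[k']$ is genuinely static; and because every update covering row $i$ has some canonical node on the root-to-leaf-$i$ path, no updates are missed. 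The single inversion between $A_1[j] = t_c(j)$ and the $A_2$ entry $t_u$ then handles the remaining comparison, and the query range on the middle index is just the contiguous block of compressed rows in $R_r$. The total instance size is $\TO(n)$ because each update contributes $O(\log n)$ entries to $A_2$ and each compressed row contributes $O(\log n)$ entries to $L,R$.
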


\begin{proof}
    Observe that the value of any point is equal to its value at the time $t$ of the latest $\setf$ update covering it, plus the sum of all $+$ updates covering it that occur after $t$.
    Hence, we can consider the contributions of $\setf$ and $+$ updates separately.
    The contributions of $\setf$ updates can be computed in the same way as for {\sc Static Trellised \problemname{$\setf$}{$+$}}: namely, by reducing to {\sc 2RangeInversionsQuery} which is the special case of {\sc $(0,1)$-IndexedInversionsQuery} with all $L[i] = R[i] = i$.

    It remains to determine the contribution of the $+$ updates.
    We will show how to determine the contributions of row $+$ updates only: column $+$ updates can be determined separately in the same way.
    We construct an instance of {\sc Weighted $(0,1)$-IndexedInversionsQuery}, over an array $A = A_1 A_2$.
    By compressing the coordinates of the grid according to the coordinates present in $+$ and $\setf$ updates, we assume there are $n' \leq 2n+1$ \emph{compressed} rows and columns, each with a certain width.
    Every update covers precisely a range of compressed rows and compressed columns.

    Let $t_R[i]$ be the time of the latest $\setf$ row update covering compressed row $i$, and let $t_C[j]$ be the time of the latest $\setf$ column update covering compressed column $j$.
    $A_1$ will have length $n'$, with $A_1[i] = t_C[i]$, and the weight of the $i$th entry will be the width of compressed column $i$.
    Next, form a segment tree over compressed rows.
    For each range $[l, r]$, store a list of all row $+$ updates that decompose to $[l, r]$ (as defined in \autoref{sec:prelim}), sorted in decreasing order of time.
    Formally, the updates in this list are precisely those of the form ``apply $+_c$ to $[l_1, r_1] \times \mathbb{Z}$'', where $[l, r] \in \text{base}([l_1, r_1])$.
    Build an array of size $O(n \log n)$ by concatenating these lists in any order, while preserving the order within each list, with the value of each entry equal to the time of the entry.
    The weight of each entry is equal to the value of the $+$ update.
    This forms $A_2$.

    To build $L$ and $R$, we build two smaller lists, $L_i$ and $R_i$, for each compressed row $i$, and concatenate them in order.
    For compressed row $i$, for each segment tree range $i$ is in, binary search to find the largest prefix of entries in that range that occur after $t_R[i]$. 
    We append entries in $L_i$ and $R_i$ corresponding to the range of $A_2$ that these entries lie in.
    The weight of each entry is equal to the width of compressed row $i$.

    Each query corresponds to a range of rows and a range of columns.
    We decompose the range of rows into a range of whole compressed rows, and at most two fractions of compressed rows, and do the same for columns.
    Thus, we can write the query answer as the sum of queries over ranges of entire compressed rows and columns, scaling the result of a query over a fraction of a compressed row and/or column.
    For each of these ranges, we form a query by setting $[l, r]$ to correspond to the range of compressed columns in $A_1$, and set $[l', r']$ to be the range of entries in $L$ and $R$ corresponding to the range of compressed rows.
\end{proof}

As a means to solve {\sc $(0,1)$-IndexedEqPairsQuery}, we prove an equivalence between it and {\sc 3WalkQuery}.
Queries to {\sc 3WalkQuery} must be answered in an online fashion to give an online data structure for our {\sc Static Trellised} problem.


\begin{lemma}
    {\sc $(0,1)$-IndexedEqPairsQuery} is equivalent to {\sc 3WalkQuery}, up to polylogarithmic factors, even when the running time is expressed as a function of both $n$ and $q$.
\end{lemma}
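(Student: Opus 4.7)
The plan is to prove the equivalence through two reductions, each incurring only $\mathrm{polylog}(n)$ blow-up in $n$ and $q$.

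\textbf{Easy direction (3WalkQuery $\to$ $(0,1)$-IndexedEqPairsQuery).} I would concatenate the adjacency lists of all vertices of $G$ into a single array $A$ of length $2m$, noting for each vertex $u$ the range $[l_u, r_u]$ occupied by $N(u)$. At position $i$, set $L[i] = l_{A[i]}$ and $R[i] = r_{A[i]}$, so $[L[i], R[i]]$ is the adjacency range of whichever neighbor is stored at position $i$. A 3WalkQuery for the pair $(u, v)$ then becomes the IndexedEqPairsQuery with $[l, r] = [l_v, r_v]$ and $[l', r'] = [l_u, r_u]$: the variable $k'$ ranges over $N(u)$ picking some $a$; the variable $j \in [L[k'], R[k']]$ ranges over $N(a)$ picking some $b'$; and $i \in [l_v, r_v]$ ranges over $N(v)$ picking some $b$. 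The constraint $A[i] = A[j]$ enforces $b = b'$, so the triple sum equals $\sum_{a \in N(u)} |N(v) \cap N(a)|$, exactly the number of 3-walks from $u$ to $v$. The reduction is linear.

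\textbf{Hard direction ($(0,1)$-IndexedEqPairsQuery $\to$ 3WalkQuery).} I would construct a 4-layer graph whose layers correspond to canonical nodes of segment trees. Layer~1 contains one node $U_\rho$ for each canonical range $\rho$ over the $i$-axis, layer~2 one node $V_v$ for each distinct value $v$ of $A$, layer~3 one node $J_\sigma$ for each canonical range $\sigma$ over the $j$-axis, and layer~4 one node $W_\tau$ for each canonical range $\tau$ over the $k'$-axis. Intended edge multiplicities are $\mu(U_\rho, V_v) = |\{i \in \rho : A[i] = v\}|$, $\mu(V_v, J_\sigma) = |\{j \in \sigma : A[j] = v\}|$, and $\mu(J_\sigma, W_\tau) = |\{k' \in \tau : \sigma \in \mathrm{base}([L[k'], R[k']])\}|$. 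Multiplying out the weighted 3-walks from $U_\rho$ to $W_\tau$ and summing over $v$ and $\sigma$ telescopes (because $\mathrm{base}([L[k'], R[k']])$ is a disjoint cover of $[L[k'], R[k']]$) to the restricted IndexedEqPairsQuery count over $(i, k') \in \rho \times \tau$. Each input query $([l, r], [l', r'])$ decomposes into $O(\log^2 n)$ canonical pairs $(\rho, \tau)$, and the answer is the sum of the corresponding 3-walk counts.

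\textbf{From multi-edges to simple edges.} To meet the simple-graph requirement of 3WalkQuery, I would apply the standard binary decomposition: write each multiplicity in $O(\log n)$ bits and, for every triple $(b_1, b_2, b_3) \in [0, \log n]^3$, build a separate simple graph whose edges retain only the respective bits; the multi-graph 3-walk count is recovered as $\sum 2^{b_1 + b_2 + b_3}$ times the simple-graph counts. This produces $O(\log^3 n)$ simple graphs and multiplies the number of 3WalkQuery calls per input query by $O(\log^3 n)$, giving an overall $O(\log^5 n)$ factor in queries and an $O(\log^3 n)$ factor in graph construction.

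\textbf{Main obstacle.} The delicate point is keeping the Layer~3--Layer~4 edges to $\tilde O(n)$: a naive construction that indexed $j$ individually and added an edge $J_j W_\tau$ for every $(k', j)$ with $j \in [L[k'], R[k']]$ would give $\Omega(n^2)$ edges. The segment-tree canonicalisation of $[L[k'], R[k']]$ into $O(\log n)$ $\sigma$-ranges is what saves us: each $k'$ contributes only $O(\log n)$ choices of $\sigma$ and lies in $O(\log n)$ choices of $\tau$, yielding $O(n \log^2 n)$ such edges overall. The other two edge classes are similarly bounded by $O(n \log n)$. The main thing to verify carefully is that the telescoping over $\sigma \in \mathrm{base}([L[k'], R[k']])$ cleanly recovers the original $j$-summation without double-counting; the rest is routine bookkeeping.
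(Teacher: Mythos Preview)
Your proposal is correct and follows essentially the same construction as the paper: a four-layer graph whose layers are segment-tree nodes over the $k'$-axis, segment-tree nodes over the $j$-axis, the value set of $A$, and segment-tree nodes over the $i$-axis, with exactly the multiplicities you describe (the paper merely lists the layers in the reverse order $S$--$T_1$--$U$--$T_2$). Your explicit bit-decomposition step to pass from the multigraph to simple graphs is a detail the paper leaves implicit here (it invokes the same trick only in the earlier {\sc 2RangeInversionsQuery} reduction), so your write-up is in fact slightly more careful on that point.
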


\jl{Consider a diagram for this proof}
\begin{proof}
%
%

    ({\sc $(0,1)$-IndexedEqPairsQuery} $\rightarrow$ {\sc 3WalkQuery}) 
    We construct a graph whose vertex set is $S \dot\cup T_1 \dot\cup U \dot\cup T_2$, and define these four sets as follows.
    $S$ is the set of nodes in a segment tree with $n$ leaves, denoting the indices of $L$ and $R$.
    $T_1$ and $T_2$ are the set of nodes in a segment tree with $n$ leaves, denoting the $n$ indices of $A$.
    $U$ is the set of distinct values in $A$.
    For the node $s$ representing range $[l, r]$ in $S$, for each $i \in [l, r]$ and for each $I \in \text{base}([L[i], R[i]])$, add an edge from $s$ to the node in $T_1$ corresponding to $I$.
    There are $O(n \log n)$ such $(s, i)$ pairs, and $O(\log n)$ elements in $\text{base}([L[i], R[i]])$ interval, for $O(n \log^2 n)$ edges added.

    Next, for the node $t$ representing range $[l, r]$ in $T_1$, for each $i \in [l, r]$ add an edge from $t$ to the node in $U$ representing the value $A[i]$.
    Do likewise for $T_2$.

    In all, the graph contains $O(n)$ vertices and $O(n \log^2 n)$ edges.
    Every 3-edge walk from $S$ to $T_2$ passes through $T_1$ and $U$.
    To answer a query $[l, r]$ and $[l', r']$, for each pair of ranges $(I, I') \in \text{base}([l, r]) \times \text{base}([l', r'])$, add the number of 3-edge walks between the vertex corresponding to $I$ in $S$ and the vertex corresponding to $I'$ in $T_2$.

    ({\sc 3WalkQuery} $\rightarrow$ {\sc $(0,1)$-IndexedEqPairsQuery})
    Let $A$ be the concatenation of the adjacency lists of each vertex, in order.
    Let $L[i]$ and $R[i]$ correspond to the range in $A$ where the $i$th vertex's adjacency list is located.
    For each pair of query vertices $s$ and $t$, we add a query for the ranges $[L[s], R[s]]$ and $[L[t], R[t]]$.
    The equal elements correspond to vertices that can be reached in 1 edge from $s$, and in 2 edges from $t$.
\end{proof}

The structure of {\sc $(0,1)$-IndexedEqPairsQuery} and {\sc $(0,1)$-IndexedInversionsQuery} can be generalised by adding additional layers of ``indexing'' arrays $L$ and $R$ to both the first and second ranges in each query.
Informally, we remark that if there are $p$ layers of indexing applied to the first range, and $q$ layers to the second range, then we obtain problems equivalent to {\sc $(p+q+2)$-WalkQuery}, up to polylogarithmic factors of order $p+q+O(1)$.

%

\begin{theorem}
    \label{thm:threewalk}
    {\sc 3WalkQuery} can be solved in $O(m^{2\omega/(2\omega+1)} (m+q)^{(2\omega-1)/(2\omega+1)})$ time.
\end{theorem}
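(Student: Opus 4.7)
My plan is to combine a high/low degree split of the vertex set with fast rectangular matrix multiplication. Let $A$ denote the adjacency matrix of $G$; it has $O(m)$ rows and columns and $O(m)$ nonzero entries, and the number of $3$-edge walks from $u$ to $v$ is exactly $(A^3)_{uv}$. So the task is to compute $q$ specified entries of $A^3$. Fix a threshold $\Delta$ to be chosen later, and call a vertex \emph{heavy} if its degree exceeds $\Delta$ and \emph{light} otherwise; there are at most $O(m/\Delta)$ heavy vertices, and I denote this set by $H$.

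I will expand $(A^3)_{uv} = \sum_a A_{ua}(A^2)_{av}$, and symmetrically $(A^3)_{uv} = \sum_b (A^2)_{ub} A_{bv}$, splitting the sums by whether the intermediate vertex is heavy or light. For the light-intermediate contribution I enumerate over the at most $\Delta$ light neighbors of $u$ (or $v$), and evaluate each inner 2-walk count $(A^2)_{av} = |N(a) \cap N(v)|$ by iterating the smaller-degree neighborhood against a hash table of the other. A careful case analysis on the heavy/light classes of $u$ and $v$ — switching which side we expand on when the opposite endpoint is too dense — then bounds the per-query cost by $O(\Delta^2)$. For the heavy-intermediate contribution, I precompute the restricted products of $A^2$ on $H \times V$, on $V \times H$, and of $A^3$ on $H \times H$ once, so that each query reduces to a constant number of array lookups.

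The precomputation of these restricted products is the bottleneck. I will compute them using sparse-sensitive rectangular matrix multiplication in the style of Yuster and Zwick, exploiting that the restricted factor of $A$ on $H \times V$ has only $O(m)$ nonzero entries and that $|H| = O(m/\Delta)$; combining sparse enumeration with fast rectangular matrix multiplication on suitably densified blocks gives a preprocessing cost that trades off favorably against $\Delta$.

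Balancing $\Delta$ so that the preprocessing cost and the total per-query cost $O(q\Delta^2)$ are equal yields the claimed running time $O(m^{2\omega/(2\omega+1)}(m+q)^{(2\omega-1)/(2\omega+1)})$; the $m$ inside the $(m+q)$ factor captures the fact that the preprocessing cannot be driven below a threshold depending only on $m$, since $\Delta$ is capped by $m$. The main obstacle I anticipate is precisely the sparse rectangular matrix multiplication step: a naive bound of the form $\omega(|H|, |V|, |V|)$ treating the matrices as dense does not achieve the exponent $2\omega/(2\omega+1)$, so obtaining it requires the refined sparse decomposition of Yuster--Zwick-style algorithms, carefully transposed from square Boolean matrix multiplication to the rectangular counting products that arise here.
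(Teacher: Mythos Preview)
Your per-query bound of $O(\Delta^2)$ is too weak to reach the claimed exponent, and this is not a matter that the sparse Yuster--Zwick step can repair. If each query costs $\Theta(\Delta^2)$, then with $q=m$ the best you can do is balance $m\Delta^2$ against the unavoidable $(m/\Delta)^\omega$ cost of computing $A_{H,H}^3$; solving $m\Delta^2=(m/\Delta)^\omega$ gives total time $m^{3\omega/(\omega+2)}$, and $3\omega/(\omega+2) - (4\omega-1)/(2\omega+1) = 2(\omega-1)^2/((\omega+2)(2\omega+1)) > 0$, so you are strictly above the target for every $\omega\ge 2$. (At $\omega=2$ this is $m^{1.5}$ versus the target $m^{1.4}$.) Even more basically, with $\Delta$ small enough to make $m\Delta^2$ meet the target, $|H|^2$ already exceeds the target, so you cannot even write down $(A^3)_{H,H}$.

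The paper achieves $O(\Delta)$ per query, and to get there it precomputes, for every pair with at least one heavy endpoint, the number of $2$-walks \emph{through a non-heavy middle vertex}. This is exactly where a single threshold fails: with only $H$ versus $L$, the case of a walk $u\text{--}a\text{--}b\text{--}v$ with $u,v\in H$ and $a,b$ both light costs $\sum_{ab\in E,\,a,b\text{ light}} d_H(a)\,d_H(b)=O(m\Delta^2)$ to enumerate, and the matrix-multiplication alternative requires materialising an intermediate of size $|H|\cdot n = \Theta(m^2/\Delta)$. The paper's fix is a \emph{three}-tier split $H/M/L$ with $L$ having degree at most $\sqrt{\Delta}$: then $L$--$L$ middles enumerate in $O(m\Delta)$, while $M$-involved middles are handled by rectangular products with an $|M|=O(m/\sqrt{\Delta})$ inner dimension, costing $O\bigl(\omega(m/\Delta,\,m/\sqrt{\Delta},\,m/\Delta)\bigr)$. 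Balancing those against $(m+q)\Delta$ gives the stated bound. So the missing idea is not a sparse-matrix trick on the $H\times V$ product, but the second threshold at $\sqrt{\Delta}$.
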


\begin{proof}
    Let $\Delta > 0$ be a constant, and partition vertices into three disjoint sets $H$, $M$, and $L$ depending if their degree is greater than $\Delta$, greater than $\sqrt{\Delta}$ but at most $\Delta$, and at most $\sqrt{\Delta}$, respectively.
    Next, enumerate all 2-walks $abc$ such that $b \in M \cup L$, and store a count $C(a, c)$ of the number of such walks for each $(a, c)$ pair (note that possibly $a = c$).
    There are at most $m\Delta$ of these walks.
    From this, we can construct matrices $C_{H, H}$, $C_{H, M, M}$ and $C_{M, L, H}$, which are the counts with the restrictions $a, c \in H$; $a \in H$ and $b, c \in M$; and $a \in M, b \in L$ and $c \in H$, respectively.
    This takes time $O(m\Delta + |H||M|)$.
    We also compute the $|H| \times |M|$ matrix $D = G[H] \times G[H, M]$ in time $O(\omega(2m/\Delta, 2m/\Delta, 2m/\sqrt{\Delta}))$, where -- through a slight abuse of notation -- we write $G[H]$ for the adjacency matrix of the subgraph of $G$ induced by the vertex set $H$, and write $G[H, M]$ for the adjacency matrix of the bipartite graph induced by the disjoint vertex sets $H$ and $M$ in $G$.

    We consider disjoint cases according to which set each of the four vertices in the 3-walk falls into.
    For ease of description, we describe them as if they formed a closed 4-walk, by imagining an edge joining the two endpoints (see \autoref{fig:fourcycle}).
    For each, we describe any necessary precomputation, and how to answer each query.
    For each query between vertices $u$ and $v$, we consider the sets $u$ and $v$ are in, and sum the number of 3-walks in each case to obtain the result.

    \subparagraph*{Case 1: All vertices in $H$.}
    We directly precompute the number of 3-walks between each pair of vertices in $G[H]$, using matrix multiplication.
    This takes time $O(|H|^{\omega}) = O((2m/\Delta)^{\omega})$.

    \subparagraph*{Case 2: Two opposite vertices in $M \cup L$.}
    Consider a pair of query vertices $u$ and $v$, with at least one in $M \cup L$.
    Without loss of generality, if $u \in M \cup L$ but $v \not\in M \cup L$ then, the answer is $P(u, v) = \sum_{ua \in E} C(a, v)$.
    Otherwise, if both $u$ and $v$ are in $M \cup L$, the answer is $P(u, v) + P(v, u) - \sum_{ua \in E: a \in M \cup L} C(a, v)$.
    Each of these sums can be computed in $O(\Delta)$ time, since $d(u) \leq \Delta$ whenever $u \in M \cup L$.
    Hence, this case takes $O((m + q)\Delta)$ over all queries.

    \subparagraph*{Case 3: Three vertices in $H$.}
    We can compute the number of walks of the form $H \rightarrow M \cup L \rightarrow H \rightarrow H$ by multiplying $C_{H, H}$ by the adjacency matrix of $G[H]$ to give the matrix $P$, taking time $O((2m/\Delta)^\omega + m\Delta)$.
    Let $h_i$ be the $i$th $H$ vertex.
    If $u$ and $v$ are both in $H$, the answer is $P_{uv} + P_{vu}$.

    Otherwise, without loss of generality, assume $u \in H$ and $v \in M \cup L$.
    We can compute the number of 2-walks between pairs of vertices in $H$ by squaring the adjacency matrix of $G[H]$.
    We can then iterate over each vertex $h$ in $H$ adjacent to $v$, and add the number of 2-walks between $u$ and $h$.
    Hence, this case takes $O((m/\Delta)^\omega + (m + q)\Delta)$ time over all queries.

    \subparagraph*{Case 4a: Two adjacent vertices in $H$, two adjacent vertices in $M$.}
    We multiply $C_{H, M, M}$ by the adjacency matrix of $G[M, H]$ to compute the number of walks of the form $H \rightarrow M \rightarrow M \rightarrow H$ between all pairs of vertices in $H$.
    This takes time $O(\omega(2m/\Delta, 2m/\sqrt{\Delta}, 2m/\Delta))$.
    If $u, v \in H$, we can read off the answer from the matrix.

    Otherwise, at least one of $u$ and $v$ is in $M$: suppose, without loss of generality, that $u$ is.
    If $v \in H$, then the answer can be computed as $\sum_{um \in E : m \in M} D_{vm}$.
    If $v \in M$, we can find the answer as $\sum_{uh \in E : h \in H} D_{hv}$.
    Both cases can be computed in time $O(q\Delta)$ over all queries, since $d(u) \leq \Delta$.

    \subparagraph*{Case 4b: Two adjacent vertices in $H$, one vertex in $M$, one vertex in $L$.}
    In precomputation, multiply $G[H, M]$ by $C_{M, L, H}$ in time $O(\omega(2m/\Delta, 2m/\sqrt{\Delta}, 2m/\Delta))$ to compute the number of walks of the form $H \rightarrow M \rightarrow L \rightarrow H$ between all pairs of vertices in $H$.
    If this forms a matrix $F$, the answer when $u, v \in H$ is $F_{uv} + F_{vu}$.
    We also precompute for the case when $u \in M$ and $v \in H$ by multiplying $C_{M, L, H}$ by the adjacency matrix of $G[H]$.
    In this case, the answer can be read off directly.

    Otherwise, at least one of $u$ and $v$ is in $L$: suppose, without loss of generality, that $u$ is.
    If $v \in H$, we can compute the answer as $\sum_{um \in E : m \in M} D_{vm}$.
    Otherwise, $v \in M$ and we can compute the answer as $\sum_{uh \in E : h \in H} D_{hv}$.
    This runs in time $O(q \sqrt{\Delta})$ over all queries, since $d(u) \leq \sqrt{\Delta}$.

    \subparagraph*{Case 4c: Two adjacent vertices in $H$, two adjacent vertices in $L$.}
    First, we precompute the answer for the case where $u, v \in H$.
    For each $u, v \in H$, initialise a count $Q_{uv}$ to 0.
    For each edge $ab \in L$, we iterate over each pair of edges $ua$ and $bv$, such that $u, v \in H$, and increment $Q_{uv}$.
    This takes time $O(m \Delta)$ overall, since each $d(a), d(b) \leq \sqrt{\Delta}$.
    We can then read the answer for each query off $Q$.

    Otherwise, at least one of $u$ and $v$ is in $L$: suppose, without loss of generality, that $u$ is.
    If $v \in H$, then we can compute the answer as $\sum_{ul \in E : l \in L} |\{lh \in E : h \in H, hv \in E\}|$ in time $O(\Delta)$ per query, since $d(u), d(l) \leq \sqrt{\Delta}$.
    Otherwise, $v \in L$, and the answer can be computed as $\sum_{ua \in E : a \in H} |\{bv \in E : b \in H, ab \in E\}|$ in time $O(\Delta)$ per query, since $d(u), d(v) \leq \sqrt{\Delta}$.
    Hence, both cases can be computed in time $O(q\Delta)$ overall.

\begin{figure}
    \begin{tikzpicture}
        \tikzset{
            pics/fourcycle/.style args={#1,#2,#3,#4}{
                code={
                    \node (A)[align=center] at (1, 0) {#1};
                    \node (B)[align=center] at (2, -1) {#2};
                    \node (C)[align=center] at (1, -2) {#3};
                    \node (D)[align=center] at (0, -1) {#4};
                    \draw (A) -- (B) -- (C) -- (D);
                    \draw [dashed] (D) -- (A);
                }
            },
            casenum/.style={above left=.5cm, draw, inner sep=1pt, minimum size=13pt}
        }

        \def\vgap{3.2}

        \node[casenum] at (0, 0) {\bf 1};
        \draw (0, 0) pic{fourcycle={$H$, $H$, $H$, $H$}};

        \node[casenum] at (4, 0) {\bf 2};
        \draw (4.25, 0) pic{fourcycle={$M \cup L$, $H \cup M \cup L$, $M \cup L$, $H$}};
        \draw (8.5, 0) pic{fourcycle={$M \cup L$, $H \cup M \cup L$, $M \cup L$, $M \cup L$}};

        \node[casenum] at (0, -\vgap) {\bf 3};
        \draw (0, -\vgap) pic{fourcycle={$H$, $M \cup L$, $H$, $H$}};
        \draw (4.25, -\vgap) pic{fourcycle={$H$, $H$, $H$, $M \cup L$}};

        \node[casenum] at (0, -2*\vgap) {\bf 4a};
        \draw (0, -2*\vgap) pic{fourcycle={$H$, $M$, $M$, $H$}};
        \draw (4.25, -2*\vgap) pic{fourcycle={$M$, $H$, $M$, $H$}};
        \draw (8.5, -2*\vgap) pic{fourcycle={$M$, $H$, $H$, $M$}};

        \node[casenum] at (0, -3*\vgap) {\bf 4b};
        \draw (0, -3*\vgap) pic{fourcycle={$H$, $M$, $L$, $H$}};
        \draw (3.5, -3*\vgap) pic{fourcycle={$M$, $L$, $H$, $H$}};
        \draw (7, -3*\vgap) pic{fourcycle={$L$, $M$, $H$, $H$}};
        \draw (10.5, -3*\vgap) pic{fourcycle={$L$, $H$, $H$, $M$}};

        \node[casenum] at (0, -4*\vgap) {\bf 4c};
        \draw (0, -4*\vgap) pic{fourcycle={$H$, $L$, $L$, $H$}};
        \draw (4.25, -4*\vgap) pic{fourcycle={$L$, $L$, $H$, $H$}};
        \draw (8.5, -4*\vgap) pic{fourcycle={$L$, $H$, $H$, $L$}};
    \end{tikzpicture}

    \caption{The cases in proving \autoref{thm:threewalk}, in order of consideration. In each case, the top vertex is $u$, and the left vertex is $v$, with the vertex pair $(u, v)$ given in the query connected by a dashed line in the diagram, irrespective of whether they are joined by an edge in the input graph. We omit cases which are identical up to reflection.}
    \label{fig:fourcycle}
\end{figure}

    \medskip
    \noindent
    This concludes the description of our algorithm; we now analyse it.
    The time bottlenecks in our algorithm are the $O((m + q)\Delta)$ steps, and the rectangular matrix multiplication steps.
    Let $\Delta = m^{\alpha}$, for some $\alpha \in (0, 1)$.
    Using the upper bound $\omega(n^a, n^b, n^c) = O(n^{a + b + c - (3 - \omega) \min \{a, b, c\}})$, the latter steps run in time $O(m^{3-5\alpha/2-(3-\omega)(1-\alpha)}) = O(m^{\alpha/2 + \omega(1-\alpha)})$.
    Hence, we obtain an overall time complexity of $O((m + q)m^{\alpha} + m^{\alpha/2 + \omega(1-\alpha)})$.

    When $q \leq m$, we set $\alpha = 2(\omega-1)/(2\omega+1)$ for a complexity of $O(m^{(4\omega-1)/(2\omega+1)})$ as obtained by Yuster and Zwick \cite{Yuster2004}.
    Otherwise, let $\beta = \frac{\log q}{\log m}$, and set $\alpha = \frac{2(\omega - \beta)}{2\omega + 1}$ for balance.
    We may assume $\beta \leq 2 \leq \omega$ if we cache the results to each query, since there are only $O(m^2)$ distinct queries.
    This gives a running time of $O(m^{2\omega/(2\omega+1)} q^{(2\omega-1)/(2\omega+1)})$ when $q > m$, and thus an $O(m^{2\omega/(2\omega+1)} (m+q)^{(2\omega-1)/(2\omega+1)})$ time algorithm overall.

    Finally, if the value of $q$ is not known during precomputation, we run the algorithm with an initial guess of $q = 1$, and double our guess when more than $q$ queries have been received.
    This increases the runtime by at most a constant multiplicative factor.
\end{proof}

This gives us an algorithm with the same running time (up to polylogarithmic factors) for {\sc Static Trellised \problemname{\{$+, \setf$\}}{$+$}}, and thus the unrestricted problem, by \autoref{general-static-trellis}.

\begin{restatable}{theorem}{setplussum}
    \label{set-plus-sum}
    \problemname{$\{+, \setf\}$}{$+$} can be solved in time $\TO(n^{5/4 + (4\omega-1)/(4\omega+2)})$ $= O(n^{1.989})$.
\end{restatable}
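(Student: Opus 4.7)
The plan is to invoke \autoref{general-static-trellis} in two dimensions, so the task reduces to verifying its three hypotheses for $U = \{+, \setf\}$ and $q = +$, with a view to extracting the right exponent $c$ from a Static Trellised algorithm.

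First, I would set $\bar U = \{+_c : |c| \leq ng\} \cup \{\setf_c : |c| \leq ng\}$, where $g$ bounds the magnitudes of the input values. Any finite composition of maps in $U$ simplifies because $+_a \circ \setf_b = \setf_{a+b}$ and $\setf_a \circ \setf_b = \setf_a$, so the composition of at most $n$ such maps reduces to either a single $+_c$ or a single $\setf_c$ with $|c| = O(ng)$; this gives condition (1) with $\TO(1)$ representation and composition. For condition (2), I would solve {\sc 1D Partitioned Range}$(+, \{+, \setf\})$ by wrapping a standard {\sc 1D Grid Range}$(\{+, \setf\}, +)$ segment tree (solvable via \autoref{lazyprop} using the closed-form composition above): each $\texttt{split}(i, a, q_l, q_r)$ call is realised by two point-assignment $\setf$ updates that plant the already-known sums $q_l, q_r$ at representative indices of the two new sub-ranges, after which subsequent $\text{update}_j$ and $\text{query}$ calls are forwarded verbatim in $\TO(1)$ time per operation.

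The heart of the argument is condition (3): a {\sc Static Trellised} algorithm. I would chain the reductions already developed in the paper. Since only the latest $\setf$ update matters at each point, the contribution of $\setf$ updates is handled as in \autoref{trellis-range-equiv} via {\sc 2RangeInversionsQuery}. For the contribution of $+$ updates layered on top of the latest $\setf$, I would reduce to {\sc Weighted $(0,1)$-IndexedInversionsQuery} as in the preceding lemma: one axis indexes the compressed columns (weighted by column width, value $t_C[j]$), while the other axis is a segment-tree decomposition of $+$ row updates, where the index arrays $L, R$ cut off the prefix of updates occurring after the latest $\setf$ covering each compressed row. By the polylogarithmic equivalence of weighted/unweighted and inversions/eq-pairs, it suffices to solve {\sc $(0,1)$-IndexedEqPairsQuery}, which is equivalent to {\sc 3WalkQuery}. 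Applying \autoref{thm:threewalk} then yields a Static Trellised algorithm running in $\TO(\numupdates^{2\omega/(2\omega+1)} (\numupdates+\numqueries)^{(2\omega-1)/(2\omega+1)})$ time.

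Finally, I would read off the parameters: taking $\gamma = (2\omega-1)/(2\omega+1)$ and $c - \gamma = 2\omega/(2\omega+1)$ gives $c = (4\omega-1)/(2\omega+1)$. Substituting into the $d=2$ bound of \autoref{general-static-trellis} yields
\[
    \TO\bigl(n^{5/4 + c/2}\bigr) = \TO\bigl(n^{5/4 + (4\omega-1)/(4\omega+2)}\bigr) = O(n^{1.989}),
\]
as claimed. The main obstacle, and the step that requires the most care, is the {\sc Static Trellised} reduction: specifically checking that the two-sided decomposition (rows via segment-tree buckets with $L, R$, columns as a flat array) correctly realises the $+$-contribution as a single weighted indexed-inversions query, with polylogarithmic blowup absorbed into the $\TO(\cdot)$ throughout.
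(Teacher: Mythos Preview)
Your proposal is correct and follows essentially the same approach as the paper: verify the three hypotheses of \autoref{general-static-trellis} with $\bar U = \{+_c\} \cup \{\setf_c\}$, chain the reductions through {\sc Weighted $(0,1)$-IndexedInversionsQuery} to {\sc 3WalkQuery}, apply \autoref{thm:threewalk}, and read off $c = (4\omega-1)/(2\omega+1)$. One small imprecision: in your handling of $\texttt{split}$, planting $q_l, q_r$ verbatim is not quite right when prior updates in $J$ have already touched the range---you must first apply the stored lazy composition (either $+_c$ or $\setf_c$, scaled by the sub-range width) to $q_l, q_r$ before planting---but this is a routine fix within $\TO(1)$ per split and the paper is equally terse on this point.
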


To conclude this section, we also remark that our algorithm for {\sc 3WalkQuery} can be used to solve the analogous problem for simple 3-edge paths, and thus count the number of 4-cycles each edge of a graph is included in.

\begin{theorem}
    {\sc 3WalkQuery} is equivalent to {\sc Simple3PathQuery}, and both can be solved in $O(m^{2\omega/(2\omega+1)} (m+q)^{(2\omega-1)/(2\omega+1)})$ time.
    Hence, {\sc Edge4CycleCounting} (which asks to count the number of simple 4-cycles that each edge in an $m$-edge graph is contained in) can be solved in $O(m^{(4\omega-1)/(2\omega+1)}) = O(m^{1.478})$ time.
\end{theorem}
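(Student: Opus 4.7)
The plan is to establish three facts in sequence. First, I derive the equivalence between \textsc{3WalkQuery} and \textsc{Simple3PathQuery} via an inclusion–exclusion correction that is computable in $O(1)$ time per query. The running-time bound for \textsc{Simple3PathQuery} then follows immediately from \autoref{thm:threewalk}. Finally, I reduce \textsc{Edge4CycleCounting} to $m$ invocations of \textsc{Simple3PathQuery}, one per edge, and substitute $q = m$ into the resulting bound.

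For the equivalence, observe that a $3$-edge walk $u \to a \to b \to v$ in a simple graph automatically satisfies $u \neq a$, $a \neq b$ and $b \neq v$ (no self-loops), so restricting to queries with $u \neq v$, the walk fails to be simple iff $a = v$, $b = u$, or both. The number of walks in each category can be read off in $O(1)$ time after $O(m)$ preprocessing of the degree sequence and of a lookup table for edge membership: $[uv \in E] \cdot d(v)$ walks of the form $u \to v \to b \to v$, $[uv \in E] \cdot d(u)$ walks of the form $u \to a \to u \to v$, and $[uv \in E]$ walks of the form $u \to v \to u \to v$ in both. By inclusion–exclusion, the number of simple $3$-paths between $u$ and $v$ equals the number of $3$-walks minus $[uv \in E] \cdot (d(u) + d(v) - 1)$, and the reverse direction is symmetric. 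Composing with \autoref{thm:threewalk} gives the claimed $O(m^{2\omega/(2\omega+1)} (m+q)^{(2\omega-1)/(2\omega+1)})$-time algorithm for both problems.

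For \textsc{Edge4CycleCounting}, every simple $4$-cycle containing a fixed edge $uv$ has the form $u - v - x - y - u$ with four distinct vertices $u, v, x, y$. Deleting the edge $uv$ from such a cycle leaves a simple $3$-path $v - x - y - u$; conversely, any simple $3$-path from $v$ to $u$ has its three edges $vx, xy, yu$ distinct from $uv$ (since $x \neq u$ and $y \neq v$), and closes via $uv$ into a simple $4$-cycle through $uv$. This is a bijection: within any given $4$-cycle through $uv$, the length-$3$ simple walk from $v$ to $u$ is unique, the other option $v \to u$ having length $1$. Hence the number of simple $4$-cycles through edge $uv$ is exactly the value of \textsc{Simple3PathQuery} on $(v, u)$, and issuing one query per edge gives $q = m$. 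Substituting into the bound above produces $O\!\left(m^{2\omega/(2\omega+1)} \cdot m^{(2\omega-1)/(2\omega+1)}\right) = O(m^{(4\omega-1)/(2\omega+1)}) = O(m^{1.478})$.

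No step poses a serious obstacle. The only care required is to verify that the inclusion–exclusion correction does not over- or under-count non-simple walks, and that the bijection for \textsc{Edge4CycleCounting} is exact—both of which follow cleanly because the absence of self-loops already forces $u \neq a$, $a \neq b$ and $b \neq v$ in every walk, trimming the case analysis to the two symmetric events $\{a = v\}$ and $\{b = u\}$.
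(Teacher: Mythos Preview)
Your proof is correct and follows essentially the same approach as the paper: the same inclusion--exclusion correction $[uv\in E]\cdot(d(u)+d(v)-1)$ for non-simple walks, and the same observation that simple $3$-paths between the endpoints of an edge biject with the $4$-cycles through that edge. You supply more detail on the bijection and the preprocessing than the paper does, but the argument is identical in substance.
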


\begin{proof}
    Let $u$ and $v$ be a pair of distinct query vertices, and denote by $P_3(u, v)$ and $W_3(u, v)$ the number of simple 3-edge paths, and the number of 3-edge walks between $u$ and $v$.
    Suppose $uabv$ is a non-simple 3-edge walk.
    Since a simple graph contains no self-loops, $a \neq u$, $a \neq b$ and $b \neq v$.
    But since $uabv$ is non-simple, some vertex must be repeated, so either $a = v$, $b = u$, or both.
    Hence, if the edge $uv$ exists in the graph, $W_3(u, v) = P_3(u, v) + d(u) + d(v) - 1$, where $d(u)$ is the degree of $u$.
    Otherwise, every 3-walk from $u$ to $v$ is also a simple 3-path, so $W_3(u, v) = P_3(u, v)$.

    \sloppy
    Since the edge $uv$, if it exists, cannot be part of any simple 3-path from $u$ to $v$ it follows that {\sc Edge4CycleCounting} can be solved by counting the number of simple 3-paths between the endpoints of each edge.
\end{proof}

\noindent
{\sc Simple-$k$PathQuery} is $\#W[1]$-hard for parameter $k$ \cite{Flum2004}, meaning it is unlikely that an $f(k) n^{O(1)}$ time algorithm exists for it.
However, {\sc $k$WalkQuery} is solvable in $O(n^{\omega} \log k)$ on $n$-vertex graphs by exponentiating the adjacency matrix, so it is unlikely that such an equivalence holds between the two problems for arbitrary $k$.

\section{{\sc Range} problems over an explicit point set}

\label{sec:pointwise}

\kdtree*

\begin{proof}
    We construct a $k$d-tree \cite{Bentley1975} $T$ over the points in $P$.
    There is a single point of $P$ in each of the leaves of $T$: we assign these labels from 1 to $p$, according to the order of their appearance in a preorder traversal of $T$.

    We can represent the points in any orthogonal range as those in the disjoint union of $O(p^{1-1/d})$ subtrees of $T$ \cite{Lee1977}.
    The labels of points within these subtrees correspond to disjoint ranges of $[p]$.
    Hence, we can keep an instance of {\sc 1D \range{$U$}{$q$}}, and perform updates and queries on the corresponding ranges of labels using this data structure. 
\end{proof}

\begin{theorem}
   If either \pointwise{$+$}{$+$}{2} or \pointwise{$\max$}{$\max$}{2} can be solved in amortised $O(p^{1/2 -\epsilon})$ time per update or query, for any $\epsilon > 0$, then the \nameref{conj:OMv} is false.
\end{theorem}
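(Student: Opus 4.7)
The plan is to reduce the OMv problem to each of \pointwise{$+$}{$+$}{2} and \pointwise{$\max$}{$\max$}{2}, using point sets of size $p = \Theta(m^2)$ and $n = O(m^2)$ total operations, so that an amortised $O(p^{1/2-\epsilon})$ algorithm would yield an $O(n \cdot p^{1/2-\epsilon}) = O(m^{3-2\epsilon})$ time OMv algorithm, contradicting \nameref{conj:OMv}. In both reductions I would take the point set to be $P = \{(i, j) \in [m]^2 : M_{ij} = 1\}$, so that $p = |M|$, which we may assume is $\Theta(m^2)$ on the dense hard instances of OMv. All values begin at $0$, and the $m$ online query vectors $v^{(1)}, \ldots, v^{(m)}$ are processed one at a time, with the full vector $Mv^{(k)}$ extracted after each.

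For \pointwise{$+$}{$+$}{2}, the reduction is straightforward because $+$ is invertible: for each query $v^{(k)}$ I would perform a $+1$ update on every column $j$ with $v^{(k)}_j = 1$, issue $m$ row sum queries, and then undo the changes with $-1$ updates. Since column $j$ contains only points with $M_{ij} = 1$, the sum along row $i$ after the $+1$ updates is exactly $\sum_{j : M_{ij} = 1} v^{(k)}_j = (Mv^{(k)})_i$, so each query costs $O(m)$ operations and the total is $O(m^2)$.

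The main obstacle is the \pointwise{$\max$}{$\max$}{2} case, because $\max$ updates are monotone and cannot be undone between queries; naively, the state after query $k$ would contaminate query $k+1$. My plan is to avoid resets entirely by encoding the query index monotonically into the update values: on query $k$, perform $\max_k$ (rather than $\max_1$) on each column $j$ with $v^{(k)}_j = 1$. Each point $(i, j) \in P$ then records the largest index $k' \leq k$ for which $v^{(k')}_j = 1$ (or $0$ if no such $k'$ exists), and a max query over row $i$ returns the largest such $k'$ across all $j$ with $M_{ij} = 1$. This value equals $k$ precisely when $(Mv^{(k)})_i = 1$, and is strictly smaller otherwise, so the full output vector is recovered via $m$ row queries without any need for a reset. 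Once this monotone-indexing trick is in place, the per-query operation count and the point count both match the $+$ case, and the same arithmetic completes the argument.
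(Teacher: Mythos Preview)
Your proposal is correct and follows essentially the same approach as the paper: both encode $M$ as the point set $\{(i,j):M_{ij}=1\}$, use range $+$/$\max$ updates along one axis and range queries along the other, and handle the non-invertibility of $\max$ by tagging updates with the monotonically increasing query index $k$. The only cosmetic differences are that you reduce from OMv directly (computing the full vector $Mv^{(k)}$ via $m$ row queries) whereas the paper reduces from the equivalent OuMv variant, and your roles of rows and columns are swapped; neither changes the argument. One small point: you need not assume the hard instances are dense, since $p\le m^2$ already gives $p^{1/2-\epsilon}\le m^{1-2\epsilon}$, which is all the arithmetic requires.
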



\begin{proof}
    We reduce from OuMv.
    For each $M_{ij} = 1$, create a point $(i, j)$ in our instance of the data structure. 
    
    (\pointwise{$+$}{$+$}{2}) 
    We maintain that at the beginning of each query, the value of each point is set to 0. 
    We perform the following for each pair of query vectors $u$ and $v$. 
    For each $u_i = 1$, add 1 to all points in row $i$ in the data structure.
    After doing this, use the data structure to query the sum of all points in all columns $j$ where $v_j = 1$. 
    We observe that $u^{T}Mv = 1$ if and only if at least one of these queries was nonzero. 
    After performing these queries, we revert all of the updates by subtracting 1 from each row that we updated.
    
    (\pointwise{$\max$}{$\max$}{2})
    We maintain that at the beginning of the $i$-th query, the value of each point is at most $i-1$.
    We proceed similarly to \pointwise{$+$}{$+$}{2}, performing updates with value $i$ to rows and querying to check for the existence of $i$ in columns.
    
    In both cases, this uses $O(m)$ updates and queries to our data structure in each OuMv query, and $O(m^2)$ overall.
    Noticing that we create $O(m^2)$ points overall completes the proof.
\end{proof}

\begin{theorem}
   If any of \pointwise{$\setf$}{$+$}{2}, \pointwise{$\setf$}{$\max$}{2}, \pointwise{$+$}{$\max$}{2}, \pointwise{$\max$}{$+$}{2} or \pointwise{$\min$}{$\max$}{2} can be solved in amortised $O(p^{1/2 -\epsilon})$ time per update and amortised $O(p^{1 -\epsilon})$ time per query, for any $\epsilon > 0$, then the \nameref{conj:OMv} is false.
\end{theorem}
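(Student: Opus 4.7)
The plan is to reduce OuMv to each of the five problems by placing $p = \Theta(m^2)$ explicit points on an $m \times m$ grid, mimicking the corresponding grid reductions from \autoref{sec:lowerbounds}. The key observation is that each such reduction uses $O(m^2)$ total updates and $O(m)$ total queries to simulate all $m$ OuMv queries, so the hypothesised bounds yield total running time $O(m^2 \cdot p^{1/2-\epsilon}) + O(m \cdot p^{1-\epsilon}) = O(m^{3-2\epsilon})$, contradicting the \nameref{conj:OMv}.

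For each problem I would construct the point set and the simulation as follows. For \pointwise{$\setf$}{$+$}{2} and \pointwise{$\setf$}{$\max$}{2}, place a point at $(i,j)$ for each $M_{ij} = 1$ (padding up to $p = m^2$ if $M$ is sparse; OMv is hard on dense $M$). For each OuMv pair $(u, v)$, first apply $\setf_1$ to every row $i$ with $u_i = 1$, then $\setf_0$ to every column $j$ with $v_j = 0$; a single range sum (or max) over the whole grid then reveals $u^T M v$, after which a single whole-grid $\setf_0$ resets the state. For \pointwise{$+$}{$\max$}{2}, again use only $M_{ij} = 1$ points, subtract $1$ from each row with $u_i = 0$ and each column with $v_j = 0$ so that points with $u_i = v_j = M_{ij} = 1$ retain value $0$ while all others become negative, then issue a max query and invert the $+$ updates to restore the initial state. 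For \pointwise{$\min$}{$\max$}{2}, place all $m^2$ points and apply the time-indexed $\min_{-k}$ construction of \autoref{omv-max} verbatim, using single-point $\min_{-m}$ updates in preprocessing to mark $M_{ij} = 0$ cells. Finally, \pointwise{$\max$}{$+$}{2} follows by the duality $v \mapsto -v$, which converts $\min_c$ into $\max_{-c}$ and $\sum v_i$ into $-\sum v_i$, reducing it to the \problemname{$\min$}{$+$} reduction of \autoref{omv-max}.

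In every case, the total cost under the assumed bounds would be
\[
n_u \cdot p^{1/2-\epsilon} + n_q \cdot p^{1-\epsilon} = O(m^2) \cdot O(m^{1-2\epsilon}) + O(m) \cdot O(m^{2-2\epsilon}) = O(m^{3-2\epsilon}),
\]
falsifying the \nameref{conj:OMv}. The main obstacle is the $\setf$ cases: the grid reduction of \autoref{omv-set} uses an $m \times m^2$ grid and thus $\Theta(m^3)$ points in the pointwise setting, which would give only a $\Omega(p^{1/3-o(1)})$ per-update lower bound — too weak. The fix will be to exploit the whole-grid $\setf_0$ reset to avoid the $m$-valued column trick of \autoref{omv-set}, keeping the reduction on the natural $m \times m$ grid so that $p = \Theta(m^2)$ and the desired $\Omega(p^{1/2-o(1)})$ per-update bound goes through.
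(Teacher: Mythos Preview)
Your proposal is correct and follows essentially the same approach as the paper: reduce OuMv to each problem by placing $O(m^2)$ points encoding $M$, use $O(m)$ row/column updates per OuMv round and $O(1)$ queries, for $O(m^2)$ updates and $O(m)$ queries total, yielding $O(m^{3-2\epsilon})$ time. The per-problem constructions differ only cosmetically: the paper places points only at positions with $M_{ij}=1$ in \emph{all five} cases (so no preprocessing or padding is ever needed), and for the $\setf$ cases it sets rows and columns with $u_i=0$ or $v_j=0$ to $-1$ and tests for a surviving $0$, whereas you set rows with $u_i=1$ to $1$ and columns with $v_j=0$ to $0$; for $\min,\max$ and $\max,+$ the paper's point-only-at-ones trick avoids the $O(m^2)$ single-point preprocessing you invoke, but your version is still within budget.
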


\begin{proof}
    We reduce from OuMv.
    For each $M_{ij} = 1$, create a point $(i, j)$ in our instance of the data structure. 
    
    (\pointwise{$\setf$}{$+$}{2} and \pointwise{$\setf$}{$\max$}{2})
    We maintain that at the beginning of each query, the value of each point is set to 0. 
    We perform the following for each pair of query vectors $u$ and $v$. 
    For each $u_i = 0$, set all points in row $i$ of the data structure to $-1$.
    Do the same for each column $j$ where $v_j = 0$.
    We observe that $u^{T}Mv = 1$ if and only if at least one point in the data structure has a value of 0. 
    We can do a single sum or $\max$ query to determine if this is the case.
    After performing the query, do a single update to reset all the points' values to 0 before receiving the next pair of input vectors.
    
    (\pointwise{$+$}{$\max$}{2})
    We proceed similarly to \pointwise{$\setf$}{$\max$}{2} update, adding $-1$ to all rows $i$ where $u_i = 0$ and columns $j$ where $v_j = 0$.
    In a single query we can determine if a 0 exists in the data structure.
    After performing the query, undo all the updates by adding 1 to the points in each row and column which had an update applied to them.
    (We also note that this proof is similar to that of Henzinger et al. \cite{Henzinger2015} for the hardness of Erickson's problem, which gives an $m \times m$ matrix, and asks, in each operation, to permanently increment all entries in a specified row or column, and report the maximum value in the new matrix).
    
    (\pointwise{$\max$}{$+$}{2})
    We maintain that before processing the $t$th pair of input vectors, the value of each point is at most $t-1$.
    To handle this pair, we perform an update with value $t$ to all rows $i$ where $u_i = 0$ and columns $j$ where $v_j = 0$.
    We observe that $u^{T}Mv = 1$ if and only if at least one point in the data structure has a value less than $t$.
    We can do a single sum query to determine if this is the case.
    
    (\pointwise{$\min$}{$\max$}{2})
    We maintain that at the beginning of each query, the value of each point is at least $-t+1$.
    We perform an update with value $-t$ to all rows $i$ where $u_i = 0$ and columns $j$ where $v_j = 0$.
    We observe that $u^{T}Mv = 1$ if and only if at least one point in the data structure has a value greater than $-t$.
    We can do a single $\max$ query to determine if this is the case.
    
    In all cases, our reduction uses $O(m)$ updates and $O(1)$ queries to our data structure for each OuMv query, and thus $O(m^2)$ updates and $O(m)$ queries overall.
    Finally, noticing that we create $O(m^2)$ points overall gives the desired conditional lower bounds.
\end{proof}

\section{\texorpdfstring{\problemname{$+$}{$\max$}}{2D Grid Range (max, +)} in \texorpdfstring{$\TO(n^{3/2})$}{~O(n\^{}\{3/2)\}} time, and \texorpdfstring{$\TO(n)$}{~O(n)} space}

\label{sec:succinctplusmax}

In this section, we provide a fully-online algorithm for \problemname{$+$}{$\max$} that uses $O(n\log^2 n)$ space.
While our complexity is slower than existing results by Chan \cite{Chan2008} by a polylogarithmic factor, those results require $O(n^{3/2+o(1)})$ space.

We first prove some useful results which are instrumental in the construction of our algorithm.

\begin{lemma}
    \label{lem:1d-f-with-oracle}
    Suppose there is an integer array $A$ of length $n$, containing some initial arbitrary values, and an oracle that supports range maximum queries over the initial $A$ in $O(t)$ time per query.
    There is a data structure which, with access to such an oracle, and with no preprocessing, supports online $+$ and $\max$ updates to $A$, and $\max$ queries over $A$, in worst case $O(t \log n)$ time.
\end{lemma}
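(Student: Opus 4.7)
The plan is a lazy-propagation segment tree with on-demand node creation. The update and query operations are both captured by the closed family $\bar U = \{f_{a,b} : f_{a,b}(x) = \max(x+a, b)\}$: namely, $+_c = f_{c, -\infty}$ and $\max_c = f_{0, c}$. A direct calculation gives $f_{a_2,b_2} \circ f_{a_1,b_1} = f_{a_1+a_2,\, \max(b_1+a_2,\, b_2)}$, so $\bar U$ is closed under composition, and one checks that each $f_{a,b}$ distributes over $\max$. Hence the standard lazy-propagation machinery from \autoref{lazyprop} applies, and every pending operation we ever need to store at a node is representable in $O(1)$ words.

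Conceptually, build a segment tree over $[1, n]$, but do not materialise any nodes up front; create each node $v$ the first time it is visited. At every existing node $v$ we maintain a pair $(M_v, L_v)$, where $M_v$ is the current maximum of $A$ over the range $R_v$ and $L_v \in \bar U$ is a pending operation already reflected in $M_v$ and in every proper ancestor of $v$, but not yet pushed into $v$'s children. A range update or query descends from the root, decomposes $[l,r]$ into the usual $O(\log n)$ canonical segments, and whenever the descent passes through a node $u$ it first pushes $L_u$ into both children (applying $L_u$ to each child's $M$, composing it into each child's $L$, and resetting $L_u := \mathrm{id}$) before recursing. On the way back up after an update, we recompute $M_u := \max(M_{c_1}, M_{c_2})$.

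The new ingredient is \emph{lazy creation}. Whenever we need to push $L_u$ into a child $c$ of $u$ that has not yet been instantiated, we first create $c$ by querying the oracle for the initial maximum of $R_c$ (one call, $O(t)$ time), setting $M_c$ to that value and $L_c := \mathrm{id}$, and then pushing $L_u$ into $c$ in the standard way. The correctness of this step is the one subtle point and will be the main obstacle: one must argue that at the moment $c$ is created, the \emph{entire} effect on $R_c$ of every past update is captured exactly by $L_u$. This follows because (i) no past update ever decomposed at $c$ or any descendant of $c$ (they did not yet exist), so every past update that touched $R_c$ decomposed at some ancestor of $c$, and (ii) the descent that led us to $u$ has already pushed each strict ancestor's lazy one level further, so by induction along the path all such pending operations currently sit inside $L_u$. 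Thus the oracle's initial value for $R_c$, once $L_u$ is applied to it, gives precisely the current maximum of $R_c$ in $A$, re-establishing the invariant at $c$.

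For the complexity, each update or query touches $O(\log n)$ nodes on its descent; at each level we push the current node's lazy into its (at most two) children, instantiating them if necessary. Each such instantiation costs $O(t)$ for the oracle call plus $O(1)$ for composing and applying functions in $\bar U$; all other bookkeeping per visited node (composing two elements of $\bar U$, applying one to an $M$-value, and recomputing $M_u$ from its children's $M$-values) is $O(1)$. Hence every operation runs in worst-case $O(t \log n)$ time, as required.
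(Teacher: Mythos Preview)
Your proof is correct and follows essentially the same approach as the paper: both use lazy propagation over a segment tree with the closed family $f_{a,b}(x)=\max(x+a,b)$, creating nodes on demand via oracle calls. Your write-up is considerably more detailed than the paper's sketch (in particular, your explicit justification that at the moment a child $c$ is first instantiated all prior effects on $R_c$ are captured by $L_u$ is a nice touch the paper omits), but the underlying construction is the same.
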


\begin{proof}
    Define the class of functions $f_{a, b}(x) = \max(x + a, b)$, which are each monotonically increasing in $x$, and are closed under functional composition.
    As both types of updates can be written in this form, it suffices to only consider updates of this form.
    We use lazy propagation over a segment tree over the ranges of $A$, and at each node, store constants $a$ and $b$, representing $f_{a, b}$ which is the sequential composition of a consecutive range of updates that apply to the range represented by this node.
    At the ``deepest'' nodes we have created, we make calls to the oracle to determine the maximum value in the range, and use this to populate our internal segment tree values.
\end{proof}

Recall that in {\sc Static (Grid) Range} problems, all updates occur before all queries.

\begin{lemma}
    \label{lem:2d-offline-updates-before-queries}
    There exists a data structure which solves {\sc Static} \problemname{$+$}{$\max$} using $O(\numupdates \log^2 \numupdates)$ time in preprocessing and $O(\log^2 \numupdates)$ time per query and $O(\numupdates \log^2 \numupdates)$ space.
\end{lemma}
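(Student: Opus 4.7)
The plan is to build a two-level data structure: an outer segment tree $T_r$ on the $O(\numupdates)$ distinct row coordinates, and for each outer node $N_r$ with row range $[l, r]$ an inner 1D data structure $\mathcal{F}_{N_r}$ that represents the projection $F_{N_r}(x) = \max_{y \in [l, r]} V(y, x)$, where $V(y, x)$ denotes the value at $(y, x)$ after all $+$ updates, and that supports range-max queries on any column interval in $O(\log \numupdates)$ time. A 2D range-max query $[y_1, y_2] \times [x_1, x_2]$ is then answered by decomposing $[y_1, y_2]$ into the $O(\log \numupdates)$ base nodes of $T_r$ and querying each corresponding $\mathcal{F}_{N_r}$ on $[x_1, x_2]$, returning the maximum of these values for a total of $O(\log^2 \numupdates)$ per query.

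For preprocessing, I would exploit the recurrence $F_{N_r} = \max(F_{N_r^L}, F_{N_r^R})$ at internal nodes and, at leaves, $F_{\mathrm{leaf}(y)}(x) = V(y, x) = \sum_{N \in P_r(y)} g_N(x)$, where $g_N(x) = \sum_{i : N \in \mathrm{base}([l_1^i, r_1^i]),\, x \in [l_2^i, r_2^i]} c_i$ is the canonical 1D column contribution at $T_r$-node $N$ and $P_r(y)$ is the root-to-leaf path. Each rectangle is stored at $O(\log \numupdates)$ outer nodes via the base decomposition of its row range, and each such storage corresponds to an $O(\log \numupdates)$-node persistent range-add on an inner segment tree over columns. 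I would represent each $\mathcal{F}_{N_r}$ as a persistent segment tree in the $f_{a, b}(x) = \max(x + a, b)$ algebra of \autoref{lem:1d-f-with-oracle}, so that both the $+$-style range-adds accumulated down $T_r$'s ancestry and the $\max$-style pointwise combinations at internal nodes are handled uniformly by a single lazy-tag family, giving $O(\numupdates \log^2 \numupdates)$ total space and preprocessing time.

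The main obstacle is the pointwise-max step $F_{N_r} = \max(F_{N_r^L}, F_{N_r^R})$, which (unlike range-add) does not a priori admit succinct persistent sharing between sibling inner trees whose values differ widely. I expect to resolve this by exploiting the closure of the $\{f_{a, b}\}$ algebra under composition: a merge of two sibling inner trees only allocates new persistent nodes along the column tree paths where their subtree maxima disagree, and a charging argument shows that each such path can be charged to one of the $O(\log \numupdates)$ rectangle-outer-node canonical insertions responsible for the disagreement. Summed across all $O(\numupdates \log \numupdates)$ such insertions, this yields the claimed $O(\numupdates \log^2 \numupdates)$ bounds on space and preprocessing time, while each query on the finished structure runs in $O(\log^2 \numupdates)$.
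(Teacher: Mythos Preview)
Your outer/inner segment-tree decomposition and the query mechanism are fine and match the paper's structure. The gap is exactly where you flag it: the pointwise-max merge $F_{N_r} = \max(F_{N_r^L}, F_{N_r^R})$. The $f_{a,b}$ algebra is closed under \emph{composition}, which is what lets lazy propagation push a tag through a subtree; it says nothing about taking the pointwise maximum of two \emph{different} arrays represented by two different inner trees. Your proposed charging argument is not an algorithm: you assert that new persistent nodes are created only where ``subtree maxima disagree'' and that each disagreement can be charged to a canonical rectangle insertion, but neither claim is substantiated. Two sibling projections $F_{N_r^L}$ and $F_{N_r^R}$ can differ at every column, and a single rectangle can be ``responsible'' for disagreements across an entire column interval, so the charging does not obviously sum to $O(\numupdates \log^2 \numupdates)$. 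Without a concrete merge procedure and a proof of its cost, this step is a hope, not a bound.

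The paper avoids the merge entirely. It puts the outer tree on $x$-coordinates and runs a sweepline over $x$: each $+$-rectangle becomes an add/subtract event pair on a 1D array indexed by $y$. For an outer node $[x_1,x_2]$, the inner array $B[y] = \max_{x\in[x_1,x_2]} A[x][y]$ is exactly the \emph{historical maximum} of the sweepline over that $x$-interval, starting from its state at $x_1-1$. This is computed directly by replaying only the events inside $[x_1,x_2]$ using Ji's historic-max trick (each $+_c$ on the sweepline induces the update $D \mapsto \max(D-c,0)$ on the running gap $D$), with \autoref{lem:1d-f-with-oracle} supplying range maxima of the starting state via the live sweepline as oracle. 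Since each event lies in $O(\log \numupdates)$ outer nodes and each replay step costs $O(\log \numupdates)$, the total is $O(\numupdates \log^2 \numupdates)$ time and space, with no sibling merge required.
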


\begin{proof}
    By applying coordinate compression, it suffices to consider only $O(\numupdates)$ interesting $x$ and $y$ coordinates.

    We will build a kind of 2D segment tree, which is a segment tree over $x$ coordinates.
    Suppose there is a node in the tree representing the range $[x_1, x_2]$.
    For this node, we will build a data structure capable of answering queries of the form: given a range of $y$ coordinates $[y_1, y_2]$, what is the maximum value in $A[x_1, x_2][y_1, y_2]$ (after all updates have been applied)?

    We will build these structures simultaneously, maintaining a sweepline over $x$, representing a column of $A$.
    Break down each range $+$ update into two events: one which adds and another which subtracts a value from a range of $y$ coordinates in the sweepline.
    Using \autoref{lem:1d-f-with-oracle}, the supporting of updates in the sweepline can be done in $O(\log \numupdates)$ time, and range max queries at any point in the sweepline can be supported in $O(\log \numupdates)$ time: the range max value of the initial array is always 0, so the oracle can always return in $O(1)$ time.

    We construct the data structure for the range $[x_1, x_2]$ when the sweepline has processed all events ending at $x_1 - 1$, but before processing any events starting at $x_1$.
    Our data structure will represent an array $B$ such that $B[y] = \max(A[x_1, x_2][y, y])$, and support range queries over $B$.
    Equivalently, we can define $B$ by starting with an array equal to the value of the sweepline after $x = x_1 - 1$, applying all the events in the $[x_1, x_2]$ range, in order, and have $B[y]$ be the historical maximum value of the sweepline in row $y$, at any point in this range.
    This can be computed with a sweepline over the events in the range, and using \autoref{lem:1d-f-with-oracle} together the idea of \cite{Ji2016}, where the oracle is the sweepline at this point in time.
    If there are $\numupdates'$ events in $[x_1, x_2]$, this takes $O(\numupdates' \log \numupdates' \log \numupdates)$ time.
    Each event appears in $O(\log \numupdates)$ ranges $[x_1, x_2]$, so these data structures take $O(\numupdates \log^2 \numupdates)$ time to construct, overall.

    Queries are answered in the usual 2D segment tree manner, which takes $O(\log \numupdates)$ time, for $O(\log^2 \numupdates)$ time overall.
\end{proof}

We are now ready to prove our result.

\oldincmax*

\begin{proof}
    Let $k$ be a positive integer, and split the updates and queries into $n/k$ batches chronologically, such that each contains at most $k$ operations.
    At the beginning of each batch, we build a data structure $D$ according to \autoref{lem:2d-offline-updates-before-queries} from all the updates appearing in previous batches.

    As in \autoref{subsec:sketch}, we split the grid into compressed-columns, based on interesting coordinates in the input so far.
    For each compressed-column $F$, spanning $[x_1, x_2]$, we maintain a data structure which, throughout the operations in this batch, supports queries for the maximum value of a point in $F$, among a given range of rows.
    Such a data structure can be constructed by applying \autoref{lem:1d-f-with-oracle} over the rows of $F$, using $D$ as the oracle.
    When we receive new coordinates, we split compressed-columns in the same way as in \autoref{subsec:formal}, reapplying updates.
    Each such query takes $\TO(1)$ time, so together, queries take $\TO(k^2)$ time over the batch.

    Factoring in the construction time of $D$, the overall time complexity of this algorithm is $\TO((n/k)(n + k^2))$.
    Setting $k = \sqrt{n}$ gives the desired complexity.
\end{proof}

\section{Open Problems}

\label{sec:open}

We have shown that \problemname{$\{\min, \setf, \max\}$}{$\max$} and \problemname{$+$}{$\max$} can both be solved in truly subquadratic time, and found $\Omega(n^{2-o(1)})$ time conditional lower bounds for \problemname{$\{+, \min\}$}{$\max$}.
We also observe that \problemname{$\{+, \setf\}$}{$\max$} reduces to \problemname{$\{+, \max\}$}{$\max$}, since $\setf_c = \max_c \circ +_{-\infty}$.
Hence, the remaining maximum query variants in \probs{} are each at least as hard as \problemname{$\{+, \setf\}$}{$\max$}.

\begin{openproblem}
    Can {\sc (Offline)} \problemname{$\{+, \setf\}$}{$\max$} be solved in truly subquadratic time?
\end{openproblem} 

Among variants supporting sum queries, we gave $\Omega(n^{2-o(1)})$ time conditional lower bounds for \problemname{$\{+, \max\}$}{$+$}.
Using the identity $\setf_c = \max_c \circ +_{-\infty}$ once again, one can see that this is at least as hard as \problemname{$\{+, \setf\}$}{$+$}, which we solved in $O(n^{1.989})$ time, using \autoref{general-static-trellis}.
Another problem easier than \problemname{$\{+, \max\}$}{$+$} is simply \problemname{$\max$}{$+$}, which does not support $+$ updates.
This is also the easiest among the remaining sum query variants in \probs.

We make several comments regarding the hardness of \problemname{$\max$}{$+$}.
First, we observe that it is also at least as hard as its $\setf$ ``counterpart'', since any instance of {\sc Grid \range{$\setf$}{$+$}} can be simulated with two instances of {\sc Grid \range{$\max$}{$+$}}.

\begin{lemma}
    {\sc Grid \range{$\setf$}{$+$}} can be solved in the same time as {\sc Grid \range{$\max$}{$+$}}.
\end{lemma}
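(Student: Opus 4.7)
The plan is to simulate one {\sc Grid \range{$\setf$}{$+$}} instance by maintaining two parallel {\sc Grid \range{$\max$}{$+$}} instances, call them $A$ and $B$, together with a time-value encoding that recovers each cell's $\setf$ value by a single subtraction. Choose a constant $M$ exceeding twice the maximum absolute value of any input quantity; since all coordinates and values fit in $O(1)$ words and $t \leq n$, both $M$ and every quantity of the form $tM + c$ likewise fit in $O(1)$ words on the assumed Word RAM.

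For the $t$-th operation in the original instance (indexing from $t = 1$): if it is an update $\setf_c$ on range $R$, apply a $\max_{tM + c}$ update on $R$ in $A$ and a $\max_{t}$ update on $R$ in $B$. If it is a sum query on range $R'$, return $\mathrm{SumQuery}(A, R') - M \cdot \mathrm{SumQuery}(B, R')$. Each original operation translates to $O(1)$ operations across the two instances, so an algorithm solving {\sc Grid \range{$\max$}{$+$}} in time $T(n)$ yields an $O(T(n))$-time algorithm for {\sc Grid \range{$\setf$}{$+$}}; this preserves the fully-online setting as well.

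For correctness, fix any cell $i$ after a prefix of operations, and let $t^\ast_i$ be the latest time at which a $\setf$ update's range contained $i$, with corresponding value $c^\ast_i$ (with both set to $0$ if no such update has occurred). Because $M > 2|c|$ for every input value $c$, whenever $t' < t$ we have $tM + c_t > t'M + c_{t'}$, so the latest $\setf$ update strictly dominates under the $\max$ accumulated in $A$. Thus $A[i] = t^\ast_i M + c^\ast_i$ and $B[i] = t^\ast_i$, giving $A[i] - M \cdot B[i] = c^\ast_i$, which is precisely the true cell value in the original instance. By linearity of summation, the difference of sum queries returns the correct answer. The only subtlety is verifying that $M$ with the claimed bound fits in a constant number of words, which follows immediately from the standing model assumptions that $s$ and all input magnitudes are polynomially bounded and words have $\Omega(\log n)$ bits.
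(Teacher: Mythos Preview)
Your proof is correct and matches the paper's approach exactly: the paper only states, as a one-line hint preceding the lemma, that ``any instance of {\sc Grid \range{$\setf$}{$+$}} can be simulated with two instances of {\sc Grid \range{$\max$}{$+$}},'' and gives no further details. Your time-encoding scheme (storing $\max(tM+c)$ in one instance and $\max t$ in the other, then subtracting) is precisely the natural way to realise that simulation, and your handling of the word-size and online-setting subtleties is sound.
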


We solved \problemname{$\setf$}{$+$} in $O(n^{1.954})$ time using \autoref{general-static-trellis}.
However, it is easy to see that there is no solution to {\sc 1D Partitioned Range $(+, \max)$}, which is required as a precondition of \autoref{general-static-trellis}: it is simply not enough to know the sum of a range of points.
One might instead determine for each overlay region $O$, query range $R$ and $\max_c$ update: the number of points in $O \cap R$ with value at most $c$ and the sum of points in $O \cap R$ with value greater than $c$.
This requires $O(k^3)$ values returned per batch, limiting precomputation to $O(n^{3/2-\epsilon})$ time, for some $\epsilon > 0$, if a truly subquadratic time algorithm overall is desired.
This cannot be achieved with a direct application of \autoref{general-static-trellis}, since there are $O(n^{3/2})$ updates across the t-regions. 

\begin{openproblem}
    Can {\sc (Offline)} \problemname{$\max$}{$+$} be solved in truly subquadratic time?
\end{openproblem}

Finally, our $\Omega(n^{3/2-o(1)})$ conditional lower bounds do not match the upper bounds we gave for \problemname{$\{\setf, +\}$}{$+$}, \problemname{$\setf$}{$+$} and \problemname{$\{\min, \setf, \max\}$}{$\max$}.
We ask if the gap can be closed for these problems to see if there exists an in-between complexity class of {\sc 2D Grid Range} problems.
In particular, this would be resolved in the affirmative if \autoref{general-static-trellis} is a tight reduction for any of these problems.

\begin{openproblem}
    Are there any {\sc 2D Grid Range} problems solvable in $O(n^{2-\epsilon})$ time, for some $\epsilon > 0$, but require $\Omega(n^{3/2-o(1)})$ time?
\end{openproblem}

We have studied just a small subset of {\sc Range} and {\sc Grid Range} problems in this work.
Additional update or query operations, such as addition modulo a prime, can also be considered.
Many existing variants of range searching (see \cite{Agarwal2004}) can also be adapted to these problem classes.
In particular, we have not investigated problems which deal with data points that have a ``colour'' or ``category'', and ask for the number of distinct colours in a range.
These may be of particular interest, as $\setf$ updates could be used to facilitate changing the colour of several data points at the same time.

\bibliography{references}

\end{document}